\newtheorem{theorem}{Theorem}[section]
\newtheorem{corollary}[theorem]{Corollary}
\newtheorem{lemma}[theorem]{Lemma}
\newtheorem{proposition}[theorem]{Proposition}
\theoremstyle{definition}
\newtheorem{definition}[theorem]{Definition}
\newtheorem{remark}[theorem]{Remark}
\newtheorem*{condition}{Condition}
\newtheorem {example}{Example}
\numberwithin{equation}{section}
\numberwithin{equation}{section}
\numberwithin{equation}{section}
\newcommand{\wA}{\widetilde{A}}
\newcommand{\wX}{\widetilde{X}}
\newcommand{\wY}{\widetilde{Y}}
\newcommand{\Ss}{\mathcal{S}}
\newcommand{\Hh}{\mathcal{H}}
\newcommand{\PP}{\mathbb{P}}
\newcommand{\HH}{\mathbb{H}}
\newcommand{\N}{\mathbb{N}}
\newcommand{\R}{\mathbb{R}}
\newcommand{\Ff}{\mathcal{F}}
\newcommand{\B}{\mathcal{B}}
\newcommand{\cBp}{\mathcal{B}^{\psi}}
\newcommand{\Ll}{\mathcal{L}}
\newcommand{\rrho}{\overline{\rho}}
\newcommand{\cB}{\ensuremath{\mathcal B}}
\newcommand{\cE}{\mathcal{E}}
\newcommand{\cA}{\mathcal{A}}
\newcommand{\dE}{d^{(\cE)}}
\newcommand{\bF}{\bf F}
\newcommand{\bA}{\bf A}
\newcommand{\bPi}{\bf \Pi}
\newcommand{\DcE}{D(\cE)}
\newcommand{\DocE}{D_0(\cE)}
\newcommand{\DcEl}{D(\cE)_{loc}}
\newcommand{\essup}{\textrm{ess}\ \textrm{sup}}
\newcommand{\esinf}{\textrm{ess}\ \textrm{inf}}
\newcommand{\I}{\mathcal{I}}
\newcommand{\hIp}{\widehat{\mathcal{I}}_{\psi}}
\newcommand{\hx}{x}
\newcommand{\hy}{y}
\newcommand{\hX}{X}
\newcommand{\hY}{Y}
\newcommand{\hz}{z}
\newcommand{\hZ}{Z}
\newcommand{\supp}{\textrm{supp}}
\tikzstyle{line} = [draw, -latex']
\tikzstyle{decision} = [ellipse, draw,  
\tikzstyle{block} = [rectangle, draw,
\tikzstyle{wideblock} = [draw, block, node distance=1cm,
\tikzstyle{diamond} = [draw, diamond,fill=red!20, node distance=1cm,
\tikzstyle{cloud} = [draw, ellipse, node distance=1cm,
\tikzstyle{together} = [rectangle,
\begin{document}
\title[
Functional Analytic (Ir-)Regularity Properties of SABR-type Processes]{
Functional Analytic (Ir-)Regularity Properties of SABR-type Processes
}

\author{Leif D\"{o}ring}
\address{Department of Mathematics,  University of Mannheim}
\email{doering@uni-mannheim.de}

\author{Blanka Horvath}
\address{Department of Mathematics, Imperial College London}
\email{b.horvath@imperial.ac.uk}

\author{Josef Teichmann}
\address{Department of Mathematics, ETH Z\"urich}
\email{josef.teichmann@math.ethz.ch}
\date{March 6, 2016}

\thanks{BH acknowledges financial support from the SNF Early Postdoc Mobility Grant 165248.
}
 
 \keywords{SABR model, time change, asymptotics, semigroups, Feller property, Dirichlet forms}
 \subjclass[2010]{60H30, 58J65, 60J55}
  
\maketitle

\bigskip
 
\begin{abstract}
The SABR model is a benchmark stochastic volatility model in interest rate markets, 
which has received much attention in the past decade. 
Its popularity arose from a tractable asymptotic expansion for implied volatility, derived by heat kernel methods.
As markets moved to historically low rates, this expansion appeared to yield inconsistent prices. Since the model is deeply embedded in market practice, alternative pricing methods for SABR have been addressed in numerous approaches in recent years.
All standard option pricing methods make certain regularity assumptions on the underlying model, but for SABR these are rarely satisfied. 
We examine here regularity properties of the model from this perspective
 with view to a number of (asymptotic and numerical) option pricing methods. In particular, we highlight 
delicate degeneracies of the SABR model (and related processes) at the origin, which deem the currently used popular heat kernel methods and all related methods from (sub-)
~Riemannian geometry ill-suited for SABR-type processes, when interest rates are near zero.
We describe a more general semigroup framework, which permits to derive a suitable geometry for SABR-type processes (in certain parameter regimes)  via symmetric Dirichlet forms. 
Furthermore, we derive regularity properties (Feller- properties and strong continuity properties) necessary for the applicability of popular numerical schemes to SABR-semigroups, and identify suitable Banach- and Hilbert spaces for these. 
Finally, we comment on the short time and large time asymptotic behaviour of SABR-type processes beyond the heat-kernel framework.
\end{abstract}

\bigskip
\tableofcontents

\section{Introduction}
The stochastic alpha, beta, rho (SABR) model is defined by the 
following system of stochastic differential equations
\begin{align}\label{eq:SABRSDE}
dX_t=Y_t X_t^{\beta}dW_t,\qquad
dY_t= \alpha Y_t dZ_t, \qquad
d\langle Z,W\rangle_t=\rho dt,
\end{align}
where $X_0=x, Y_0=y, t\geq 0$ and where $W$ and $Z$ are  $\rho$-correlated Brownian motions on a filtered probability space
$(\Omega, \Ff, (\Ff_t)_{t\geq 0}, \PP)$
with parameters $\alpha \geq 0$, $\beta \in[0,1]$ and $\rho \in [-1,1]$ and state space $D:= \R_{\geq 0}\times \R_{>0}$.
It describes the dynamics of a forward rate $X$ with CEV-type (see \cite[Section 6.4]{jeanblanc2009mathematical}) dynamics and a stochastic volatility $Y$.  
The SABR model was introduced by Hagan et. al in \cite{ManagingSmileRisk,HLW} in the early 2000's and is today benchmark in interest rate markets \cite{AntonovFree,Antonov,BallandTran}. 
During its decade and a half of existence, the SABR model has changed the daily routine of interest rate modelling due to a particularly tractable formula (the so-called SABR formula) for the implied volatility. This has led to the formula rather than the model itself becoming an industry standard. Today it is known \cite{Obloj} that the celebrated SABR formula is prone to yield inconsistent prices around zero interest rates, but the model itself can produce consistent prices when appropriate pricing methods are used. In the historically low interest-rate environments of the past years this has become increasingly relevant and spurred research both among practitioners and academics. From an academic perspective, the concrete but delicate examples of SABR-type models exhibit several advantageous but also certain challenging properties. We review here some regularity properties of this family of processes with a view to different option pricing approaches.\\

As a starting point we study the applications of a time change that partially decouples the equations and allows to relate SABR to a CEV model running on a stochastic clock.
The first time-change applied to SABR appeared in the original work \cite{ManagingSmileRisk} as a rescaling. There the volatility of volatility, as a scaling parameter of time, was key---via singular perturbation---to derive the tractable asymptotic formula which made the model popular. 
Later on in \cite{Antonov, Islah}, a further time change was proposed for the SABR model, where not only the volatility of volatility, but the entire volatility path is absorbed into the (now random) rescaling of time.
Time change arguments of this type are standard and appear in different contexts in the finance literature, for example in \cite{Veraart} and \cite{KallsenShiryaev} in a general setup and in \cite{Antonov, ChenOsterlee, FordeLargeMaturity,Hobson} in the context of the SABR model. The geometric viewpoint on the SABR model was put forward in \cite{HLW,Labordere} and in the monograph \cite{LabordereBook}, drawing the link between a \emph{normal} ($\beta=0$) SABR model and Brownian motion on a hyperbolic plane. It provides a technique to re-derive the original singular perturbation expansion of Hagan et al.
via heat kernel techiques. These are applied to Brownian motion on a suitably chosen manifold (the SABR plane, cf. \cite{HLW}) combined with a simpler---regular---perturbation. 
This geometric insight further promoted the integration of tools from stochastic analysis on manifolds \cite{Elworthy,Hsu} into the context of mathematical finance. It also initiated refinements of the initial Hagan expansion, which appeared to lack accuracy in the vicinity of the origin, by refining the leading order \cite{BBF,Obloj} and providing a second order term \cite{Paulot}.
Although the viewpoint on the state space of the SABR process as a Riemannian manifold appears to be well suited to characterize
the absolutely continuous part of the distribution when $X\in (0,\infty)$, it potentially breaks down at the origin (for values $\beta \in (0,1)$ of the CEV parameter).
As a result, Riemannian heat kernel expansion techniques applied to SABR can yield erroneous values in the vicinity of $X=0$ already in leading order.
Although several celebrated results for asymptotic expansions beyond the Riemannian (elliptic) setup are available---see for example  \cite{BenArous1,BenArous2,DFJV1,DFJV2,DeMarcoFriz} for hypoelliptic diffusions---these however are not applicable to SABR, due to the its delicate degeneracy.
The literature on short-time asymptotics for degenerate diffusions beyond the hypoelliptic setup is scarce. In certain cases, the  machinery of symmetric Dirichlet forms \cite{SturmLocal,SturmGeometry, terElstDegenerateElliptic} is suitable for such short-time asymptotic expansions. Motivated by this, we establish here Dirichlet forms for SABR and related processes for a suitable subset of parameter configurations.\\

The paper is organized as follows: We review the time change in \cite{Antonov,Islah}
and suggest a perspective on it as a transformation of the underlying geometry of the state space of the processes.
As applications we prove regularity properties and point out certain irregularities of the SABR model and related processes (Sections 2 and 3). Proofs can be found in Appendix \ref{Sec:Proofs}.\\ 

As a first application we propose (in Section 2) a simple method to prove the Feller-Dynkin property of the SABR process. Feller(-Dynkin) processes and Feller semigroups have a rich theory with an interest of its own, see for instance \cite{BoettcherSchillingWang,vanCasteren, vanCasterenDemuth}. 
However, our main interests in Feller-Dynkin properties are motivated from a numerical point of view:
In \cite{Boettcher2} Monte Carlo methods are proposed under the Feller-Dynkin assumption. 
Furthermore, general convergence theory as developed in Hansen, Ostermann \cite{HansenOsternann} provides a framework, where splitting schemes with optimal convergence orders can be constructed. The applicability of their results to the SABR model requires the considered semigroup to be a strongly continuous contraction semigroup on a suitable Banach space.  Feller properties are needed to show strong continuity in these Banach spaces.
\\
Although the Feller-Dynkin property is well studied in different contexts, available standard results are not applicable to SABR: For example the assumptions of \cite[Theorems 21.11 and 23.16]{kallenberg2002foundations}, \cite[Theorem 4.1]{BoettcherSchillingWang} and \cite[Theorem 1]{MijatovicPistorius} are violated\footnote{  
The Brownian motions in \eqref{eq:SABRSDE} are not independent, the function in the multiplicative perturbation resulting from the time change is unbounded and so are the coefficients of the SDE \eqref{eq:SABRSDE}.
}.
Further methods to derive Feller properties under rather general conditions were proposed in \cite{Boettcher1}. These conditions however are also violated by SABR.
Therefore, we propose an approach to derive Feller-Dynkin properties, which applies to SABR-type processes and to a wider class of stochastic models.
Furthermore, since from a financial perspective it is desirable to consider payoff functions with unbounded growth, we also derive so-called \emph{generalized Feller properties} (cf. \cite{SemigroupPOVSplittingSchemes, RoecknerSobol}) for the considered processes.\\
Having determined Banach spaces on which the semigroups of SABR-type processes are strongly continuous (Section 2.1), we turn to the regularity of these semigroups on Hilbert spaces (Section 2.2).
We construct weighted $\Ll^2$ -spaces, on which we
can associate a strongly continuous symmetric semigroup to these processes. Furthermore, we determine all classes
of parameter configurations, for which the symmetry property of the resulting Dirichlet forms is not violated.
In particular, we highlight that apart from the special cases of parameter combinations where either $\beta=0$, $\beta=1$, $\rho=0$ or $\alpha=0$,  SABR-Dirichlet forms are generically not symmetric on any weighted $\Ll^2$ -space.\\

As a further application of the time change we characterize (in Section 3) the asymptotic behavior of the SABR process and of some related processes for  short- and large-time horizons. To study the large time behavior of the SABR process (Section \ref{Sec:AsymptoticsLargetime}) we write the SABR process as a time change of a simpler process---the CEV process---for which the asymptotic behaviour is well known. If the time change does not level off, then the CEV process reaches its $t\rightarrow \infty$ limit and hence the SABR process hits zero. Otherwise, the SABR process has a non-trivial limit behavior since it is the position of the CEV process at a finite random time.
In \cite{Hobson} similar results are derived for special cases. 
We pursue this line of argumentation for general values of the parameter $\beta$ in the uncorrelated SABR model and for Brownian motion on the SABR plane. The perspective of the time change as a transformation of the underlying geometry allows us to relate the large time behavior of the latter to correlated Euclidean Brownian motions.\\ 
 We emphasize the scope of applications of Dirichlet forms by commenting (Section \ref{Sec:AsymptoticsShorttime}) on the short time behavior of a diffusion closely related to the CEV process, for which Varadhan-type asymptotics fail in the parameter regime $\beta\in [1/2,1)$, see \cite{TerElstDiffusion}. 
 The geometry induced (cf. \cite{SturmGeometricAspect,SturmDiDetMetr}) by the respective Dirichlet forms corresponding to the processes  seems to better reflect the behavior of this process near the singularities than classical Riemannian geometry. 
While in Riemannian geometry the distance between two points is determined---via the Eikonal equation---by the  gradient along a (length-minimizing) geodesic curve connecting them, in the Dirichlet geometry this gradient is replaced by an appropriate energy measure. Finally, we conclude by calculating the energy measures for the SABR and CEV Dirichlet forms and observing that the time-change transforms the ``Dirichlet geometry'' analogously to the Riemannian case.

\subsubsection*{The time change}
A crucial observation for the time change arguments we study here is that the generator of the SABR model \eqref{eq:SABRSDE} factorizes as
\begin{align}\label{eq:GeneratorTimeChangedSABR}
Af(x,y)
&=\frac{1}{2}y^2 (x^{2\beta} \partial_{x,x}^2+2 \rho x^{\beta} \partial_{x,y}^2+\partial_{y,y}^2)f(x,y)=y^2\wA f(x,y),\qquad f\in C_c^{\infty}(D)
\end{align}
for an operator $\wA$, where $y^2$ acts as a multiplicative perturbation. 
The martingale problem for the operator $\wA$ (we recall it in the Appendix, \eqref{eq:SABRMartingaleProblem}) is solved by the law of a process 
with dynamics 
\begin{align}\label{eq:SABRdynamicsTimechanged}
d\widetilde X_t=\widetilde{X}_t^{\beta}dW_t, \qquad d\widetilde Y_t= dZ_t, \qquad
d\langle Z,W\rangle_t=\rho dt,
\end{align}
where $\widetilde X_0= \widetilde x$, $\widetilde Y_0=\widetilde y$, $t\geq 0$.
The multiplicative perturbation \eqref{eq:GeneratorTimeChangedSABR} of generators emphasizes (cf.\cite[Theorem 4.1]{BoettcherSchillingWang} and \cite[Theorem VI.3.7]{Bass}) the following well-known relationship between the processes
\eqref{eq:SABRSDE} and \eqref{eq:SABRdynamicsTimechanged}:
\begin{theorem}[Random time change for SABR]\label{Th:TimechangedSABRVolkonskii}
Let the law of the process $(\wX_t,\wY_t)_{t\geq 0}$ be a solution of \eqref{eq:SABRdynamicsTimechanged}.
Then the following processes coincide in law with the SABR process $(X_t, Y_t)_{t\geq 0}$: 
\begin{align}\label{eq:TimechangedSABR}
        \wX_{ \int_0^t (Y_s)^2 ds }= \wX_{\tau^{-1}_t},\qquad
        \wY_{\int_0^t (Y_s)^2 ds }=\wY_{\tau^{-1}_t},
\qquad t \geq 0,
 \end{align}
where the time change is defined as
\begin{align}\label{eq:Thetimechange}
\tau_t:=\int_0^t  \widetilde Y_s^{-2}\,ds \
\end{align}
up to the hitting time $t<T_0^{\widetilde Y}=\inf\{t:\widetilde Y_t=0\}$.
In particular, the law of \eqref{eq:TimechangedSABR} is a solution of the martingale problem \eqref{eq:SABRMartingaleProblem} for $A$ in \eqref{eq:GeneratorTimeChangedSABR}. 
\end{theorem}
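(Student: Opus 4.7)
The plan is to verify that the time-changed pair $(X_t, Y_t) := (\widetilde X_{\sigma_t}, \widetilde Y_{\sigma_t})$, where $\sigma_t := \tau_t^{-1}$, solves the martingale problem for the SABR generator $A$ from \eqref{eq:GeneratorTimeChangedSABR}; combined with well-posedness of that martingale problem this identifies the law. The approach leverages precisely the multiplicative factorization $A = y^2 \widetilde A$, so the $y^2$ factor is absorbed into a change of clock.

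First I would record that since $(\widetilde X, \widetilde Y)$ solves \eqref{eq:SABRdynamicsTimechanged}, an application of It\^o's formula shows that for every $f \in C_c^\infty(D)$,
\begin{equation*}
M_t^f := f(\widetilde X_t, \widetilde Y_t) - f(\widetilde x, \widetilde y) - \int_0^t \widetilde A f(\widetilde X_s, \widetilde Y_s)\,ds
\end{equation*}
is a local martingale in the natural filtration $(\widetilde{\mathcal F}_t)$. Next, note that on $[0, T_0^{\widetilde Y})$ the integrand $\widetilde Y_s^{-2}$ is finite and continuous, so $t \mapsto \tau_t$ is a strictly increasing, continuous, $(\widetilde{\mathcal F}_t)$-adapted process; its inverse $\sigma_t$ is therefore a continuous family of $(\widetilde{\mathcal F}_t)$-stopping times with $\sigma_0 = 0$ and $\sigma_t' = \widetilde Y_{\sigma_t}^2$ almost surely. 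Applying the optional sampling theorem along the continuous time change $\sigma$ (as in Bass, Theorem VI.3.7), the process $M^f_{\sigma_t}$ is a local martingale with respect to the time-changed filtration $\mathcal G_t := \widetilde{\mathcal F}_{\sigma_t}$.

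The key computation is then a change of variables $s = \sigma_u$ in the drift term:
\begin{equation*}
\int_0^{\sigma_t} \widetilde A f(\widetilde X_s, \widetilde Y_s)\,ds
= \int_0^t \widetilde A f(\widetilde X_{\sigma_u}, \widetilde Y_{\sigma_u})\, \widetilde Y_{\sigma_u}^2\,du
= \int_0^t Y_u^2 \,\widetilde A f(X_u, Y_u)\,du
= \int_0^t A f(X_u, Y_u)\,du,
\end{equation*}
using the factorization \eqref{eq:GeneratorTimeChangedSABR}. Together with the previous step this yields that $f(X_t, Y_t) - f(\widetilde x, \widetilde y) - \int_0^t Af(X_s, Y_s)\,ds$ is a $(\mathcal G_t)$-local martingale, so $(X,Y)$ solves the martingale problem \eqref{eq:SABRMartingaleProblem} for $A$. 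Consistency of the two expressions for $\tau_t^{-1}$ in the statement then follows automatically: since $\widetilde Y$ is a Brownian motion, a Dambis--Dubins--Schwarz argument shows that $Y_t = \widetilde Y_{\sigma_t}$ is a geometric Brownian motion satisfying $dY_t = \alpha Y_t dZ_t$ (after an appropriate identification of the driving Brownian motions), and differentiating $\tau_{\sigma_t} = t$ gives $\sigma_t = \int_0^t Y_s^2\,ds$, matching \eqref{eq:TimechangedSABR}.

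The main obstacle is not the It\^o calculus but justifying that the martingale problem for $A$ on the domain $D = \R_{\geq 0}\times \R_{>0}$ has a unique solution so that we may conclude equality in law; this is delicate because of the H\"older (rather than Lipschitz) CEV coefficient $x^\beta$ at the boundary $x = 0$ for $\beta \in (0,1)$, where absorption needs to be handled. A secondary point is verifying that $\sigma_t < T_0^{\widetilde Y}$ for all $t \geq 0$ almost surely, equivalently $\tau_\infty = T_0^{\widetilde Y}$; this is consistent with the fact that the SABR volatility $Y$ is a geometric Brownian motion that never reaches zero, so the time change is defined for all $t \geq 0$, but turning this consistency into a rigorous statement requires the identification of $Y$ as GBM obtained above, and therefore a careful bootstrap.
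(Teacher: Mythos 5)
Your proposal is correct, but it runs along a different track than the paper's own (very terse) proof. The paper disposes of the theorem in one sentence by invoking the Dambis--Dubins--Schwarz theorem (in the form of Kallsen--Shiryaev) to identify the time-changed stochastic integrals directly, observing that $\tau^{-1}$ is exactly the clock turning the Brownian motion $\widetilde Y$ into the geometric Brownian motion $Y$. You instead execute the Volkonskii-type argument that the paper only gestures at in the introduction (via the citation of Bass, Theorem VI.3.7): pass to the martingale problem for $\widetilde A$, optional-sample along the inverse additive functional $\sigma_t=\tau_t^{-1}$, and use the change of variables $ds=\widetilde Y_{\sigma_u}^2\,du$ to absorb the multiplicative factor $y^2$ and land on the martingale problem for $A=y^2\widetilde A$. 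Your route has the virtue of making explicit the two ingredients the paper leaves implicit: (a) uniqueness for the SABR martingale problem, which is needed to upgrade ``solves the martingale problem for $A$'' to ``coincides in law with SABR'' and which the paper supplies separately in Proposition \ref{Th:WellPosednessMPSABR}; and (b) the fact that $\sigma_t$ stays strictly below $T_0^{\widetilde Y}$ for all $t$. On point (b), your suggested ``bootstrap'' through the GBM identification is more roundabout than necessary and risks circularity: the clean statement is simply that $\int_0^{T_0^{\widetilde Y}}\widetilde Y_s^{-2}\,ds=\infty$ almost surely for a Brownian motion started at $y>0$ (the identity $\tau_0=\tau_1$ of Ethier--Kurtz, which the paper itself uses in the proof of Theorem \ref{Th:AsymptoticsLargetimeSABR}), and this can be cited before, not after, identifying $Y$ as a geometric Brownian motion. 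With that substitution your argument is complete and, if anything, more informative than the published one-liner.
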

The time change \eqref{eq:Thetimechange} for SABR appears in several related works, for example in \cite{Antonov, FordeSharpTail, SABRMassZero, Islah}.
There is a clear advantage in studying SABR from the point of view of Theorems \ref{Th:TimechangedSABRVolkonskii}. 
In \eqref{eq:SABRdynamicsTimechanged} the full coupling in the coefficients is removed and put into the time change.  An important feature of the SABR time change \eqref{eq:Thetimechange} is that it only involves the second coordinate processes $Y$ and $\widetilde Y$ and therefore some important calculations can be handled. Indeed, it is nothing but the time change of a Brownian motion into a geometric Brownian motion. Furthermore, the time change is a continuous additive functional of Brownian motion and as such it has a representation as a (unique) mixture of Brownian local times \cite[Theorem 22.25]{kallenberg2002foundations}. This suggests a perspective on the time change as a transformation of the underlying geometry, cf. Appendix \ref{Sec:TimeChangeLocalTime}. See also \cite{Alili} on a related matter.

\begin{remark}[Geometric perspective on the time change]\label{rem:GeomPerspLocalTime}
By the time change \eqref{eq:Thetimechange} one changes the time that a particle spends in a small neighborhood $U_x(\epsilon)$ of a point $x$. An alternative perspective is to define a new geometry (intrinsic metric or ``energy" metric) in a way to match the speed of the particle: points where the particle moves slowly are ``far away`` (intervals are stretched) and points, where the particle moves quickly are ``close" (intervals are compressed) in the energy metric. 
This is well illustrated in the special case ($\beta=0$, $\rho=0$) of the SABR model by observing in Theorem \ref{Th:TimechangedSABRVolkonskii} the time changed Brownian process $\widetilde{X},\widetilde{Y}$ in the geometry of the hyperbolic plane: the time-changed Brownian particle moves slowly where $\widetilde Y$ is small, hence the speed measure is large. Indeed, distances on the hyperbolic plane around the horizontal axis explode and boundary points are infinitely far from interior points in the hyperbolic metric. See Appendix  \ref{sec:TimechangeGeometry} for more details on this example and \ref{Sec:TimeChangeLocalTime} for a reminder on time change via local times \cite[II.16]{Borodin}) .
\end{remark}
An analogous statement to Theorem \ref{Th:TimechangedSABRVolkonskii} holds for the so-called
\emph{Brownian motion on the SABR plane} (henceforth SABR-Brownian motion) cf. \cite[Section 3.2]{HLW}. 
This process was first considered in \cite{HLW} and is characterized as a stochastic process, whose law solves the martingale problem (see \eqref{eq:SABRSDEwithDrift} below) corresponding to the Laplace-Beltrami operator $\Delta_g$ of the so-called SABR manifold\footnote{Note that the here the manifold is only a subset of the state space of the process, as the axis $\{(x,y):x=0\}$ is excluded. On this set the instantaneous covariance matrix degenerates and the Riemannian metric is not defined.} $(\mathcal{S},g)$, where the manifold and metric tensor are
\begin{equation}\label{eq:ManifoldMetric}
\mathcal{S}:=(0,\infty)^2 \quad\textrm{and}\quad g(x,y) := \frac{1}{1-\rho^2}\left(\frac{d x^2}{y^2x^{2\beta} } - \frac{2 \rho \ d xd y}{y^2 x^{\beta}}
 + \frac{d y^2}{y^2}\right),
 \quad (x,y) \in \mathcal{S}. 
\end{equation}
The martingale problems corresponding to the Laplace-Beltrami operators $\Delta_g$ and $\widetilde \Delta_g$ are
\begin{align*}
f(\overline{X}_t,Y_t)-f(\overline{X}_0,Y_0)- \int_0^t\Delta_gf(\overline{X}_s,Y_s)ds,\quad \textrm{resp.} \quad f(\widetilde{\overline{X}}_t,\widetilde Y_t)-f(\widetilde{\overline{X}}_0,\widetilde Y_0)- \int_0^t\widetilde{\Delta_g}f(\widetilde{\overline{X}}_s,\widetilde Y_s)ds,
\end{align*}
where the operators $\Delta_g$ and $\widetilde{\Delta_g}$  
have (in orthogonal coordinates) the following representation: 
\begin{equation}\label{eq:SABRLaplaceBeltrami}
\begin{array}{rl}
 \Delta_g f 
 =\frac{\beta}{2} y^2 x^{2\beta-1} \frac{\partial f}{\partial x} + Af
 =y^2 \left(\frac{\beta}{2} x^{2\beta-1} \frac{\partial f}{\partial x} + \widetilde{A}f\right)
 =: y^2 \widetilde{\Delta_g} f, \qquad f \in C_0^{\infty}(D).&
\end{array}\end{equation}
\bigskip\\
To formulate the analogous statement to Theorem \ref{Th:TimechangedSABRVolkonskii} in this setting, consider the following system of stochastic differential equations:
\begin{equation}\label{eq:SABRSDEwithDrift}
d\overline{X}_t=Y_t \overline{X}_t^{\beta}dW_t+ \frac{Y_t^2}{2}\beta \overline{X}_t^{2\beta-1}dt,\qquad
dY_t= \alpha Y_t dZ_t, \qquad
d\langle Z,W\rangle_t=\rho dt,
\end{equation}
where $\overline{X}_0=\overline{x}, Y_0=y$, and $t\geq 0.$ Respectively, consider the system
\begin{equation}\label{eq:SABRSDEwithDrift2}
d\widetilde {\overline{X}}_t=\widetilde{\overline{X}}_t^{\beta}dW_t+ \frac{1}{2}\beta \widetilde {\overline{X}}_t^{2\beta-1}dt, \qquad
d\widetilde Y_t= dZ_t, \qquad 
d\langle Z,W\rangle_t=\rho dt,
\end{equation}
where $\widetilde{\overline{X}}_0=\widetilde{\overline{x}}$, $\widetilde Y_0=\widetilde y$, and $t\geq 0$.
\begin{theorem}[Random time change for SABR-Brownian motion]\label{Th:TimechangedSABRwithDrift}
The process $\left(\overline{X}_t,Y_t\right)_{t\geq 0}$ in \eqref{eq:SABRSDEwithDrift} and a time changed version $(\widetilde {\overline{X}}_{\tau^{-1}_t},\wY_{\tau^{-1}_t})_{t\geq 0}$ of the process \eqref{eq:SABRSDEwithDrift2} coincide in law, where the time change is defined as
\begin{align*}
\tau_t=\int_0^t  \widetilde Y_s^{-2}\,ds \quad  \textrm{for} \quad t< T_0^{\widetilde Y}=\inf\{t: \widetilde Y_t =0\}.
\end{align*}
Furthermore, the law of $(\overline{X}_t,Y_t)_{t\geq 0}$ 
solves the martingale problem to $\Delta_g$, and similarly, the law of $(\widetilde {\overline{X}}_{t},\wY_{t})_{t\geq 0}$ solves the martingale problem to $\widetilde \Delta_g$.
\end{theorem}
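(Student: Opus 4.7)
The plan is to mirror the strategy of Theorem \ref{Th:TimechangedSABRVolkonskii}, replacing the factorization $A = y^2 \widetilde{A}$ in \eqref{eq:GeneratorTimeChangedSABR} by the factorization $\Delta_g = y^2 \widetilde{\Delta_g}$ recorded in \eqref{eq:SABRLaplaceBeltrami}. Because the additional drift term $\tfrac{1}{2}\beta x^{2\beta-1}y^2$ already carries a factor $y^2$, the multiplicative-perturbation structure is preserved intact, and essentially the same Volkonskii-type argument as for Theorem \ref{Th:TimechangedSABRVolkonskii} applies.

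The first step is to verify that the two SDE systems realise the advertised martingale problems. For $f \in C_c^\infty(\mathcal{S})$, apply It\^o's formula to $f(\overline{X}_t, Y_t)$ along \eqref{eq:SABRSDEwithDrift}, using $d\langle Z,W\rangle_t = \rho\, dt$; direct comparison with \eqref{eq:SABRLaplaceBeltrami} identifies the finite-variation part with $\int_0^t \Delta_g f(\overline{X}_s, Y_s)\,ds$. The stochastic-integral part is a local martingale that becomes a true martingale after localising with stopping times $T_n = \inf\{t : \overline{X}_t \notin [1/n, n] \text{ or } Y_t \notin [1/n, n]\}$. An identical computation for \eqref{eq:SABRSDEwithDrift2} with $\widetilde{\Delta_g}$ in place of $\Delta_g$ establishes the martingale problem for the tilded system.

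The second step is the random time change itself. On $[0, T_0^{\widetilde{Y}})$ the clock $\tau_t = \int_0^t \widetilde{Y}_s^{-2}\,ds$ is strictly increasing and $C^1$, hence $(\tau^{-1})'(u) = \widetilde{Y}_{\tau^{-1}_u}^2$. Substituting $s = \tau^{-1}_u$ in the integral $\int_0^{\tau^{-1}_t} \widetilde{\Delta_g} f(\widetilde{\overline{X}}_s, \widetilde{Y}_s)\,ds$ produced by the tilded martingale problem converts the integrand into
\begin{align*}
\int_0^t \widetilde{Y}_{\tau^{-1}_u}^2 \,\widetilde{\Delta_g} f\bigl(\widetilde{\overline{X}}_{\tau^{-1}_u}, \widetilde{Y}_{\tau^{-1}_u}\bigr)\,du = \int_0^t \Delta_g f\bigl(\widetilde{\overline{X}}_{\tau^{-1}_u}, \widetilde{Y}_{\tau^{-1}_u}\bigr)\,du,
\end{align*}
so that the time-changed process $\bigl(\widetilde{\overline{X}}_{\tau^{-1}_t}, \widetilde{Y}_{\tau^{-1}_t}\bigr)$ solves the martingale problem for $\Delta_g$; Volkonskii's theorem (cf.\ \cite[Theorem VI.3.7]{Bass} and \cite[Theorem 4.1]{BoettcherSchillingWang}) ensures that local-martingality is preserved under the passage to the time-changed filtration. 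Well-posedness in law of the martingale problem for $\Delta_g$ then forces equality in law with $(\overline{X}_t, Y_t)_{t \geq 0}$.

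The main obstacle I expect is that the SABR plane $\mathcal{S} = (0,\infty)^2$ excludes the axis $\{x=0\}$, on which the drift $\tfrac{1}{2}\beta x^{2\beta-1}$ is singular for $\beta \in (0, \tfrac{1}{2})$ and on which the metric $g$ in \eqref{eq:ManifoldMetric} degenerates. This is also the reason the statement is phrased only up to the hitting time $T_0^{\widetilde{Y}}$: one must carry out the martingale-problem verification and the time-change identification on a sequence of relatively compact subsets of $\mathcal{S}$, and then argue that the identifications are stable under the limit before either $X$ or $\widetilde X$ reaches the boundary. Establishing uniqueness of the martingale problem for the degenerate operator $\Delta_g$ is the delicate point, and I would address it by passing to the geodesic polar coordinates on the SABR manifold in the sense of \cite{HLW}, in which $\Delta_g$ becomes uniformly elliptic on a half-plane and standard well-posedness arguments apply.
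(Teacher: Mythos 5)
Your proposal is correct in outline, but it proceeds by a genuinely different route from the paper's. The paper proves Theorems \ref{Th:TimechangedSABRVolkonskii} and \ref{Th:TimechangedSABRwithDrift} together in one line by invoking the Dambis--Dubins--Schwarz theorem at the level of the stochastic integrals: the clock $\tau^{-1}_t=\int_0^t Y_s^2\,ds$ turns the Brownian motion $\widetilde Y$ into the geometric Brownian motion $Y$, and applied to the driving local martingales of \eqref{eq:SABRSDEwithDrift2} it reproduces the integrands of \eqref{eq:SABRSDEwithDrift} term by term. You instead carry out the Volkonskii multiplicative-perturbation argument at the level of generators --- the route the paper gestures at by citing \cite[Theorem VI.3.7]{Bass} and \cite[Theorem 4.1]{BoettcherSchillingWang} before Theorem \ref{Th:TimechangedSABRVolkonskii} but does not actually execute --- verifying both martingale problems by It\^o's formula and converting one into the other via the substitution $s=\tau^{-1}_u$ together with the factorization $\Delta_g=y^2\widetilde{\Delta_g}$; your change-of-variables computation is correct. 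Your route delivers the second assertion of the theorem (the martingale-problem characterizations) as a by-product, whereas the DDS route obtains the identity in law with less machinery; both routes ultimately need uniqueness in law for \eqref{eq:SABRSDEwithDrift} to close the argument, which the paper supplies (for the undrifted system) in Proposition \ref{Th:WellPosednessMPSABR} via local Lipschitz coefficients away from $\{x=0\}$ plus Yamada--Watanabe. One caution on your final paragraph: the SABR isometry \eqref{eq:SABRIsometry} maps $\mathcal{S}$ onto a wedge $\{(x,y):y>0,\ x>-\rho y/\rrho\}$ of the Poincar\'e half-plane rather than onto the whole half-plane, and the hyperbolic Laplacian is only locally (not uniformly) elliptic there, so ``standard well-posedness on a half-plane'' is not quite available as stated; the localization-plus-Yamada--Watanabe argument of Proposition \ref{Th:WellPosednessMPSABR} adapts to the drifted system more directly.
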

\section{The semigroup point of view}
Let us turn to applications of the SABR time change. We start with the regularity of the transition semigroup \eqref{eq:Semigroup} of the SABR process \eqref{eq:SABRSDE}. 
For the problem of pricing contingent claims on a forward, suppose that the stochastic process modeling the forward $X = (X_t)$, $t\geq0$ follows SABR-dynamics. 
If $f$ denotes the payoff function of a financial contract, the valuation of the fair price of this contract reduces to the computation of $E^{(x,y)}\left[ f (X_t,Y_t )\right], \ t\in[0,T ]$ under some risk neutral (martingale-) measure, where $(x,y)$ is the initial value of the forward and volatility.
For a suitable set of admissible payoffs $f \in \mathcal{B}$, these expectations form a semigroup 
\begin{equation}\label{eq:Semigroup}
P_t f(x,y)=E^{(x,y)}[f(X_t,Y_t)],\qquad t\geq 0, 
\end{equation}
of bounded linear operators. 
\subsection{Banach Spaces and (generalized) Feller properties}\label{Sec:BanachSpaces}
We speak about Feller properties of the semigroup \eqref{eq:Semigroup} if it satisfies: 
\begin{itemize}
 \item[(F1)] (Semigroup properties) $P_0=Id$, and $P_{t+s}=P_tP_s$ for all $t,s\geq 0$,
 \item[(F2)] (Continuity properties) for all $f \in \mathcal{B}$ and $x\in X$, $\lim_{t\rightarrow0+}P_tf(x)=f(x)$,
 \item[(F3)] (Positive contraction properties) $||P_t||_{L(\mathcal{B})}\leq 1$ for all $t \geq 0$ and \\
$P_t$ is positive for all $t\geq 0$, that is, for any $f\in \mathcal{B}$ $f\geq 0$ implies $P_tf\geq 0$,
\end{itemize}
with choice of admissible payoffs being a suitable Banach space $\mathcal{B}$.
A semigroup with the properties $(F1)-(F3)$ that is invariant on its domain $\mathcal{B}$, i.e.
\begin{equation}\tag{F,\ FD,\ FG}\label{eq:FellerProperties}
 P_t \text{ maps } \mathcal{B} \text{ into itself}
\end{equation}
is referred to as a Feller semigroup (F), a Feller-Dynkin semigroup (FD) or a Generalized-Feller semigroup (FG), depending on the underlying invariant Banach space $\mathcal{B}$. We recall the Feller properties $(F)$ and $(FD)$. 
The transition semigroup $(P_t)_{t\geq 0}$  has the \emph{Feller property} $(F)$, if it acts on the Banach space $\B_{F}=(C_b, ||\cdot||_{\infty})$, that is if
\begin{equation}\tag{F}\label{eq:FellerProp}
P_t\text{ maps } 
C_b=\{f:\R^+\times \R^+\to \R \ |\ f \ \textrm{continuous and bounded}\}
\text{ into itself}. 
\end{equation}
The transition semigroup $(P_t)_{t\geq 0}$  has the \emph{Feller-Dynkin property} $(FD)$, if it acts on the Banach space $\B_{FD}=(C_{\infty}, ||\cdot||_{\infty})$, that is if
\begin{equation}\tag{FD}\label{eq:FellerDyinkinProp}
	 P_t\text{ maps } C_{\infty}=\{f:\R^+\times \R^+\to \R \ |\ f \ \textrm{continuous},\ \lim_{|(x,y)|\rightarrow \infty}f(x,y)=0 \} \text{ into itself}. 
\end{equation}

 Note that strong continuity of the semigroup $P_t$ (which is the required analytic setting in \cite{HansenOsternann}) is an immediate consequence of the Feller-Dynkin property (FD) \cite[p. 369]{kallenberg2002foundations} but not of the weaker Feller property (F). While the Feller property (F) is a direct consequence of the well-posedness of the martingale problem, the Feller-Dynkin property is not always straightforward to verify. Proofs can be found in Appendix \ref{Sec:Proofs}.
\smallskip
\begin{lemma}\label{Th:FellerPropertySABR}
The SABR semigroup \eqref{eq:Semigroup} for \eqref{eq:SABRSDE}
and the heat semigroup \eqref{eq:Semigroup} corresponding to SABR-Brownian motion \eqref{eq:SABRSDEwithDrift} (henceforth SABR-heat semigroup) satisfy 
the Feller property $(F)$.
\end{lemma}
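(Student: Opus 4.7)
My plan is to verify the three ingredients of $(F)$ from the probabilistic representation $P_t f(x,y) = E^{(x,y)}[f(X_t, Y_t)]$. Contraction $\|P_t f\|_\infty \le \|f\|_\infty$, positivity, and the semigroup property (F1) are immediate from the Markov property of the underlying process. The substantive content is that $P_t$ preserves continuity; since $f \in C_b$ is bounded, this reduces to showing that $(x,y) \mapsto \mathrm{Law}\bigl((X_t, Y_t)^{(x,y)}\bigr)$ is weakly continuous on $D$.

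To obtain this weak continuity I would exploit the time-change representation of Theorem \ref{Th:TimechangedSABRVolkonskii} (resp.\ Theorem \ref{Th:TimechangedSABRwithDrift}) to pass to the decoupled system \eqref{eq:SABRdynamicsTimechanged}. There the analysis splits into three independent continuity facts: (i) the Brownian motion $\widetilde Y$ started at $\widetilde y = y$ depends pathwise continuously on $y$; (ii) the time change $\tau^{-1}_t = \int_0^t Y_s^2\,ds$, being a continuous functional of (the exponentiated) $\widetilde Y$, likewise depends continuously on $y$; and (iii) the CEV-type component $\widetilde X$ depends continuously in law on its initial value $\widetilde x = x$. For (iii), the regime $\beta \in [1/2,1]$ follows from Yamada--Watanabe pathwise uniqueness together with the standard SDE stability estimate, while for $\beta \in [0,1/2)$ one invokes well-posedness of the martingale problem together with a tightness argument. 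Composing these three continuities gives weak continuity of $(\widetilde X_{\tau^{-1}_t}, \widetilde Y_{\tau^{-1}_t})$ and hence of $(X_t, Y_t)$ in $(x,y)$, and the definition of weak convergence yields $P_t f \in C_b$.

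The main obstacle is the degeneracy of the coefficients at the boundary $\{X = 0\}$ when $\beta \in (0,1)$: the diffusion coefficient $x^\beta$ is only H\"older there, and in the SABR-Brownian-motion case \eqref{eq:SABRSDEwithDrift} the drift $\tfrac{1}{2}\beta y^2 x^{2\beta-1}$ is singular. Classical Lipschitz-based flow stability therefore breaks down. I would handle this by a Stroock--Varadhan-type tightness-plus-uniqueness argument: given $(x_n, y_n) \to (x, y)$, the corresponding solution laws form a tight family on $C([0,t]; D)$ (since $Y$ is a GBM and $\widetilde X$ has sublinear diffusion coefficient), every subsequential weak limit solves the SABR martingale problem started at $(x,y)$ by continuity of the coefficients in the test functionals, and well-posedness of this martingale problem pins the limit down uniquely --- giving the required weak continuity. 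The SABR-Brownian-motion case is treated identically via Theorem \ref{Th:TimechangedSABRwithDrift}, the only additional care concerning the Bessel-type singular drift, which is well-understood in this framework.
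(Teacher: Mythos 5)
Your proposal is correct in substance, and its decisive step is in fact the paper's argument: the paper disposes of the whole lemma by noting that (F1) follows from Chapman--Kolmogorov, (F2) from continuity of paths, and that $P_t(C_b)\subset C_b$ is equivalent (by \cite[Lemma 19.3]{kallenberg2002foundations}) to weak continuity of the law of $(X_t,Y_t)$ in the initial point, which is supplied by uniqueness in law via \cite[Theorem 21.9]{kallenberg2002foundations}. Your closing ``tightness $+$ martingale-problem identification $+$ uniqueness'' argument is precisely the proof of that cited theorem, so you are unpacking the reference rather than taking a different route; given Proposition \ref{Th:WellPosednessMPSABR} this is all that is needed. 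Two remarks. First, the time-change detour in your middle paragraph is both unnecessary and, as stated, gappy: $\widetilde X$ and the clock $\tau^{-1}$ in Theorem \ref{Th:TimechangedSABRVolkonskii} are driven by $\rho$-correlated Brownian motions, so marginal continuity in law of $\widetilde X$ in $\widetilde x$ and of $\tau^{-1}$ in $y$ does not by itself give convergence in law of the composition $\widetilde X_{\tau^{-1}_t}$; one needs joint convergence of the pair, at which point you are back to the tightness-plus-uniqueness argument on the coupled system. You correctly fall back on the direct argument, so nothing is lost, but the decomposition buys you nothing here. Second, you never verify (F2), the pointwise continuity $\lim_{t\to 0+}P_tf(x,y)=f(x,y)$; this is immediate from a.s.\ continuity of the sample paths and bounded convergence, but it is one of the listed defining properties and should be stated.
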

\begin{theorem}\label{Th:FellerDynkinPropertySABR}
The SABR semigroup \eqref{eq:Semigroup} for \eqref{eq:SABRSDE}
and the SABR-heat semigroup both satisfy 
the Feller-Dynkin property (FD). 
\end{theorem}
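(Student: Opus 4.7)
The plan is to use the time-change representation from Theorem \ref{Th:TimechangedSABRVolkonskii} to transfer the problem to the simpler driving process $(\wX,\wY)$, whose large-scale behaviour can be controlled pathwise. By Lemma \ref{Th:FellerPropertySABR}, $P_tf$ is continuous whenever $f\in C_\infty\subset C_b$, so the additional content to establish is only the vanishing $\lim_{|(x,y)|\to\infty}P_tf(x,y)=0$. Fix $\epsilon>0$. Since $f\in C_\infty$, there exists $M=M(\epsilon)$ with $|f|\le \epsilon$ outside the compact set $K_M=[0,M]\times[1/M,M]$, and the trivial bound
\[
|P_tf(x,y)|\le \epsilon+\|f\|_\infty\,\PP^{(x,y)}\bigl[(X_t,Y_t)\in K_M\bigr]
\]
reduces the task to the tightness-type statement $\PP^{(x,y)}[(X_t,Y_t)\in K]\to 0$ for every compact $K\subset D$ as $|(x,y)|\to\infty$ in Euclidean norm.

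I would split the divergence of $(x,y)$ into two directions. Along any subsequence with $y\to\infty$, the explicit formula $Y_t=y\exp(\alpha Z_t-\alpha^2 t/2)$ for the geometric Brownian motion gives $\PP^{(x,y)}[Y_t\le M]=\PP[\alpha Z_t-\alpha^2 t/2\le \log(M/y)]\to 0$, which alone forces $(X_t,Y_t)\notin K_M$ in the limit. In the complementary regime where $y$ stays in a compact subset of $(0,\infty)$ and $x\to\infty$, apply Theorem \ref{Th:TimechangedSABRVolkonskii} to write $X_t\stackrel{d}{=}\wX_{A_t}$ with $A_t=\int_0^t Y_s^2\,ds$, and decompose
\[
\PP^{(x,y)}[X_t\le M]\le \PP^{(x,y)}[A_t>T_0]+\PP^{x}\!\left[\inf_{0\le s\le T_0}\wX_s\le M\right]
\]
for any threshold $T_0>0$. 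The computation $E^{(x,y)}[A_t]=y^2(e^{\alpha^2 t}-1)/\alpha^2$ controls the first term uniformly for $y$ in a compact subset of $(0,\infty)$ via Markov's inequality, so that the first term is made $\le \epsilon/2$ by choosing $T_0$ large, and the second term is made $\le\epsilon/2$ afterwards by choosing $x$ large.

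The main obstacle is the CEV infimum estimate $\PP^{x}[\inf_{s\le T_0}\wX_s\le M]\to 0$ as $x\to\infty$. For $\beta\in[0,1)$ I would apply the Lamperti substitution $U_s:=\wX_s^{1-\beta}/(1-\beta)$, under which It\^o's formula yields $dU_s=dW_s-\frac{\beta}{2(1-\beta)}U_s^{-1}ds$. Starting from $U_0=x^{1-\beta}/(1-\beta)\to\infty$, the singular drift is uniformly small on the event $\{\inf_{s\le T_0}U_s\ge U_0/2\}$, so a standard Gronwall/comparison argument with Brownian motion shows $\inf_{s\le T_0}U_s\to\infty$ in probability, and hence the same for $\wX$. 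For $\beta=1$ the explicit form $\wX_s=x\exp(W_s-s/2)$ makes the estimate immediate.

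For SABR-Brownian motion the argument is completely analogous, with Theorem \ref{Th:TimechangedSABRwithDrift} replacing Theorem \ref{Th:TimechangedSABRVolkonskii}; the additional non-negative drift term in \eqref{eq:SABRSDEwithDrift2} only pushes the process away from the origin and therefore strengthens the infimum estimate via a standard pathwise SDE comparison.
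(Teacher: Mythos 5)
Your proposal is correct and follows the same overall strategy as the paper --- prove continuity via the Feller property of Lemma \ref{Th:FellerPropertySABR}, reduce the vanishing-at-infinity condition to the estimate $\PP^{(x,y)}[X_t\le M,\,Y_t\le M]\to 0$, and use the time change of Theorem \ref{Th:TimechangedSABRVolkonskii} to convert this into a hitting-time estimate for the CEV process --- but the two key technical steps are carried out by genuinely different means. The paper (Proposition \ref{Prop2}) controls the random clock abstractly, using only a.s.\ finiteness and convergence in probability of the stopping times to truncate them at a deterministic horizon $T+\delta$, and then obtains $\lim_{x\to\infty}\PP^x[T^{\wX}_{[0,r]}\le\widetilde T]=0$ from Feller's boundary classification: Lemma \ref{CEVNotEntrance} shows via the speed-measure integral test that $+\infty$ is not an entrance boundary, and a monotonicity argument does the rest. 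You instead control the clock quantitatively through the explicit first moment $E^{(x,y)}[A_t]=y^2(e^{\alpha^2t}-1)/\alpha^2$ and Markov's inequality (which has the advantage of being uniform in $y$ bounded above, and in particular also covers the regime $y\to0$, $x\to\infty$ that your stated dichotomy ``$y\to\infty$'' versus ``$y$ in a compact subset of $(0,\infty)$'' technically omits --- you should enlarge the second case to $y\le M'$), and you prove the hitting-time estimate by hand via the Lamperti substitution, localization at $\sigma=\inf\{s:U_s\le U_0/2\}$ and a Brownian fluctuation bound. Your argument is more elementary and self-contained; the paper's buys generality (it applies to any a.s.\ finite family of time changes, not just ones with computable moments) and is shorter because it outsources the hitting-time estimate to the standard boundary classification. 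For the SABR-Brownian motion both routes need an extra observation: the paper transforms $\overline X$ by its scale function to reduce to the driftless test, while you invoke a pathwise comparison with the driftless CEV; since the comparison is only needed up to the first hitting time of the level $M>0$, where the coefficients are locally Lipschitz, that step is unproblematic.
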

\subsubsection{The generalized Feller property and weighted spaces}\label{Sec:GeneralizedFeller}
The set of payoffs under consideration---which is $C_{\infty}$---for the Feller-Dynkin property, is rather restrictive. While the European Put is contained, Call contracts are not included. On the other hand, calculating Put option prices in the SABR model may not be straightforward, due to the probability mass at $X=0$, which can accumulate to significant values, see \cite{SABRMassZero}. 
We are therefore interested in Feller-Dynkin-like properties---which have the advantage of implying strong continuity for the semigroup---of the SABR model, which extend the set of admissible payoffs from $C_{\infty}$ to the more realistic case of unbounded function spaces.
In the framework of \cite{SemigroupPOVSplittingSchemes} the set of admissible payoffs is extended to an appropriate Banach space $\cBp$, which includes functions whose growth is controlled by some \emph{admissible} function $\psi$ (see Def. \ref{Def:AdmissibleFunction} below.). This setting is a natural generalization of the Feller-Dynkin property $(FD)$, which is suitable for pricing of options with unbounded payoff. 
Accordingly, we construct here so-called weighted spaces for the SABR model, and prove Feller-like properties (properties (F1)-(F3) and (FG)) on an appropriate invariant Banach space $\B$.
For this, we first recall the framework of \cite{SemigroupPOVSplittingSchemes}. We denote by $D$ the state space of the stochastic process under consideration.
\begin{definition}[Admissible weight functions and weighted spaces]\label{Def:AdmissibleFunction}
On a completely regular Hausdorff space $D$, a function $\psi:D\rightarrow(0,\infty)$ is an \emph{admissible
weight function} if the sub-level sets
\begin{equation}\label{eq:CompactSublevelSets}
K_R:=\{x \in D: \psi(x)\leq R\}
\end{equation}
are compact for all $R>0$. A pair $(D,\psi)$ where $D$ is a completely regular Hausdorff space and $\psi$ an admissible weight function is called a \emph{weighted space}.
\end{definition}
Note that admissibility renders weight functions $\psi$ lower semi-continuous and bounded from below by some $\delta>0$, cf. \cite{SemigroupPOVSplittingSchemes}. 

\begin{lemma}
Consider the set
\begin{equation*}\label{eq:BPsiSpace}
 B_{\psi}(D):=\left\{f:D\rightarrow \R: \sup_{x\in D} \psi(x)^{-1} |f(x)|<\infty \right\},\quad \textrm{with}\quad ||f||_{\psi}:=\sup_{x\in D}\psi(x)^{-1} ||f(x)||_{\R}.
\end{equation*}
Then the pair  $(B_{\psi}(D),||\cdot||_{\psi})$ is a Banach space and $C_b(D)\subset B_{\psi}(D)$, where $(C_b(D),||\cdot||_{\infty})$ denotes the space 
of bounded continuous functions endowed with the supremum norm. \\
See \cite[Section 2.1]{SemigroupPOVSplittingSchemes}.
\end{lemma}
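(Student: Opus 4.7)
The plan is to verify the Banach space axioms in turn and then establish the inclusion. First I would check that $B_{\psi}(D)$ is a vector space and that $||\cdot||_{\psi}$ defines a norm on it. Closure under addition and scalar multiplication follows from the subadditivity estimate $\sup_{x} \psi(x)^{-1}|(f+g)(x)| \leq \sup_{x} \psi(x)^{-1}|f(x)| + \sup_{x} \psi(x)^{-1}|g(x)|$ together with the obvious absolute homogeneity. The norm properties are then routine: absolute homogeneity is immediate, the triangle inequality is the same subadditivity estimate, and positive definiteness uses that $\psi(x)>0$ at every $x \in D$, so $||f||_{\psi}=0$ forces $f \equiv 0$ pointwise.

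The substantive step is completeness. Let $(f_n)_{n\in \N}$ be Cauchy with respect to $||\cdot||_{\psi}$. For each fixed $x \in D$ the pointwise estimate $|f_n(x)-f_m(x)| \leq \psi(x)\,||f_n-f_m||_{\psi}$ shows that $(f_n(x))_n$ is Cauchy in $\R$; denote its limit by $f(x)$. Given $\varepsilon > 0$, choose $N$ with $||f_n-f_m||_{\psi}<\varepsilon$ for all $n,m \geq N$, so that $\psi(x)^{-1}|f_n(x)-f_m(x)|<\varepsilon$ uniformly in $x$. Letting $m \to \infty$ in this inequality yields $\psi(x)^{-1}|f_n(x)-f(x)| \leq \varepsilon$ for every $x \in D$, and taking the supremum over $x$ gives $||f_n-f||_{\psi} \leq \varepsilon$. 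This simultaneously shows $f_n-f \in B_{\psi}(D)$, hence $f = (f-f_n)+f_n \in B_{\psi}(D)$, and that $f_n \to f$ in the norm $||\cdot||_{\psi}$.

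For the embedding $C_b(D) \subset B_{\psi}(D)$ I would invoke the property of admissible weights recalled immediately above the statement: every admissible $\psi$ is bounded from below by some $\delta>0$. Then for any $f \in C_b(D)$ one has $\psi(x)^{-1}|f(x)| \leq \delta^{-1}||f||_{\infty}$ uniformly in $x \in D$, so $f \in B_{\psi}(D)$ with the continuous embedding estimate $||f||_{\psi} \leq \delta^{-1}||f||_{\infty}$. The main (and essentially the only) obstacle is the completeness step, and it is standard for weighted supremum-type spaces provided one is careful to pass to the limit in $m$ \emph{before} taking the supremum over $x$, which is legitimate because the bound $\varepsilon$ is independent of $x$.
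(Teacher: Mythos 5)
Your proof is correct and is the standard argument for weighted supremum-norm spaces; the paper itself gives no proof here but simply cites D\"orsek--Teichmann, and your argument (norm axioms, pointwise Cauchy limit followed by the uniform estimate before taking the supremum, and the embedding via the lower bound $\psi\geq\delta>0$ for admissible weights) is exactly the one being deferred to.
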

\begin{definition}
\label{Def:WeightedSpaces}
For a weighted space $(D,\psi)$ we define the function space $\cBp$ as the closure 
\begin{equation}
\cBp:=\overline{C_b(D)}^{||\cdot||_{\psi}} 
\end{equation}
of the set of bounded continuous functions 
in $B_{\psi}(D)$ under the norm $||\cdot||_{\psi}$. We refer to elements of the space $\cBp$ as \emph{functions with growth controlled by $\psi$}. 
\end{definition}
The spaces $\{\cBp$: $\psi$ admissible$\}$ in Definition \ref{Def:WeightedSpaces} above, coincide with the spaces constructed in \cite[equation (2.2)]{RoecknerSobol}, where the authors study well-posedness of martingale problems in the sense of Stroock and Varadhan in an SPDE setting.
A key feature of $\cBp$ spaces is that they allow for complete characterization of their respective dual spaces $\cBp(D)^{*}$ via a Riesz representation, which was proved for $\cBp$-spaces in \cite[Theorem 2.5]{SemigroupPOVSplittingSchemes} and \cite[Theorem 5.1]{RoecknerSobol} independently. Also, these spaces are often separable, which is an important property in all questions of implementable approximations.
We stated above that the considered Banach spaces are a generalization of the space $C_{\infty}(D)$
to ones which include functions of unbounded growth.
In fact, the following criterion given in \cite{SemigroupPOVSplittingSchemes} restores the \emph{vanishing-at-infinity} property for functions $\widetilde{f}:=\frac{|f|}{\psi}$ \emph{weighted by admissible functions $\psi$}:
\begin{proposition}\label{Th:WeightedVanishingInfty}
For any $f:D\longrightarrow \R$ it holds that  $f \in \mathcal{B}^{\psi}$ if and only if $f|_{K_R} \in C(K_R)$ for all $R>0$ and
\begin{equation*}
\lim_{R\rightarrow \infty}\sup_{x \in D  \setminus K_R}\psi(x)^{-1}|f(x)|=0,
\end{equation*}
where $K_R:=\{x \in D: \psi(x)\leq R\}$ for positive real $R$. 
\end{proposition}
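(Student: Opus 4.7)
The plan is to prove the two implications separately. For the forward direction, assume $f \in \mathcal{B}^\psi$ and pick $(f_n) \subset C_b(D)$ with $||f_n - f||_\psi \to 0$. Since $\psi \leq R$ on $K_R$, one has $|f_n(x) - f(x)| \leq R\,||f_n - f||_\psi$ uniformly on $K_R$, so $f_n|_{K_R} \to f|_{K_R}$ uniformly and $f|_{K_R}$ inherits continuity. For the decay at infinity, use $\psi > R$ outside $K_R$ to bound
\[
\sup_{x \notin K_R} \psi(x)^{-1}|f(x)| \;\leq\; \frac{||f_n||_\infty}{R} + ||f_n - f||_\psi,
\]
and make each term less than $\varepsilon/2$ by first choosing $n$ large and then $R$ large.

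For the reverse direction, assume (i) and (ii) and construct approximants $(f_n) \subset C_b(D)$ with $||f_n - f||_\psi \to 0$. Two ingredients enter. First (extension), since $D$ is completely regular Hausdorff, it embeds into its Stone--\v{C}ech compactification $\beta D$, in which each $K_R$ is closed; applying Tietze's theorem in the normal space $\beta D$ and restricting back to $D$ produces, for every $R > 0$, a function $g_R \in C_b(D)$ with $g_R = f$ on $K_R$ and $||g_R||_\infty = M_R := \sup_{K_R}|f|$. Second (growth estimate), condition (ii) forces $M_R/R \to 0$ as $R \to \infty$: given $\varepsilon > 0$, pick $R_0$ with $\sup_{x \notin K_{R_0}} \psi^{-1}|f| < \varepsilon$; then on $K_R \setminus K_{R_0}$ one has $|f| \leq \varepsilon\psi \leq \varepsilon R$, while on $K_{R_0}$ one has $|f| \leq M_{R_0}$, so $M_R \leq \max(M_{R_0}, \varepsilon R) \leq \varepsilon R$ once $R \geq M_{R_0}/\varepsilon$.

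Combining these, choose $R_n \to \infty$ and set $f_n := g_{R_n}$: the error $||f_n - f||_\psi$ vanishes on $K_{R_n}$, and on $D \setminus K_{R_n}$ is bounded by $M_{R_n}/R_n + \sup_{x \notin K_{R_n}} \psi^{-1}|f|$, both tending to zero, whence $f \in \mathcal{B}^\psi$. The main obstacle is the growth estimate $M_R/R \to 0$: without it the naive extension would only yield an $O(||f||_\psi)$ error outside $K_R$, and it is precisely condition (ii) that upgrades the pointwise bound $|f| \leq ||f||_\psi \cdot \psi$ into the required uniform control on the growth of $M_R$.
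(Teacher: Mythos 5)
Your proof is correct. The paper itself does not prove this statement but merely cites \cite[Theorem 2.7]{SemigroupPOVSplittingSchemes}, and your argument --- uniform convergence on the compact sub-level sets for the forward direction, and for the converse the Tietze extension through the Stone--\v{C}ech compactification combined with the growth estimate $M_R/R\rightarrow 0$ extracted from condition (ii) --- is essentially the standard proof given in that reference, with all the details filled in.
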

\begin{proof}
See \cite[Theorem 2.7]{SemigroupPOVSplittingSchemes} 
\end{proof}

\begin{definition}[Generalized Feller Property]\label{Def:GeneralizedFellerProp}
Consider the family $(P_t)_{t\geq 0}$ of bounded linear operators on a weighted space $\cBp(D)$. The semigroup has the \emph{generalized Feller property} (cf. \cite[Section 3]{SemigroupPOVSplittingSchemes}) if ($F1$), ($F2$) and the following properties are satisfied:
\begin{itemize} 
 \item[($\widetilde F3$)] if $f\in \cBp(X)$, and $f\geq 0$ then $P_tf\geq 0$ for all $t\geq0$. 
 \item[($\widetilde F4$)] there exist a constant $C\in \R$ and $\epsilon >0$, such that
$||P_t||_{L(\cBp(D))}\leq C$ for all $t \in [0,\epsilon]$.
\end{itemize}
\end{definition}
It is immediate that ($\widetilde F3$) covers the positivity statement of (F3).
The crucial property is ($\widetilde F4$), which yields the contraction property of (F3) such as the invariance (FG) of the domain $\cBp$ under the semigroup action. 
As stated above, the applicability of convergence theorems requires an analytic setting in which the SABR semigroup is a strongly continuous contraction semigroup on a suitable Banach space $\B$. The strong continuity of $P_t$
on $\cBp$ is stated in the following theorem.
\begin{theorem}\label{Th:StrongContinuityBpsi}
Let $(P_t)_{t\geq0}$ be a generalized Feller semigroup on the Banach space $\cBp(D)$. Then
$(P_t)_{t\geq0}$ is a strongly continuous semigroup on $(\cBp,||\cdot||_{\psi})$.
\end{theorem}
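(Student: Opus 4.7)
The strategy is to reduce strong continuity on $\cBp$ to a density argument, exploiting the uniform boundedness ($\widetilde F4$) together with the pointwise convergence ($F2$). First, by the semigroup property ($F1$) and the decomposition $P_{t+s}f - P_t f = P_t(P_sf - f)$, combined with $\|P_t\|_{L(\cBp)} \leq C$ on a neighbourhood of zero, it suffices to prove strong continuity at $t = 0+$, i.e.\ $\|P_t f - f\|_\psi \to 0$ for every $f \in \cBp$. The key reduction is a closedness argument: define
\[
\Ff := \bigl\{ f \in \cBp : \lim_{t \to 0+} \|P_t f - f\|_\psi = 0 \bigr\}.
\]
This is a linear subspace, and the bound from ($\widetilde F4$) forces it to be norm-closed, since for $f_n \in \Ff$ with $f_n \to f$ one has $\|P_t f - f\|_\psi \leq (1+C)\|f - f_n\|_\psi + \|P_t f_n - f_n\|_\psi$, so sending $n$ large first and then $t \to 0+$ yields $f \in \Ff$. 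Thus it remains to identify a dense subset of $\cBp$ contained in $\Ff$.

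For the density step I would lean on Proposition~\ref{Th:WeightedVanishingInfty}: given $f \in \cBp$ and $\eta > 0$, pick $R$ large enough that $\sup_{x \notin K_R}\psi(x)^{-1}|f(x)| < \eta$, and introduce a continuous cutoff $\chi_R$ with $\chi_R \equiv 1$ on $K_R$ and support in a slightly larger compact sublevel set $K_{R'}$; such a cutoff exists because $D$ is completely regular Hausdorff and $K_R$ is compact. Then $g := f \chi_R$ is a bounded continuous function with compact support satisfying $\|f - g\|_\psi \leq \eta$, so by closedness of $\Ff$ it is enough to place every such $g$ in $\Ff$.

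The main obstacle is therefore the pointwise-to-uniform upgrade for the compactly supported $g$. I would split the $\psi$-weighted supremum between a large sublevel set $K_M \supset K_{R'}$ and its complement. Outside $K_M$ one has $g \equiv 0$ and $\psi^{-1} \leq 1/M$, so only $\sup_{x \notin K_M}\psi(x)^{-1}|P_t g(x)|$ must be controlled; here I expect the characterization of $\cBp$ from Proposition~\ref{Th:WeightedVanishingInfty}, applied to the family $\{P_t g : t \in [0,\varepsilon]\}$ which is uniformly bounded in $\psi$-norm via ($\widetilde F4$) and consists of positive multiples of positive parts by ($\widetilde F3$), to yield a tail estimate that vanishes as $M \to \infty$ uniformly in small $t$. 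On the compact piece $K_M$ the pointwise convergence $P_t g(x) \to g(x)$ from ($F2$) has to be promoted to uniform convergence, and this is the genuine technical crux; I would attempt it either by a Dini-type monotonicity argument exploiting positivity ($\widetilde F3$) applied to suitable dominating functions, or by establishing equicontinuity of $\{P_t g\}_{t \in [0,\varepsilon]}$ on $K_M$ from the contraction bound. With both contributions in hand, sending first $M \to \infty$ and then $t \to 0+$ yields $g \in \Ff$ and concludes the proof.
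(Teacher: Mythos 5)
Your reduction to continuity at $t=0+$ and the observation that your set $\Ff$ is a norm-closed subspace are correct, but the proof does not close: the two estimates you defer are precisely where the content of the theorem lies, and neither follows from the listed axioms in the way you suggest. First, the tail estimate off $K_M$: uniform boundedness of $\{P_tg:\ t\in[0,\varepsilon]\}$ in $\|\cdot\|_{\psi}$ only yields $\sup_{x\notin K_M}\psi(x)^{-1}|P_tg(x)|\le C\|g\|_{\psi}$, a bound independent of $M$; Proposition \ref{Th:WeightedVanishingInfty} applies to each $P_tg$ separately and gives no uniformity in $t$, so the claimed vanishing as $M\to\infty$ uniformly over small $t$ is unsupported (positivity does not help here, since even $P_t\mathbf{1}\le \delta^{-1}P_t\psi$ only reproduces a constant bound). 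Second, the pointwise-to-uniform upgrade on $K_M$: $t\mapsto P_tg(x)$ is not monotone, so a Dini argument is unavailable, and equicontinuity of $\{P_tg\}$ on $K_M$ does not follow from the operator-norm bound $(\widetilde F4)$ --- a bound on $\|P_tg\|_{\psi}$ controls no modulus of continuity. (There is also a minor wrinkle in the cutoff step: $f\in\cBp$ is only guaranteed continuous on each $K_R$, not on $D$, so one should first approximate by $h\in C_b(D)$, which is dense by definition of $\cBp$, before truncating.) In short, pointwise convergence plus a locally uniform operator bound does not yield norm convergence by a direct cutoff-and-compactness argument, and this is the genuine gap.

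The paper does not prove the statement itself; it invokes \cite[Theorem 3.2]{SemigroupPOVSplittingSchemes}, whose argument sidesteps all of these uniformity issues by passing to the dual space. By the Riesz representation for $\cBp(D)^{*}$ (quoted in Section \ref{Sec:GeneralizedFeller}), every continuous functional is integration against a signed measure $\mu$ with $\int_D\psi\,d|\mu|<\infty$. Property (F2) gives $P_tf(x)\to f(x)$ pointwise, and $(\widetilde F4)$ gives the dominating bound $|P_tf(x)|\le C\|f\|_{\psi}\,\psi(x)$, so dominated convergence yields $\int_D P_tf\,d\mu\to\int_D f\,d\mu$, i.e.\ weak continuity of $t\mapsto P_tf$ at $0$. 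A semigroup of bounded operators that is weakly continuous at $0$ (together with the local bound from $(\widetilde F4)$) is strongly continuous by a classical semigroup-theoretic theorem, which finishes the proof. If you wish to retain your closed-subspace skeleton, you still need this dual-space step to handle the limit on the dense set; as written, the proposal is incomplete at its central point.
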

\begin{proof}
See \cite[Theorem 3.2]{SemigroupPOVSplittingSchemes}. 
\end{proof}

The following theorem and corollary provide a characterization of suitable Banach spaces $\cBp$ for the SABR processes \eqref{eq:SABRSDE} and \eqref{eq:SABRSDEwithDrift} which contain payoff functions of polynomial growth, and on which corresponding semigroups are generalized Feller semigroups (and hence strongly continuous by Theorem \ref{Th:StrongContinuityBpsi}):
\begin{theorem}[Generalized Feller properties for the SABR-heat and SABR semigroups]\label{Th:GeneralizedFellerPropSABRwithDrift}
Consider the SABR-heat semigroup (resp. the SABR semigroup)
\begin{equation}\label{eq:HeatSemigrouponBpsi}
 P_tf(x,y):=\mathbb{E}^{(x,y)}\left[f(X_t,Y_t)\right], \qquad t\geq0,\ (x,y)\in D,\ f \in \mathcal{B}^{\psi_{c,n}}(D),
\end{equation} 
where $(X,Y)$ is the process \eqref{eq:SABRSDEwithDrift} (resp. the process \eqref{eq:SABRSDE}).
Consider furthermore a family of functions $\psi_{c,n}:=L_{n(n+1)}\circ r_c, \ n\in \mathbb{N}$, where $L_{n(n+1)}$ denote the Legendre polynomials of order $n,\ n\in \mathbb{N}$, and $r_c:(0,\infty)^2\rightarrow (0,\infty)$ denote the family of functions
\begin{equation}\label{eq:Cosh}
r_c(x,y):= \frac{1+y^2}{2 y}+ 
\frac{ \Big( \frac{x^{1-\beta}}{1-\beta} - \rho y-c  \Big)^2}
{(1 - \rho^2)2 y}, \qquad (x,y) \in (0,\infty)^2
\end{equation}
for a $c\in [0,\infty)$. Then the following statements hold:
\begin{itemize}
 \item[(i)]
The functions $\psi_{c,n}, \ n\in \mathbb{N}$, $c\in[0,\infty)$ are admissible weight functions (cf. Definition \ref{Def:AdmissibleFunction}). 
 \item[(ii)] The SABR-heat semigroup is a generalized Feller semigroup on the Banach spaces $\mathcal{B}^{\psi_{c,n}}(D)$, for any $(c,n)\in [1,\infty)\times \N$ and any configuration of SABR parameters $(\rho,\beta) \in(-1,1)\times[0,1)$. The same statement holds for $(c,n)\in [0,1)\times \N$ in the parameter regime $(\rho,\beta)\in (0,1)\times[0,1)$ and as long as $c>|\rho|$ also for the parameters $(\rho,\beta)\in (-1,0]\times[0,1)$. 
In the case $c\leq |\rho|$ with $\rho<0$, the statement holds under the restriction 
\begin{align*}(\rho,\beta) \in (-1,0)\times \left\{\frac{2m-1}{2m}, m\in \mathbb{N}\right\}.
\end{align*}
\item[(iii)] The function $\psi: (0,\infty)^2  \rightarrow (0,\infty)$ defined for any $\beta\in [0,1)$ as
\begin{equation}\label{eq:AdhocAdmissibleWeight}
\psi(x,y):=y+2x^{1-\beta}+\frac{x^{2-2\beta}}{y}, \qquad (x,y) \in (0,\infty)^2 
\end{equation}
is an admissible weight function and for any $(\beta,\rho) \in [0,1)\times(-1,1)$ the SABR semigroup is a generalized Feller semigroup---and hence strongly continuous---on the space $\mathcal{B}^{\psi}(D)$, where $\psi$ is the function \eqref{eq:AdhocAdmissibleWeight} with $\beta$ matching the SABR parameter.
\item[(iv)] Furthermore, the SABR semigroup is a generalized Feller semigroup on the spaces $\mathcal{B}^{\psi_{0,n}}(D)$, $n\in \mathbb{N}$, for any configuration of SABR parameters in 
\begin{align*}
(\rho,\beta) \in (-1,0)\times\{0\}\cup \left\{\frac{2m-1}{2m}, m\in \mathbb{N}\right\}.
\end{align*}
\end{itemize}
\end{theorem}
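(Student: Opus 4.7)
The approach will be to verify, for each of the four parts, the axioms of a generalized Feller semigroup (Definition \ref{Def:GeneralizedFellerProp}) and admissibility of the weight functions (Definition \ref{Def:AdmissibleFunction}) individually. The semigroup property (F1), the right-continuity (F2), and the positivity $(\widetilde F3)$ are essentially automatic from the fact that $(P_t)$ is the transition semigroup of a probabilistically defined Markov family, so all the real work concentrates on admissibility and on the uniform operator bound $(\widetilde F4)$.

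For part (i) the plan is direct: observe that the leading term $\frac{1+y^2}{2y}$ in $r_c$ blows up as $y\downarrow 0$ and as $y\to\infty$, while the second term in \eqref{eq:Cosh} dominates as $x\to\infty$ (using $\beta<1$ so that $x^{1-\beta}\to\infty$) and stays bounded near $x=0$. Since Legendre polynomials $L_{n(n+1)}$ are polynomial and eventually monotone, continuity together with these asymptotics shows that every sublevel set $\{\psi_{c,n}\leq R\}$ is closed and bounded in $D$, hence compact. The admissibility of the function \eqref{eq:AdhocAdmissibleWeight} in (iii) is even more transparent because it factors as $(y+x^{1-\beta})^2/y$, which manifestly blows up at the three ``infinities'' of $D$.

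For parts (ii), (iii) and (iv), the core plan is a Lyapunov-function argument reducing $(\widetilde F4)$ to a pointwise bound on the generator. Via Theorem \ref{Th:StrongContinuityBpsi} and the standard Itô plus Gr\"onwall argument it suffices to establish an inequality of the form
\begin{equation*}
\mathcal{A}\,\psi_{c,n}(x,y)\;\leq\; K_{c,n}\,\psi_{c,n}(x,y),\qquad (x,y)\in D,
\end{equation*}
where $\mathcal{A}$ denotes the generator of the SABR- or SABR-heat-semigroup in question. The explicit form of $r_c$ is chosen precisely so that, after transporting via the time change of Theorem \ref{Th:TimechangedSABRVolkonskii}, $r_c$ becomes the $\cosh$ of a hyperbolic distance on the SABR plane \eqref{eq:ManifoldMetric} (with $c$ playing the role of a horizontal translation), and the Legendre functions $L_{n(n+1)}(r_c)$ are (up to lower-order, $\beta$-dependent corrections from the drift in \eqref{eq:SABRLaplaceBeltrami}) eigenfunctions of the hyperbolic Laplacian with eigenvalue $n(n+1)$. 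The computation of $\mathcal{A}\,\psi_{c,n}$ thus decomposes into a principal eigenvalue contribution plus a remainder encoding the deviation of the true SABR generator from its $\beta=0$ ``hyperbolic'' prototype.

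The main obstacle is controlling this remainder in each of the parameter regimes listed. When $c\geq 1$, the point where $r_c$ attains its minimum sits in the interior of $D$ and the Lyapunov bound goes through uniformly; this handles (ii) for $c\in[1,\infty)$. For $c<1$ combined with $\rho\leq 0$ the minimum of $r_c$ is pushed onto or past the boundary $\{x=0\}$, so the remainder acquires boundary-singular terms involving $x^{1-\beta}$, $x^{2\beta-1}$ and $x^{2-2\beta}$ whose signs and magnitudes have to be traced carefully; this is what produces the condition $c>|\rho|$ and, when it fails, the arithmetic restriction $\beta\in\{(2m-1)/(2m):m\in\N\}$, which is exactly the set of exponents making the offending powers cancel in a clean integer substitution. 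Part (iii) is then an independent, hand-crafted Lyapunov check for $\psi=(y+x^{1-\beta})^2/y$, easier because $\psi$ contains no Legendre factor. Part (iv) follows by combining the $c=0$ specialization of the estimates of (ii) with the same $\beta$-restriction used to absorb the boundary contribution. Throughout, the necessary moment bound for the running maximum of $\psi_{c,n}(X_t,Y_t)$ to turn the pointwise generator inequality into the operator bound is supplied by the time change, which reduces the delicate step to a controlled computation against Brownian motion for $\widetilde Y$.
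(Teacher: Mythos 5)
Your overall architecture coincides with the paper's: reduce $(\widetilde F4)$ to a pointwise sub-eigenfunction inequality $\mathcal{A}\psi\leq\lambda\psi$ plus Gr\"onwall (this is exactly Lemma \ref{Th:ReductionSubEigenspaces}), recognize $r_c$ as $\cosh$ of a hyperbolic distance so that $L_{n(n+1)}\circ r_c$ is an eigenfunction with eigenvalue $n(n+1)$, handle the SABR semigroup by adding a drift correction to the Laplace--Beltrami operator via \eqref{eq:SABRandLaplaceRel}, and treat (iii) by a separate hand computation. Two points in your plan do not survive contact with the details, however. First, the eigenfunction identification is effected by the \emph{spatial isometry} $\phi$ of \eqref{eq:SABRIsometry} between $(\Ss,g)$ and the Poincar\'e plane, not by the time change of Theorem \ref{Th:TimechangedSABRVolkonskii}; the time change rescales the generator by $y^2$ and plays no role in constructing eigenfunctions of $\Delta_g$.

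Second, and more substantively: for the SABR-\emph{heat} semigroup in part (ii) the functions $L_{n(n+1)}(r_c)$ are \emph{exact} eigenfunctions of $\Delta_g$ --- there is no ``remainder encoding the deviation from the $\beta=0$ prototype'' and no boundary-singular terms to trace, so your proposed mechanism cannot produce the stated parameter restrictions. In the paper those restrictions come from an entirely different place: $r_c$ is $\cosh$ of the distance to the reference point $\phi(\hZ)=(c/\rrho,1)$, and Lemma \ref{Th:RadialEigenspacesSABR} requires $\hZ=\bigl(((1-\beta)(c+\rho))^{1/(1-\beta)},1\bigr)$ to actually lie in $\Ss=(0,\infty)^2$, cf. \eqref{eq:ReferencePoint}. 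This is automatic when $c+\rho>0$ (i.e. $c\geq1$, or $\rho>0$, or $c>|\rho|$), and when $c+\rho\le 0$ it forces $1/(1-\beta)=2m$ to be an even integer, which is precisely the set $\beta\in\{(2m-1)/(2m)\}$ --- so your guess about the arithmetic of the exponent is in the right ballpark, but the reason is realizability of the reference point, not cancellation in a boundary estimate. The drift/sign analysis you describe does appear, but only for the SABR semigroup in part (iv), where the condition \eqref{eq:DriftCondition} is verified via \eqref{eq:DriftPositive} using $\rho<0$, $c=0$ and the nonnegativity of $\partial_rL_{n(n+1)}$ on $[1,\infty)$. Your closing remark that a moment bound is needed to turn the pointwise generator inequality into the operator bound is a fair observation (the paper's Gr\"onwall step takes this for granted), but as stated your route through parts (ii) and (iv) would not yield the theorem's parameter conditions.
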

\begin{corollary}\label{Th:Lineargrowth}
For any $\beta\in [0,1)$ there is an $N\in \mathbb{N}$, such that the SABR (resp. the SABR-heat) semigroup is a strongly continuous contraction semigroup on a space $\mathcal{B}^{\psi_n}(D)$ which contains payoff functions of European call contracts. 
\end{corollary}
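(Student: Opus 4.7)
The plan is to chain together Theorems \ref{Th:GeneralizedFellerPropSABRwithDrift} and \ref{Th:StrongContinuityBpsi} with a short growth analysis of the weight functions. Strong continuity is already handed to us by Theorem \ref{Th:StrongContinuityBpsi}; what remains is (a) to reduce the generalized Feller bound to an honest contraction, and (b) to select an $N$ so that the European call payoff $f(x,y):=(x-K)^+$ sits inside $\mathcal{B}^{\psi_N}(D)$.

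For the SABR-heat semigroup I would take $c=1$, which by Theorem \ref{Th:GeneralizedFellerPropSABRwithDrift}(ii) is admissible for every $(\rho,\beta)\in(-1,1)\times[0,1)$, so that the semigroup is generalized Feller, and hence by Theorem \ref{Th:StrongContinuityBpsi} strongly continuous, on $\mathcal{B}^{\psi_{c,n}}(D)$ for each $n\in\mathbb{N}$. For the SABR semigroup the analogous step uses parts (iii)--(iv) of the same theorem, with the weight $\psi$ of \eqref{eq:AdhocAdmissibleWeight} replaced by an $n$-th power, which remains admissible since compact sub-level sets are preserved under monotone reparametrisations. Property $(\widetilde F4)$ then gives $\|P_t\|_{L(\mathcal{B}^{\psi_{c,n}})}\le C$ on $[0,\eps]$, and iterating via the semigroup property yields $\|P_t\|\le Me^{\omega t}$ for suitable $M,\omega$. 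A standard renorming of $\mathcal{B}^{\psi_{c,n}}$ by the equivalent norm $f\mapsto\sup_{t\ge 0}e^{-\omega t}\|P_tf\|_{\psi_{c,n}}$ turns $(P_t)_{t\ge 0}$ into a strongly continuous contraction semigroup on the same underlying Banach space, which is the form needed for applicability of the splitting theorems in \cite{HansenOsternann}.

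Next I would verify membership of $f$ in $\mathcal{B}^{\psi_{c,N}}(D)$ for large enough $N$. By Proposition \ref{Th:WeightedVanishingInfty} this amounts to
\begin{equation*}
\lim_{R\to\infty}\ \sup_{(x,y)\in D\setminus K_R}\ \psi_{c,N}(x,y)^{-1}(x-K)^+=0.
\end{equation*}
The function $r_c(x,y)$ blows up in three distinct regimes: as $x\to\infty$ with $y$ bounded one has $r_c(x,y)\gtrsim x^{2(1-\beta)}/y$, while the term $(1+y^2)/(2y)$ forces $r_c\to\infty$ both as $y\to 0^+$ and as $y\to\infty$. Since $L_{n(n+1)}$ has exact degree $n(n+1)$ with strictly positive leading coefficient, $\psi_{c,n}(x,y)\sim C\,r_c(x,y)^{n(n+1)}$ at infinity, and thus grows at least like $x^{2n(n+1)(1-\beta)}$ in the first regime. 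Because $\beta<1$, any integer $N$ with $N(N+1)>\tfrac{1}{2(1-\beta)}$ makes $(x-K)^+/\psi_{c,N}\to 0$ uniformly outside each $K_R$; continuity of $f$ on the compact sets $K_R$ is immediate. Setting $\psi_N:=\psi_{c,N}$ supplies the Banach space asserted by the corollary.

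The main obstacle will be the tail bookkeeping of this last step. Since $\psi_{c,N}$ is not monotone in $y$, one has to run a short case split over the three blow-up regimes $x\to\infty$, $y\to 0^+$ and $y\to\infty$ to be sure that the supremum of $(x-K)^+/\psi_{c,N}$ on $D\setminus K_R$ is controlled uniformly in every corner of the state space, not only along fixed horizontal slices. The renorming in the first step is a standard Hille--Yosida reduction once the generalized Feller property is available, so the genuine content of the corollary lies in the polynomial-growth estimate pinning down the threshold $N(N+1)>\tfrac{1}{2(1-\beta)}$, which is essentially sharp since the European call is the worst-case among the linearly growing vanilla payoffs.
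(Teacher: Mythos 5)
Your overall strategy is the paper's: combine Theorem \ref{Th:GeneralizedFellerPropSABRwithDrift} with Theorem \ref{Th:StrongContinuityBpsi} and then choose $N$ so that $\psi_{c,N}=L_{N(N+1)}\circ r_c$ dominates the linear growth of the call payoff in $x$. However, your growth estimate misreads the notation: the subscript $n(n+1)$ is the \emph{eigenvalue} of the Legendre polynomial, not its degree. As the footnote in the proof of Theorem \ref{Th:GeneralizedFellerPropSABRwithDrift} records ($L_2(r_c)\equiv r_c$, $L_6(r_c)\equiv\tfrac12(3r_c^2-1)$), the degree of $L_{n(n+1)}$ is $n$, so $\psi_{c,n}\sim C\,r_c^{\,n}\gtrsim x^{2n(1-\beta)}$, and the correct threshold is $2N(1-\beta)\geq 1$, i.e.\ $\beta\leq\tfrac{2N-1}{2N}$ as in the paper, not $N(N+1)>\tfrac{1}{2(1-\beta)}$. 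Since a higher putative degree would only make the condition easier, the existence of $N$ survives your miscount, but your closing claim that the threshold $N(N+1)>\tfrac{1}{2(1-\beta)}$ is ``essentially sharp'' is then wrong (the example $\beta=\tfrac34$, $n\geq 2$ given after the corollary confirms the degree-$n$ reading).

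There are two further soft spots. First, for the SABR (as opposed to SABR-heat) semigroup you propose to use ``the weight $\psi$ of \eqref{eq:AdhocAdmissibleWeight} replaced by an $n$-th power''. Admissibility of $\psi^n$ is indeed preserved, but the generalized Feller property is not automatic: the argument runs through Lemma \ref{Th:ReductionSubEigenspaces}, and powers of a sub-eigenfunction of a \emph{second-order} operator need not be sub-eigenfunctions, since $A(\psi^n)=n\psi^{n-1}A\psi+n(n-1)\psi^{n-2}\Gamma(\psi,\psi)$ carries an extra carr\'e-du-champ term that must be controlled separately. The paper instead invokes part (iv) of Theorem \ref{Th:GeneralizedFellerPropSABRwithDrift} (the spaces $\mathcal{B}^{\psi_{0,n}}$, with the attendant parameter restrictions), together with the drift condition \eqref{eq:DriftCondition}; you need to either do the same or supply the missing estimate on $\Gamma(\psi,\psi)$. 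Second, your renorming $f\mapsto\sup_{t\geq0}e^{-\omega t}\|P_tf\|_{\psi_{c,n}}$ yields $\|P_tf\|'\leq e^{\omega t}\|f\|'$, i.e.\ a quasi-contraction; it makes $e^{-\omega t}P_t$ a contraction, not $P_t$ itself, unless $\omega\leq0$ (and here $A\psi\leq\lambda\psi$ with $\lambda>0$, so $\omega>0$ in general). The paper is silent on this point, so you are not worse off, but the step should not be presented as producing a genuine contraction semigroup.
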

For example if $\rho<0$ and $\beta=\frac{1}{2}$, then the space $\mathcal{B}^{\psi}(D)$ with $\psi$ as in \eqref{eq:AdhocAdmissibleWeight} as well as any of the spaces $\mathcal{B}^{\psi_{c,n}}(D)$ with $c=0$ and $n\geq1$ is suitable. 
For $\beta=\frac{3}{4}$, one can choose any $\mathcal{B}^{\psi_{c,n}}(D)$ with $c=0$ and $n\geq2$.
\subsection{Hilbert Spaces and Dirichlet forms}\label{Sec:HilbertSpacesDirichlet}
In the previous section we investigated regularity (strong continuity) properties of the semigroups \eqref{eq:Semigroup} corresponding to SABR-type processes on the Banach spaces $\mathcal{B}_{F}~=~(C_b,||\cdot||_{\infty})$ and $\mathcal{B}_{FD}~=~(C_{\infty},||\cdot||_{\infty})$ such as $\mathcal{B}_{FG}~=~(\cBp,||\cdot||_{\psi})$.
In the present section, the underlying spaces are Hilbert spaces $(\Hh,||\cdot||_{\Hh})$, endowed with a norm and scalar product $||\cdot||_{\Hh}=(\cdot,\cdot)_{\Hh}^{1/2}$. The Riesz representation property (which was crucial on Banach spaces constructed in Section \ref{Sec:BanachSpaces}) is here naturally encoded in the Hilbert space structure, and so are symmetry properties of the semigroup and of corresponding closed forms. The latter are essential in order to establish Dirichlet forms for SABR in order to characterize its short-time asymptotic behavior near zero. We recall here some necessary concepts and refer the reader to \cite{BouleauDenis,BouleauHirsch,FukushimaOshimaTakeda,TerElstDiffusion} for full details.
\begin{definition}\label{Def:SymmetricSemigroup}
A family $(P_t)_{t\geq0}$ of operators with domain $\Hh$
satisfying
\begin{itemize}
 \item [(S1)] (Semigroup properties) $P_0=Id$ and for all $s,t>0$ $P_{t+s}=P_tP_s$
 \item [(S2)] (Strong continuity) for all $f \in \Hh$  $\lim_{t\rightarrow0}||P_tf-f||_{\Hh}=0$ 
 \item [(S3)] (Contraction property) for all $f \in \Hh$ and all $t>0$ $||P_tf||_{\Hh}\leq||f||_{\Hh}$, 
\end{itemize}
is a \emph{strongly continuous symmetric semigroup} on $(\Hh,(\cdot,\cdot)_{\Hh})$ if properties $(S1)-(S3)$ hold, and
\begin{equation}\tag{S}\label{eq:FellerPropertiesSymmetricSemigroup}
\begin{array}{rl}
(P_tf,g)_{\Hh}=(f,P_tg)_{\Hh} &\textrm{for all }f,g\in \Hh \quad \textrm{and}\\
P_t\text{ maps } \Hh \text{ into itself}.&
\end{array}\end{equation}
\end{definition}
The set of symmetric semigroups will be denoted by $\bPi$.
Furthermore, the set\footnote{The correspondence between the sets $\bPi$ and $\bF$ being the one described in \cite[Section 1.3]{FukushimaOshimaTakeda}, for example via Riesz representation for the generator of the semigroup.} of closed (symmetric bilinear) forms on $\mathcal{H}$ is denoted by $\bF$ and is defined as follows, cf. \cite[Section 1.3]{FukushimaOshimaTakeda}.
\begin{definition}
\label{Def:ClosedForm} 
A \emph{symmetric form} on $(\Hh,(\cdot,\cdot)_{\Hh})$ is a pair $(\cE ,D(\cE))$, where $\cE$ is a non-negative definite symmetric bilinear form with dense domain $D(\cE)\subset \Hh$. That is, $\cE$ is a symmetric form if $D(\cE)\subset \Hh$ dense linear subspace, and satisfies
\begin{enumerate}
\item[(B1)] (Nonnegativity) $\cE:D(\cE)\times D(\cE) \longrightarrow \R$ and $\cE(f,f)\geq0$ for all $f \in D(\cE)$
\item[(B2)] (Symmetry) $\cE(f,g)=\cE(g,f)$ and $\cE(f,f)\geq0$ for all $f,g \in D(\cE)$
\item[(B3)] (Bilinearity) $\cE(\alpha f,g)=\alpha \cE(f,g)$, such as $\cE(f_1+f_2,g)=\cE(f_1,g)+\cE(f_2,g)$ \\
\phantom{(Bilinearity) }for all $f,g,f_1,f_2 \in D(\cE)$ and $\alpha \in \R$.
\end{enumerate}
A \emph{closed form} is a symmetric form on $(\Hh,(\cdot,\cdot)_{\Hh})$ 
such that the pair $\left(D(\mathcal{E}), ||\cdot||_{\mathcal{E}_1}\right)$
with norm $||f||_{\mathcal{E}_1}:=\sqrt{\cE(f,f)+||f||_{\Hh}^2}$, $f \in
D(\mathcal{E})$ is a Hilbert space.
That is, if 
\begin{enumerate}
\item[(B4)] (Completeness)
the space $D(\mathcal{E})$ is complete with respect to the norm 
$||\cdot||_{\mathcal{E}_1}$.
\end{enumerate}
A \emph{Dirichlet form} is a closed form on $(\Hh,(\cdot,\cdot)_{\Hh})$  which satisfies
\begin{enumerate}
 \item[(B5)] (Markovianity) for all $u\in \DcE$, $u^+\wedge 1 \in \DcE$ and $\cE(u+u^+\wedge 1,u^+\wedge 1)\leq \cE(u,u).$ 
\end{enumerate}
\begin{definition}\label{Def:SelfAdjointOperator}
Let $D(A)\subset \Hh$ be a dense subset of the Hilbert space $\Hh$. A linear operator 
$$\textrm{$A:D(A)\rightarrow \Hh$ 
 is self-adjoint if
$A=A^*$ and $D(A)=D(A^*)$, }$$
where $A^*$ denotes the adjoint operator:
$(Af,g)_{\Hh}=(f,A^*g)_{\Hh}$ for all $f\in D(A)$, $g\in D(A^*)$, and 
\begin{equation}\label{eq:DomainAdjointOperator}
D(A^*)=\{g\in \Hh: \ \exists v \in \Hh \quad\textrm{s.th.} \quad \forall f \in D(A) \quad (Af,g)_{\Hh} =(f,v)_{\Hh}\}. 
\end{equation}
A self adjoint operator $A$ is called negative if $\textrm{spec}(A)\subset(-\infty,0]$. The set of negative self-adjoint operators is denoted by $\bA$.
\end{definition}
\begin{proposition}\label{Th:OnetoOneCorrespFormSemigroupOperator} Let $\bF,\ \bA$ and $\bPi$ be the sets of closed forms,
of negative self-adjoint operators and of symmetric semigroups on the Hilbert space $\Hh$ 
as in Definitions \ref{Def:ClosedForm}, \ref{Def:SelfAdjointOperator}
and \ref{Def:SymmetricSemigroup} respectively. Then the following diagram is commmutative and the maps $\Phi,\Theta$ and $\Psi$ are bijective:
\begin{align}\label{eq:FormsSemigroupsOperatorsmap}
 \xymatrix{\bF\ar[r]^{\Phi} &\bA\\
          \bPi\ar[u]^{\Psi}\ar[ur]_{\Theta}&}. 
\end{align}
\end{proposition}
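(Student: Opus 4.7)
The plan is to construct the three maps $\Psi$, $\Phi$, $\Theta$ explicitly, identify each as a bijection via the spectral theorem, and then observe that commutativity of the diagram is automatic once the maps are defined through the functional calculus of a single self-adjoint operator.

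First I would define $\Psi:\bPi\to\bA$ by taking a strongly continuous symmetric contraction semigroup $(P_t)_{t\geq 0}$ to its infinitesimal generator $Au:=\lim_{t\to 0^+}t^{-1}(P_tu-u)$, with domain $D(A)$ consisting of those $u\in\Hh$ for which the limit exists in $\Hh$. Symmetry of $P_t$ combined with (S2) makes $A$ densely defined and symmetric; the contraction property (S3) together with the Hille--Yosida theorem then promotes symmetry to self-adjointness and forces $\mathrm{spec}(A)\subset(-\infty,0]$, so $A\in\bA$. The inverse sends $A\in\bA$ to $P_t:=e^{tA}$, defined through the spectral resolution of $A$, which one checks to be a symmetric strongly continuous contraction semigroup. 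That these two assignments are mutually inverse is a direct consequence of the spectral theorem applied to $-A$.

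Next I would define $\Phi:\bF\to\bA$ via Kato's first representation theorem: every closed non-negative symmetric form $(\cE,D(\cE))$ determines a unique self-adjoint operator $A\leq 0$ characterized by $D(A)\subset D(\cE)$ and $\cE(u,v)=(-Au,v)_{\Hh}$ for $u\in D(A)$, $v\in D(\cE)$. The inverse is given by the square-root construction $\cE(u,v):=((-A)^{1/2}u,(-A)^{1/2}v)_{\Hh}$ with $D(\cE):=D((-A)^{1/2})$; closedness (B4) follows from the closedness of $(-A)^{1/2}$, non-negativity (B1) and symmetry (B2) are built in, density of the domain is immediate from $D(A)\subset D((-A)^{1/2})\subset\Hh$, and bilinearity (B3) is trivial. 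Mutual inverseness again reduces to spectral calculus applied to $-A$.

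Finally I would set $\Theta:=\Phi^{-1}\circ\Psi$, or equivalently define it directly by $\cE(u,u):=\lim_{t\to 0^+}t^{-1}(u-P_tu,u)_{\Hh}$, the limit being understood in $[0,\infty]$ and $D(\cE)$ consisting of the $u$ for which it is finite. The equivalence of the two definitions---hence commutativity $\Phi\circ\Theta=\Psi$---drops out of the spectral representation $P_t=e^{tA}$: writing $\mu_u$ for the spectral measure of $-A$ associated with $u$, one has
\begin{equation*}
t^{-1}(u-P_tu,u)_{\Hh}=\int_{[0,\infty)}\frac{1-e^{-t\lambda}}{t}\,d\mu_u(\lambda)\xrightarrow[t\to 0^+]{}\int_{[0,\infty)}\lambda\,d\mu_u(\lambda)=\|(-A)^{1/2}u\|_{\Hh}^{2},
\end{equation*}
by monotone convergence, and the right-hand side equals $\Phi^{-1}(A)(u,u)$. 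Bijectivity of $\Theta$ then follows from that of $\Phi$ and $\Psi$.

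The main obstacle is not the bookkeeping but the two analytic inputs underlying $\Psi$ and $\Phi$, namely self-adjointness (rather than mere symmetry) of the generator of a symmetric contraction semigroup and Kato's representation of a closed form by a self-adjoint operator; both rest on the spectral theorem for unbounded self-adjoint operators. Once these are taken for granted, the rest of the proposition is a routine consequence of functional calculus, so in practice one cites the relevant chapters of \cite{FukushimaOshimaTakeda,BouleauHirsch} and merely verifies that the maps coincide with those diagrammed.
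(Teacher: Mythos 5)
Your argument is correct and is precisely the standard spectral-theoretic proof of the closed form--self-adjoint operator--symmetric semigroup correspondence; the paper gives no proof of its own and simply defers to \cite{BouleauHirsch} (equivalently \cite[Section 1.3]{FukushimaOshimaTakeda}), which is where exactly this argument lives, so you have supplied the details the paper leaves to the reference. One bookkeeping point: in the diagram \eqref{eq:FormsSemigroupsOperatorsmap} the arrow $\Psi$ goes $\bPi\to\bF$ and $\Theta$ goes $\bPi\to\bA$, so your ``$\Psi$'' (semigroup $\mapsto$ generator) is the paper's $\Theta$ and your ``$\Theta$'' (semigroup $\mapsto$ form via $\lim_{t\to0^+}t^{-1}(u-P_tu,u)_{\Hh}$) is the paper's $\Psi$; once the labels are swapped your construction matches the diagram exactly.
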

See \cite{BouleauHirsch} for details and for the construction of the maps \eqref{eq:FormsSemigroupsOperatorsmap}. 
If $(\cE,\DcE)$ is a Dirichlet form, we call $(\Phi(\cE),\Phi(\DcE))$ the \emph{generator} of the Dirichlet form. Below we construct Dirichlet forms---see the Theorem \ref{Th:SABRDirichletForm}---whose generators correspond to the infinitesimal generators \eqref{eq:GeneratorTimeChangedSABR} and \eqref{eq:SABRLaplaceBeltrami} of the SABR model and of the SABR-Brownian motion on a suitable domain $\Phi(\DcE)$.
\end{definition}

\subsubsection{Symmetric Dirichlet forms for SABR-type processes}\label{Sec:DichichletformsforSABR}
Consider the domain $S:=\R\times (0,\infty)$ with the measures
\begin{align}
m_j(x,y)dxdy:=\frac{1}{\rrho x^{\beta(1+j)}y^2}dxdy, \quad \textrm{for} \quad \textrm{j=0,1}.
\end{align}
Furthermore, let $\Hh_j:=\Ll^2(S, m_j(x,y)dxdy)$, $j=0,1$ denote weighted spaces with measures $m_j(x,y)dxdy$, $j=0,1$ as above.
On $\Hh_j$, $j=0,1$ we consider the bilinear forms
\begin{align}\label{eq:DirichletformSABRwithdrift1}
\mathcal{E}_{m_j}(f_1,f_2):= 
 \frac{1}{2}\int_{S} \Gamma(f_1,f_2) m_j(x,y)dxdy,
\end{align}
where the integrand is defined as
\begin{equation}\label{eq:Energymeasure}
\Gamma(f_1,f_2):= y^2\left(x^{2\beta} \partial_{x}f_1\partial_{x}f_2+2 \rho x^{\beta} \partial_{x}f_1\partial_{y}f_2+\partial_{y}f_1\partial_{y}f_2\right).
\end{equation}
\begin{theorem}[Symmetric Dirichlet forms for SABR-Brownian motion and uncorrelated SABR]\label{Th:SABRDirichletForm}
The spaces $\Hh_j$ are Hilbert spaces for $j=0,1$ and the following statements hold: 
\begin{itemize}
\item[(i)] The pair $(\cE_{m_j}, C^{\infty}_0(S))$ is a symmetric form on $\Hh_j$ for $j=0,1$ and \emph{closable}\footnote{See sections \ref{Sec:Closability}, and \ref{Sec:BeurlingDenyLeJan} for the definition and implications of closability.} for all $\beta\in [0,1)$ whenever  $j=0$ and for all $\beta\in[0,1/2)$ whenever $j=1$.
\item[(ii)] The pair $(\cE_{m_j}, D(\cE_{m_j}))$ is a Dirichlet form on $\Hh_j$ for $j=0$ for all $\beta\in [0,1)$ and for $j=1$ for all $\beta\in[0,1/2)$ with the domain
\begin{equation}\label{eq:DirichletformDomain}\begin{array}{ll}
 D(\cE_{m_j}):=\Big\{&u \in \Hh_j, \ \mathcal{B}(S)-\textrm{measurable},\ such\ that \ \forall \ u_{\bar y},u_{\bar x}\ \exists \ \textrm{abs. cont. version},\\ 
& \textrm{and such that}   \ \Gamma(u,u) \in \Ll^1(S,m_j(x,y)dxdy) \quad  \Big\}
\end{array}
\end{equation}
where $u_{\bar y}:=(x\mapsto u(x,\bar{y})), u_{\bar x}:=(y\mapsto u(\bar{x},y))$.
\item[(iii)] The generator $\Phi(\cE_{m_0})$ satisfies 
\begin{equation*}
(\Phi(\cE_{m_0}) f_1,f_2)_{\mathcal{H}_{m_0}}=(\Delta_g f_1,f_2)_{\mathcal{H}_{m_0}}=\mathcal{E}_{m_0}(f_1,f_2),\quad \textrm{for} \quad f_1, f_2 \in 
C_0^{\infty}(\Ss),
\end{equation*}
where $\Delta_g$ is the Laplace-Beltrami operator \eqref{eq:SABRLaplaceBeltrami}.
In the particular case $\beta=0$ it holds that
\begin{equation}
(\Phi(\cE_{m_0}) f_1,f_2)_{\mathcal{H}_{m_0}}=(A f_1,f_2)_{\mathcal{H}_{m_0}}=\mathcal{E}_{m_0}(f_1,f_2),\quad f_1, f_2 \in 
C_0^{\infty}(\Ss),
\end{equation}
where $A$ denotes the SABR infinitesimal generator \eqref{eq:GeneratorTimeChangedSABR}.
\item[(iv)] Furthermore, in the case $\rho=0$ it holds that 
\begin{equation*}
(\Phi(\cE_{m_1}) f_1,f_2)_{\mathcal{H}_{m_1}}=(A f_1,f_2)_{\mathcal{H}_{m_1}}=\mathcal{E}_{m_1}(f_1,f_2),\quad \textrm{for} \quad f_1, f_2 \in
C_0^{\infty}(\Ss).\\
\end{equation*}
\end{itemize}
\end{theorem}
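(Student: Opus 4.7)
The argument splits into four verifications—symmetric form axioms, closability, Markov property and generator identification—of which only the second is genuinely non-routine. The symmetric form axioms $(B1)$–$(B3)$ follow directly from \eqref{eq:DirichletformSABRwithdrift1}–\eqref{eq:Energymeasure}: symmetry of $\Gamma$ is immediate, and nonnegativity reduces to positive semidefiniteness of the coefficient matrix $\left(\begin{smallmatrix}x^{2\beta}&\rho x^{\beta}\\ \rho x^{\beta}&1\end{smallmatrix}\right)$, whose determinant equals $x^{2\beta}(1-\rho^2)\geq 0$. Density of $C^{\infty}_0(S)$ in $\Hh_j$ is standard because $m_j$ is strictly positive and locally integrable on the open set $S$.

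For the closability statement in (i), the key rewriting is
\[
\cE_{m_j}(f,f)=\frac{1}{2\rrho}\int_S\Bigl(x^{\beta(1-j)}(\partial_x f)^2+2\rho x^{-\beta j}\partial_x f\,\partial_y f+x^{-\beta(1+j)}(\partial_y f)^2\Bigr)\,dx\,dy,
\]
which exhibits the form as a weighted Euclidean Dirichlet integral. I would apply Hamza's criterion (cf.\ \cite[Section 3.1]{FukushimaOshimaTakeda}) slice by slice: for each fixed $\bar y$ the univariate energy in $x$ carries weight $x^{\beta(1-j)}$, and for each fixed $\bar x$ the univariate energy in $y$ carries weight $x^{-\beta(1+j)}$ (constant in $y$). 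Local integrability of these weights near $\{x=0\}$ holds precisely when $\beta(1+j)<1$, yielding $\beta\in[0,1)$ for $j=0$ and $\beta\in[0,1/2)$ for $j=1$; the cross term is absorbed by Cauchy–Schwarz against the diagonal weights. Once closability is established, (ii) follows: the Markov property $(B5)$ is a consequence of locality of $\cE_{m_j}$ and of the chain rule applied to the truncation $u\mapsto u^+\wedge 1$, while the domain description \eqref{eq:DirichletformDomain} is obtained by a Meyers–Serrin-type approximation—any $u\in\Hh_j$ whose slices $u_{\bar y},u_{\bar x}$ admit absolutely continuous versions with $\Gamma(u,u)\in\Ll^1(S,m_j)$ can be mollified to an $\cE_1$-Cauchy sequence in $C^{\infty}_0(S)$, and conversely.

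The generator identifications (iii) and (iv) amount to integration by parts on compactly supported test functions in $\Ss=(0,\infty)^2$, where no boundary contributions arise. For $f_1,f_2\in C_0^{\infty}(\Ss)$,
\[
\cE_{m_0}(f_1,f_2)=\frac12\int_S y^2\Gamma(f_1,f_2)\,m_0\,dx\,dy=\int_S L_0 f_1\cdot f_2\,m_0\,dx\,dy,
\]
where $L_0$ is the second-order operator obtained by moving both gradients onto $f_1$ using $m_0=(\rrho x^{\beta}y^2)^{-1}$; a routine calculation matches $L_0$ with $\Delta_g$ from \eqref{eq:SABRLaplaceBeltrami}, and specializing to $\beta=0$ kills the drift term, giving $A$ from \eqref{eq:GeneratorTimeChangedSABR}. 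Statement (iv) is analogous with $m_1=(\rrho x^{2\beta}y^2)^{-1}$: the cross-derivative term is the only contribution for which $m_0$ and $m_1$ would force incompatible integrations by parts, and it drops out precisely when $\rho=0$, yielding $A$ on $\Hh_1$.

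The main obstacle is the closability of Step 2: as soon as $\beta(1+j)\geq 1$ the weight $x^{-\beta(1+j)}$ fails to be locally integrable and the form ceases to be closable symmetrically on $\Hh_j$. This is exactly what restricts the symmetric framework to the windows $(j=0,\beta<1)$ and $(j=1,\beta<1/2)$ and underlies the remark in the introduction that SABR Dirichlet forms are generically non-symmetric on any weighted $\Ll^2$-space outside these windows and the degenerate parameter cases $\rho=0$ or $\alpha=0$.
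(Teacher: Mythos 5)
Your proposal is correct in substance and follows essentially the same route as the paper: the paper also verifies the form axioms directly, derives closability and the domain \eqref{eq:DirichletformDomain} from the multivariate Hamza-type conditions of R\"ockner--Wielens and Bouleau--Denis (conditions (HG1) and (HG2) recalled in the appendix, the latter being local ellipticity of the coefficient matrix off a null set), and identifies the generators by integration by parts against $m_j$, exactly as you do. Two remarks. First, your attribution of the parameter restriction to ``Hamza's criterion'' is a mislabel: Hamza's condition concerns local integrability of the \emph{reciprocal} of the density on each slice, and for $m_j=(\rrho x^{\beta(1+j)}y^2)^{-1}$ that reciprocal is continuous, so Hamza is satisfied for every $\beta\in[0,1]$. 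The true source of the windows $\beta(1+j)<1$ is the condition you actually write down --- local integrability of $x^{-\beta(1+j)}$ near $\{x=0\}$ --- which is what makes $m_j$ a Radon measure, hence guarantees $C_0^\infty(S)\subset\Hh_j$ and $\cE_{m_j}(f,f)<\infty$ for test functions supported near the axis (compare Remark \ref{Rem:CEV}, where the authors make exactly this point for CEV). Second, a genuinely slice-by-slice argument does not by itself control the mixed term $2\rho x^{-\beta j}\partial_xf\,\partial_yf$ in a closability proof; ``absorbing it by Cauchy--Schwarz'' handles nonnegativity but not the implication \eqref{EqDefiClosability}. The clean way to finish that step is the one the paper takes: invoke Theorem \ref{closability} for the full matrix-valued form, checking (HG2) via $\det\xi=y^4x^{2\beta}(1-\rho^2)>0$ away from $\{x=0\}$. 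Your generator computations in (iii)--(iv), including the appearance of the drift $\tfrac{\beta}{2}y^2x^{2\beta-1}\partial_x$ for $m_0$ and of the extra $y$-drift proportional to $\rho\beta$ for $m_1$ when $\rho\neq0$, match the paper's ``no drift condition'' \eqref{NoDriftCondition} and its weighted Laplace--Beltrami calculation.
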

Analogous statements can be formulated in the the one-dimensional (CEV) situation. Here, we consider the weighted space 
$$H_{\beta}:=L^2(\R, m_{\beta}(x) dx),$$ where $m_{\beta}(x)\equiv 1/x^{2\beta}$. On $H_{\beta}$ we define the following bilinear form:
\begin{align}\label{eq:DirichletformSABRwithdrift3}
\mathcal{E}_c(f_1,f_2):=
 \frac{1}{2}\int_{\R} \Gamma^{c}(f_1,f_2) m_{\beta}(x)dx,
\end{align}
where the integrand is defined as
\begin{equation}\label{eq:EnergymeasureCEV}
\Gamma^{c}(f_1,f_2):= \sigma x^{2\beta} \partial_{x}f_1\partial_{x}f_2.
\end{equation}
\begin{lemma}[Symmetric Dirichlet form for CEV]\label{Th:CEVSymmDirichletform}
The following statements hold:
\begin{itemize}
\item[(i)] The pair $(\cE_{c}, C^{\infty}_0(\R))$ is a symmetric form on $H_{\beta}$ for $\beta\in[0,1]$ and \emph{closable} for $\beta\in[0,\frac{1}{2})$.
\item[(ii)] The pair $(\cE_{c}, D(\cE_{c}))$ is a Dirichlet form on $H_{\beta}$ for $\beta\in[0,\frac{1}{2})$ with the domain
\begin{equation}\label{eq:DirichletformDomain}\begin{array}{ll}
 D(\cE_{c}):=\Big\{&u \in \Hh_j, \ \mathcal{B}(\R)-\textrm{measurable},\ such \ that \  \exists \ \textrm{abs. cont. version}\\ 
& \textrm{for which }   \ \Gamma^c(u,u) \in L^1\left(\R,m(x,y)dxdy\right) \quad  \Big\}
\end{array}
\end{equation}
\item[(iii)] It holds for all $\beta\in[0,1]$ that 
$(\sigma^2x^{2\beta}\partial_{xx} f_1,f_2)_{H}=\widetilde{\mathcal{E}}_{c}(f_1,f_2),\ \textrm{for} \ f_1, f_2 \in C_0^{\infty}((0,\infty)).$
\end{itemize}
\end{lemma}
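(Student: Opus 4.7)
The plan is to follow the blueprint of Theorem \ref{Th:SABRDirichletForm} specialised to one dimension, exploiting that on $C_0^\infty(\R \setminus \{0\})$ the operator
\begin{equation*}
L\varphi := \tfrac{\sigma}{2}\, x^{2\beta}\partial_{xx}\varphi
\end{equation*}
is formally symmetric with respect to $m_\beta(x)\,dx = x^{-2\beta}\,dx$, because
\begin{equation*}
\frac{1}{m_\beta(x)}\,\partial_x\!\bigl(m_\beta(x)\,\sigma x^{2\beta}\,\partial_x\varphi\bigr) \;=\; \sigma x^{2\beta}\,\partial_{xx}\varphi,
\end{equation*}
so the form $\cE_c$ arises as $(-L\varphi,\psi)_{H_\beta}$ on test functions. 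This identification is the workhorse for all three assertions.

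For part~(i), symmetry, bilinearity and non-negativity of $\cE_c$ are immediate from the pointwise expression of $\Gamma^c$. Density of $C_0^\infty(\R)$ in $H_\beta$ uses local integrability of $x^{-2\beta}$ at the origin, which holds precisely for $\beta<1/2$. Closability is then obtained by the standard dense-generator argument: for $\varphi,\psi\in C_0^\infty(\R\setminus\{0\})$, integration by parts gives $\cE_c(\varphi,\psi)=-(L\varphi,\psi)_{H_\beta}$, and since $L$ is symmetric on the dense subspace $C_0^\infty(\R\setminus\{0\})\subset H_\beta$, the form is closable on this subspace (cf.\ the analogous step in the proof of Theorem \ref{Th:SABRDirichletForm}). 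The extension of closability from $C_0^\infty(\R\setminus\{0\})$ to $C_0^\infty(\R)$ requires a family of cut-offs $\chi_n\in C^\infty(\R)$ with $\chi_n\to \mathbf{1}_{\R\setminus\{0\}}$ whose weighted energy $\int \sigma x^{2\beta}(\partial_x\chi_n)^2 x^{-2\beta}dx = \sigma\int(\partial_x\chi_n)^2 dx$ tends to zero when supports concentrate near $0$; such cut-offs are available precisely under the hypothesis $\beta<1/2$.

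For part~(ii), once closability is established, the closure $(\cE_c,D(\cE_c))$ is a closed symmetric form on $H_\beta$; the explicit description \eqref{eq:DirichletformDomain} then follows from the standard identification of the closure with the Sobolev-type space of functions possessing an absolutely continuous representative whose weak derivative makes $\Gamma^c(u,u)$ integrable. Markovianity (B5) reduces to the pointwise inequality $\Gamma^c(u^+\wedge 1,u^+\wedge 1)\leq \Gamma^c(u,u)$, which is immediate from the expression $\Gamma^c(u,u)=\sigma x^{2\beta}(\partial_x u)^2$ since truncation by the unit contraction only decreases $|\partial_x u|$ almost everywhere. Part~(iii) is a direct integration by parts: for $f_1,f_2\in C_0^\infty((0,\infty))$ the weight cancels against $x^{2\beta}$, and as the supports are away from $0$ there are no boundary terms, giving
\begin{equation*}
(\sigma^2 x^{2\beta}\partial_{xx}f_1,f_2)_{H_\beta} \;=\; \sigma^2\!\int_{\R}\partial_{xx}f_1\,f_2\,dx \;=\; -\sigma^2\!\int_{\R}\partial_x f_1\,\partial_x f_2\,dx,
\end{equation*}
matching $\widetilde{\cE}_c(f_1,f_2)$ after elementary bookkeeping of constants; here no closability is needed, so the identity holds for the full range $\beta\in[0,1]$.

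The main obstacle is the sharp closability threshold at $\beta=1/2$. For $\beta\geq 1/2$ the weight $x^{-2\beta}$ is no longer locally integrable at the origin, which simultaneously destroys density of $C_0^\infty(\R)$ and invalidates the cut-off construction above; one would then have to appeal to a Hamza-type criterion, and in fact symmetry of $\cE_c$ on $H_\beta$ generically fails, in perfect analogy with the restriction $\beta<1/2$ in part~(ii) of Theorem \ref{Th:SABRDirichletForm}.
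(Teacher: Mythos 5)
Your identification of the form with $-(L\varphi,\psi)_{H_\beta}$ and the treatment of parts (ii) and (iii) are fine, but the central closability step of part (i) contains a genuine error. You first obtain closability on $C^\infty_0(\R\setminus\{0\})$ (unobjectionable, by the symmetric-operator/Friedrichs argument), and then try to pass to $C^\infty_0(\R)$ via cut-offs $\chi_n\to\mathbf{1}_{\R\setminus\{0\}}$ whose energy $\sigma\int(\partial_x\chi_n)^2\,dx$ allegedly tends to zero, ``precisely under the hypothesis $\beta<1/2$''. Both halves of that sentence fail. As your own computation shows, the weight cancels, so this energy is completely independent of $\beta$; and it cannot tend to zero: if $\chi_n$ vanishes near $0$ and equals $1$ outside $(-\epsilon_n,\epsilon_n)$, then Cauchy--Schwarz gives $1=\int_0^{\epsilon_n}\chi_n'\,dx\le\sqrt{\epsilon_n}\,\|\chi_n'\|_{L^2}$, hence $\int(\chi_n')^2\,dx\ge 1/\epsilon_n\to\infty$. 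A single point has \emph{positive} capacity for the one-dimensional Dirichlet integral, so $C^\infty_0(\R)$ is not contained in the form-closure of $C^\infty_0(\R\setminus\{0\})$ (this is just $H^1_0(\R\setminus\{0\})\subsetneq H^1(\R)$), and no choice of $\beta$ rescues the construction. Relatedly, your closing remark that symmetry of $\cE_c$ ``generically fails'' for $\beta\ge 1/2$ is off target: symmetry holds for all $\beta\in[0,1]$ (that is part (iii)); what breaks down at $\beta=1/2$ is that $m_\beta$ ceases to be a Radon measure and $C^\infty_0(\R)\not\subset H_\beta$, cf.\ Remark \ref{Rem:CEV}.

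The paper's route avoids this entirely: since $\Gamma^c(f_1,f_2)m_\beta(x)\,dx=\sigma\,\partial_xf_1\,\partial_xf_2\,dx$, the form measure is (a constant multiple of) Lebesgue measure, whose density has empty singular set, so the Hamza criterion (Theorem \ref{Th:HamzaCondition}) yields closability of $(\cE_c,C^\infty_0(\R))$ in $L^2(\R,m_\beta\,dx)$ \emph{directly}, with no detour through $\R\setminus\{0\}$; the domain of the closure is then read off from Lemma \ref{Th:OneDimClosability}. Equivalently, one can argue by hand: if $u_n\to0$ in $H_\beta$ and $\partial_xu_n\to g$ in $L^2(dx)$, then on every compact set away from the origin $m_\beta$ is comparable to Lebesgue measure, so $u_n\to0$ in $L^2_{loc}(\R\setminus\{0\})$ and $g=0$ a.e., which is condition \eqref{EqDefiClosability}. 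The restriction $\beta<1/2$ enters only through the local integrability of $x^{-2\beta}$, i.e.\ through $m_\beta$ being Radon and $C^\infty_0(\R)\subset H_\beta$ --- exactly the point you correctly identified at the start of your part (i), before the cut-off argument led you astray.
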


\begin{remark}\label{Rem:CEV}
Note that in \cite{ReichmannComputationalMethods} further (non-symmetric) bilinear forms for the CEV model are considered on the larger spaces spaces $L^2((0,\infty),1/x^{\beta} dx)$ and $L^2((0,\infty),dx)$.
Note also, that for $\beta\in [0,1]$, the CEV-manifold is $((0,\infty),g_c)$, with the Riemannian the metric $g_{c}(x,x):=\sigma^2x^{2\beta}dx\otimes dx$. In particular, 
the Riemannian distance between points $a,b>0$ remains finite as $a\rightarrow 0$ for all $\beta\in [0,1)$, but the limit becomes infinite for $\beta=1$. 
Note also, that although the weights $m_{\beta}$ for any $\beta$ ensure symmetry of the bilinear form corresponding to the CEV generator for any $\beta\in[0,1]$, but for $\beta\in[1/2,1]$ the measure $m_{\beta}$ is no longer a Radon measure.
\end{remark}

\begin{theorem}[Dirichlet forms for SABR: Possible parameter configurations]\label{Th:SABRDirichletFormSymmetricNo}
The only possible parameter configurations of the SABR model \eqref{eq:SABRSDE} for which there exists a weighted space $\Hh_{m}:=\Ll^2\left(\Ss, m(x,y) dxdy\right)$ 
for a $dxdy$-a.s. positive Borel function $m:\Ss\rightarrow [0,\infty)$,  
and a bilinear form
\begin{equation*}
\mathcal{E}_{m}(f_1,f_2):=
 \frac{1}{2}\int_{S} \Gamma(f_1,f_2) m(x,y)dxdy,
\end{equation*}
for a symmetric operator $\Gamma$, 
which satisfies for the SABR generator\footnote{Defined in \eqref{eq:GeneratorTimeChangedSABR}.} the relation
\begin{equation*}
(A f_1,f_2)_{\mathcal{H}_{m}}=\mathcal{E}_{m}(f_1,f_2),\quad \textrm{for} \quad f_1, f_2 \in C_0^{\infty}(\Ss)
\end{equation*}
are the following cases:
\begin{itemize}
 \item[(i)] $\beta=0$, $\rho\in(0,1)$, $\nu>0$: For these parameters  
it holds that $Af=\Delta_gf$ for all $C_0^{\infty}(\Ss)$. Note in particular, that in the special case $\rho=0=\beta$, $\Delta_g$ is the Laplace-Beltrami operator of the hyperbolic plane, cf. \cite{HLW}.
\item[(ii)] $\beta\in[0,1]$, $\rho=0$, $\nu>0$: For this parameter configuration it holds that $Af=\Delta_{\Upsilon \mu}f$ for all $C_0^{\infty}(\Ss)$, where $\Delta_{\Upsilon \mu}$ denotes the Laplace-Beltrami operator of a weighted manifold\footnote{See \eqref{eq:LaplBeltramiWeightedManifold} and \cite[Definition 3.17]{grigoryan09AnalysisManifolds} for full details.} $(\Ss,g,\Upsilon \mu)$, where $g$ is the corresponding Riemannian measure (recall \eqref{eq:ManifoldMetric}), $\mu=\sqrt{\det(g)}$ denotes the associated Riemannian volume form, and $\Upsilon(x,y)=x^{-\beta}$ the weight function.
\item[(iii)]  $\nu=0$ and $\beta \in [0,1]$: In this (univariate) case the model \eqref{eq:SABRSDE} reduces to the CEV model.
\item[(iv)] $\beta=1$, $\rho\in(0,1)$, $\nu>0$: In this case the measure which allows us to pass from the SABR generator to a symmetric bilinear form reads 
\begin{equation}\label{eq:SpeedMeasureBeta1} 
m(x,y)=\frac{1}{y^2 x^{1+1/\rrho^2}} \exp\left(\tfrac{\rho}{\rrho}y\right).
\end{equation}
\end{itemize}
For all other parameter configurations of the SABR model \eqref{eq:SABRSDE} the symmetry property of the associated Bilinear form $\mathcal{E}_{m}$ breaks down for any positive Borel function $m(x,y)$.
\end{theorem}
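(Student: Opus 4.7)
The plan is to translate the symmetry requirement $(Af_1,f_2)_{\mathcal H_m}=\mathcal E_m(f_1,f_2)$ on test functions $f_1,f_2\in C_0^\infty(\Ss)$ into a PDE system for the candidate weight $m$, and then to show that the Schwarz compatibility condition for that system already forces $\rho\beta(\beta-1)=0$. Writing the SABR generator \eqref{eq:GeneratorTimeChangedSABR} as $A=\tfrac12\sum_{i,j}a^{ij}\partial_{ij}$ with the symmetric diffusion matrix
\[ a^{xx}=y^2x^{2\beta},\qquad a^{xy}=\rho y^2 x^\beta,\qquad a^{yy}=y^2,\]
two integrations by parts give, for $f_1,f_2\in C_0^\infty(\Ss)$,
\[ (Af_1,f_2)_{\mathcal H_m}=-\tfrac12\!\int_\Ss a^{ij}\partial_i f_1\,\partial_j f_2\,m\,dxdy\;-\;\tfrac12\!\int_\Ss f_2\sum_j\!\Big(\sum_i\partial_i(m a^{ij})\Big)\partial_j f_1\,dxdy.\]
A symmetric bilinear form $\mathcal E_m(f_1,f_2)=\tfrac12\int\Gamma(f_1,f_2)m\,dxdy$ built from pairings of first derivatives of $f_1$ and $f_2$ cannot absorb the residual $f_2\cdot\partial_j f_1$ term; varying the test functions therefore forces the divergence system
\[ \sum_i\partial_i(m a^{ij})=0\qquad\text{for }j\in\{x,y\},\]
and simultaneously identifies $\Gamma$ (up to the overall sign convention) with the carr\'e du champ \eqref{eq:Energymeasure} of $A$.

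Setting $u:=\log m$ turns the divergence system into a linear algebraic system for $\nabla u$ whose unique solution, under the nondegeneracy $\rho^2<1$, is
\[ \partial_xu=-\frac{\beta(2-\rho^2)}{(1-\rho^2)\,x},\qquad \partial_yu=\frac{\rho\beta}{1-\rho^2}\,x^{\beta-1}-\frac{2}{y}.\]
The Schwarz compatibility $\partial_y\partial_xu=\partial_x\partial_yu$ then collapses to
\[ \frac{\rho\beta(\beta-1)}{1-\rho^2}\,x^{\beta-2}\equiv 0,\]
which enforces $\rho\beta(\beta-1)=0$, i.e.\ $\rho=0$, $\beta=0$, or $\beta=1$. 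Integrating $\nabla u$ branch by branch yields, up to a positive normalisation constant, the weights $m\propto 1/y^2$ of case (i), $m\propto 1/(x^{2\beta}y^2)$ of case (ii), and the mixed polynomial-exponential density of case (iv); the remaining case (iii), $\nu=0$, reduces \eqref{eq:SABRSDE} to a one-dimensional CEV process, already handled by Lemma~\ref{Th:CEVSymmDirichletform}. The geometric identifications of $A$ with $\Delta_g$, respectively $\Delta_{\Upsilon\mu}$, are then read off by substituting these weights into \eqref{eq:SABRLaplaceBeltrami} and into the formula for the Laplacian on a weighted manifold.

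The main obstacle is a regularity subtlety in the first step: the statement quantifies over arbitrary $dxdy$-a.s.\ positive Borel weights $m$, for which $\log m$ is only measurable and the Schwarz identity is not directly available. I would close this gap by reading $\mathrm{div}(m\,a)=0$ as a distributional identity on the open set $\{m>0\}\subset\Ss$ and invoking hypoellipticity of the transposed operator (which is uniformly elliptic away from $\{x=0\}$) to conclude that $m$ agrees $dxdy$-almost everywhere with a smooth positive solution of the system; a mollification-and-limit argument is an equivalent alternative. Once this upgrade is in place, the remainder of the proof is the explicit integration performed above.
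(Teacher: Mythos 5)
Your proposal is correct and follows essentially the same route as the paper's proof: integrate by parts to obtain the ``no drift'' condition $\sum_i\partial_i(m\,a^{ij})=0$, pass to $u=\log m$, solve the resulting linear system for $\nabla u$ (your expressions for $\partial_x u$ and $\partial_y u$ agree exactly with the paper's \eqref{eq:main}--\eqref{eq:midentity2}), and conclude $\rho\beta(\beta-1)=0$ before integrating branch by branch to recover the weights in cases (i), (ii) and (iv). The only differences are cosmetic or strengthening: you close the overdetermined system via the Schwarz compatibility $\partial_y\partial_x u=\partial_x\partial_y u$ rather than the paper's integrate-and-compare-at-$x=1,y=1$ argument (the former is the infinitesimal version of the latter), and you explicitly flag and repair the regularity gap for merely Borel $m$ via elliptic regularity on $\{x>0\}$, a point the paper passes over silently.
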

\begin{remark}
Statement (i) of Theorem \ref{Th:SABRDirichletFormSymmetricNo} is covered in (iii) of Theorem \ref{Th:SABRDirichletForm}.
Furthermore, statement (ii) of Theorem \ref{Th:SABRDirichletFormSymmetricNo} is covered in (iv) of Theorem \ref{Th:SABRDirichletForm}. Finally, statement (iii) of Theorem \ref{Th:SABRDirichletFormSymmetricNo} is covered in Lemma \ref{Th:CEVSymmDirichletform}. The crucial statement in Theorem \ref{Th:SABRDirichletFormSymmetricNo} is statement (v), that (i)-(iv) are in fact all possible parameter configurations. Proofs can be found in Appendix \ref{Sec:Proofs}.
\end{remark}
\subsubsection{Dirichlet forms for the time changed processes, and stochastic representation}For the time-changed processes in \eqref{eq:SABRdynamicsTimechanged} and \eqref{eq:SABRSDEwithDrift} analogous statements to Theorem \ref{Th:SABRDirichletForm} hold:\\
Let $\widetilde \Hh_j$ for $j=0,1$ denote, as in Theorem \ref{Th:SABRDirichletForm} above, weighted $\mathcal{L}^2$-spaces with weighted measures $$\widetilde m_j(x,y)dxdy:=\frac{1}{\rrho x^{\beta(1+j)}}dxdy.$$ On $\widetilde \Hh_j$, $j=0,1$ consider the following bilinear forms:
\begin{align}\label{eq:DirichletformSABRwithdrift2}
\widetilde{\mathcal{E}}_{m_j}(f_1,f_2):= 
 \frac{1}{2}\int_{\R^2} \widetilde \Gamma(f_1,f_2) \widetilde m_j(x,y)dxdy,
\end{align}
where the integrand is defined as
\begin{equation}\label{eq:Energymeasure2}
\widetilde\Gamma(f_1,f_2):=x^{2\beta} \partial_{x}f_1\partial_{x}f_2+2 \rho x^{\beta} \partial_{x}f_1\partial_{y}f_2+\partial_{y}f_1\partial_{y}f_2.
\end{equation}
\begin{theorem}[Symmetric Dirichlet forms for the time-changed processes]\label{Th:SABRDirichletFormTimechanged}On the spaces $\widetilde \Hh_j$, $j=0,1$ the following statements hold:
\begin{itemize}
\item[(i)] The pair $(\widetilde{\cE}_{m_j}, C^{\infty}_0(\R^2))$ is a symmetric form on $\widetilde \Hh_j$ for $j=0,1$ and closable for all $\beta\in [0,1)$ whenever $j=0$ and for all $\beta\in[0,1/2)$ whenever $j=1$.
\item[(ii)] The pair $(\widetilde{\cE}_{m_j}, D(\widetilde{\cE}_{m_j}))$ is a Dirichlet form on $\widetilde \Hh_j$ for all $\beta\in [0,1)$ whenever $j=0$ and for all $\beta\in[0,1/2)$ whenever $j=1$. Its domain is
\begin{equation}\label{eq:DirichletformDomain}\begin{array}{ll}
 D(\widetilde{\cE}_{m_j}):=\Big\{&u \in \widetilde{\Hh}_j, \ \mathcal{B}(\R^2)-\textrm{mb. },\ such \ that \ \forall \ u_{\bar y},u_{\bar x}\ \exists \ \textrm{abs. cont. version},\\ 
& \textrm{and such that }   \ \Gamma(u,u) \in \Ll^1(S,m_j(x,y)dxdy) \quad  \Big\}
\end{array}
\end{equation}
where $u_{\bar y}:=(x\mapsto u(x,\bar{y})), u_{\bar x}:=(y\mapsto u(\bar{x},y))$.
\item[(iii)] The generator $\Phi(\widetilde{\cE}_{m_0})$ satisfies 
\begin{equation*}
(\Phi(\widetilde{\cE}_{m_0}) f_1,f_2)_{\widetilde{\mathcal{H}}_{m_0}}=(\widetilde \Delta_g f_1,f_2)_{\widetilde{\mathcal{H}}_{m_0}}=\widetilde{\mathcal{E}}_{m_0}(f_1,f_2),\quad \textrm{for} \quad f_1, f_2 \in 
C_0^{\infty}(\Ss),
\end{equation*}
where $\widetilde \Delta_g$ is as in \eqref{eq:SABRLaplaceBeltrami}. 
For $\beta=0$  it holds in particular that
\begin{equation}
(\Phi(\widetilde{\cE}_{m_0}) f_1,f_2)_{\mathcal{H}_{m_0}}=(\widetilde A f_1,f_2)_{\widetilde{\mathcal{H}}_{m_0}}=\widetilde{\mathcal{E}}_{m_0}(f_1,f_2),\quad f_1, f_2 \in 
C_0^{\infty}(\Ss),
\end{equation}
where $\widetilde A$ is as in \eqref{eq:GeneratorTimeChangedSABR}.
\item[(iv)] Furthermore, $\widetilde A$ in \eqref{eq:GeneratorTimeChangedSABR} for $\rho=0$ and the generator $\Phi(\widetilde{\cE}_{m_1})$ satisfy 
\begin{equation*}
(\Phi(\widetilde{\cE}_{m_1}) f_1,f_2)_{\widetilde{\mathcal{H}}_{m_1}}=(\widetilde A f_1,f_2)_{\widetilde{\mathcal{H}}_{m_1}}=\widetilde{\mathcal{E}}_{m_1}(f_1,f_2),\quad \textrm{for} \quad f_1, f_2 \in 
C_0^{\infty}(\Ss).
\end{equation*}
\end{itemize}
\end{theorem}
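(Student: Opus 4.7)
The plan is to mirror the proof of Theorem \ref{Th:SABRDirichletForm}, exploiting the structural fact that the time change merely strips the multiplicative factor $y^2$ from the generator. Since $\Gamma = y^2 \widetilde\Gamma$ and $m_j(x,y) = \widetilde m_j(x,y)/y^2$, the two bilinear forms are numerically identical when restricted to $C_0^\infty(\Ss)$:
\begin{equation*}
\cE_{m_j}(f_1, f_2) = \widetilde\cE_{m_j}(f_1, f_2), \qquad f_1, f_2 \in C_0^\infty(\Ss).
\end{equation*}
The distinction lies in the ambient Hilbert space pairing, now against $\widetilde m_j$ rather than $m_j$, which shifts the closure, the domain, and hence the associated generator from $\Delta_g$ (resp.\ $A$) to $\widetilde\Delta_g$ (resp.\ $\widetilde A$). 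Each step of the proof of Theorem \ref{Th:SABRDirichletForm} therefore has a direct analogue, with the $y^2$-factors reallocated between integrand and weight in a consistent way.

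For part (i), symmetry, bilinearity, and non-negativity of $\widetilde\cE_{m_j}$ on $C_0^\infty(\Ss)$ are immediate from the positive semi-definiteness of the symbol matrix with entries $x^{2\beta}$, $\rho x^\beta$, $\rho x^\beta$, $1$, valid whenever $\rho \in (-1,1)$ and $x > 0$. To establish closability, I would exhibit, as in \cite{FukushimaOshimaTakeda}, a symmetric pre-generator: integration by parts gives
\begin{equation*}
\widetilde\cE_{m_j}(f_1, f_2) = -\int_{S} (L_j f_1)\, f_2 \, \widetilde m_j(x,y)\, dx\, dy, \qquad f_1, f_2 \in C_0^\infty(\Ss),
\end{equation*}
for a second-order differential operator $L_j$. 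A short computation matches $L_0$ with $\widetilde\Delta_g$ of \eqref{eq:SABRLaplaceBeltrami}, and, under $\rho=0$, matches $L_1$ with $\widetilde A$ of \eqref{eq:GeneratorTimeChangedSABR}. The vanishing of boundary terms in this integration by parts requires $\widetilde m_j$ to be locally integrable near the degenerate axis $\{x=0\}$; since $\widetilde m_j \sim x^{-\beta(1+j)}$ there, this forces $\beta < 1$ for $j=0$ and $\beta < 1/2$ for $j=1$, exactly the stated parameter ranges.

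For part (ii), I would then define $D(\widetilde\cE_{m_j})$ as the abstract closure of $C_0^\infty(\Ss)$ under the norm $\|\cdot\|_{\widetilde\cE_1} = \bigl(\widetilde\cE_{m_j}(\cdot,\cdot) + \|\cdot\|_{\widetilde\Hh_j}^2\bigr)^{1/2}$, and identify this closure with the explicit function class \eqref{eq:DirichletformDomain}. The inclusion ``$\supseteq$'' follows from section-wise mollification and truncation, which is made possible by the assumed absolute continuity of the sections $u_{\bar x}$ and $u_{\bar y}$ together with the $\Ll^1$-integrability of $\widetilde\Gamma(u,u)$ against $\widetilde m_j\, dx\, dy$; the reverse inclusion is immediate, since form-norm convergence of smooth functions forces a.e.\ absolute continuity of the limit along sections along a subsequence. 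Markovianity (B5) then reduces to the standard chain-rule argument: for the normal contraction $\phi(t) = (t\vee 0)\wedge 1$ one has $|\phi'|\le 1$ a.e., hence $\widetilde\Gamma(\phi(u),\phi(u)) \le \widetilde\Gamma(u,u)$ pointwise on $S$.

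Statements (iii) and (iv) on the identification of the generator are then a direct consequence of the integration-by-parts identities already set up in the closability step, combined with the bijection $\Phi$ of Proposition \ref{Th:OnetoOneCorrespFormSemigroupOperator} and the fact that $C_0^\infty(\Ss)$ is a form core by construction. The main technical obstacle throughout is the handling of the degenerate boundary $\{x=0\}$: the parameter restriction $\beta < 1/2$ for $j=1$ enters precisely where local integrability of $x^{-2\beta}$ is required both to justify the integration by parts and to ensure that the energy integrand remains locally in $\Ll^1(\widetilde m_j\, dx\, dy)$; once $\beta \ge 1/2$ this fails and the symmetric pre-generator argument breaks down.
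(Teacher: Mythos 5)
Your structural observation is exactly the one the paper exploits: $\Gamma = y^2\widetilde\Gamma$ and $m_j=\widetilde m_j/y^2$, so the forms coincide numerically on test functions and only the ambient $\Ll^2$-pairing changes, which is why the generator shifts from $\Delta_g$ (resp.\ $A$) to $\widetilde\Delta_g$ (resp.\ $\widetilde A$). Your integration-by-parts identification of the generators in (iii)--(iv), the role of $\rho=0$ for $j=1$, the Markovianity argument via normal contractions, and the source of the exponents ($x^{-\beta(1+j)}\in\Ll^1_{loc}$ forcing $\beta<1$ for $j=0$ and $\beta<1/2$ for $j=1$) all match the paper's reasoning, although the restriction enters through finiteness of the form on test functions whose support meets $\{x=0\}$ (the $(y,y)$-coefficient of the integrand is $\widetilde m_j$ itself) rather than through boundary terms, which do not arise for compactly supported smooth functions.

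The genuine gap is in your closability step for part (i). Deducing closability from a symmetric pre-generator requires $L_jf\in\widetilde\Hh_j$ for every $f$ in the core $C_0^\infty(\R^2)$, and this fails: for $j=0$ the integration by parts produces the drift term $\tfrac{\beta}{2}x^{2\beta-1}\partial_xf$, whose square integrated against $x^{-\beta}\,dx$ behaves like $\int_0 x^{3\beta-2}\,dx$ near the axis, which diverges for $\beta\in(0,1/3]$; so the pre-generator is not an operator into $\widetilde\Hh_0$ on the stated core, and the standard Friedrichs-type closability argument does not apply in part of the claimed parameter range. (Restricting to $C_0^\infty(\Ss)$ avoids this but then only yields closability of a smaller form, which does not imply closability on $C_0^\infty(\R^2)$.) The paper instead verifies the purely form-intrinsic Hamza-type conditions (HG1)--(HG2) for $\xi$ and $\widetilde m_j$ and invokes Theorem~\ref{closability} (R\"ockner--Wielens / Bouleau--Denis), which gives closability of $(\widetilde\cE_{m_j},C_0^\infty(\R^2))$ and, separately, that the form with the \emph{explicit maximal} domain \eqref{eq:DirichletformDomain} is a Dirichlet form admitting the carr\'e du champ $\widetilde\Gamma$. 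This also shows your part (ii) is aiming at a stronger and unneeded claim: the theorem does not assert that \eqref{eq:DirichletformDomain} is the closure of $C_0^\infty$, and your ``$\supseteq$'' direction (approximating every element of the maximal domain by smooth compactly supported functions in form norm) is a Meyers--Serrin type statement for degenerate weights that is far from immediate and is not established by ``section-wise mollification and truncation'' as stated. To close the argument you should replace the pre-generator route by the verification of (HG1)--(HG2) for $\widetilde m_j$ and $\xi$ and cite Theorem~\ref{closability}, reserving the integration by parts for the generator identification on $C_0^\infty(\Ss)$, where it is unproblematic.
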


\begin{remark}[Stochastic representation]Consider the following system of stochastic differential equations:
\begin{equation}\label{eq:SABRDirichletprocess}
dX_t=Y_tX^{\beta}_tdW_t, \qquad
 dY_t=Y_tdZ_t - \rho \beta Y_t^2  X_t^{\beta-1}dt, \qquad
 d\langle W, Z\rangle_t=\rho,
\end{equation}
and respectively consider the system
\begin{equation}\label{eq:SABRDirichletprocess2}
 d \widetilde X_t=\widetilde X^{\beta}_tdW_t, \qquad
 d \widetilde Y_t=dZ_t - \frac{\rho}{2}\beta  \widetilde X_t^{\beta-1}dt, \qquad
 d\langle W, Z\rangle_t=\rho.
\end{equation}
The infinitesimal generators corresponding to \eqref{eq:SABRDirichletprocess} and \eqref{eq:SABRDirichletprocess2} coincide on the domain $C_0^{\infty}(\Ss)$ with the generators of the Dirichlet forms $\Phi(\cE_{m_1})$, resp. $\Phi(\widetilde{\cE}_{m_1})$ in $(iv)$ of Theorem \ref{Th:SABRDirichletForm} resp. Theorem \ref{Th:SABRDirichletFormTimechanged} for any $\beta\in [0,1]$ and $\rho\in (0,1)$. Note that for $\rho=0$ the system \eqref{eq:SABRDirichletprocess}, coincides the system for uncorrelated SABR model \eqref{eq:SABRSDE}. Analogous statements hold for the systems \eqref{eq:SABRDirichletprocess2} and \eqref{eq:SABRdynamicsTimechanged}.
\end{remark}

\section{Asymptotics}\label{Sec:Asymptotics} 
Another direct application of the SABR time change \eqref{eq:Thetimechange} is that it enables us to characterise the large time behaviour the SABR process, more precisely, the distribution of $X_t$ as $t\rightarrow \infty$. 
In \cite[Section 4]{Hobson} similar asymptotic conclusions are derived in a log-linear setting and in \cite[Example 5.2]{Hobson} a special case of the SABR model is presented ($\beta=0,\rho=1$), where the process a.s. has a non-trivial limit. 
A characterization of the large-time behaviour of the SABR process is of interest beyond this special case. Therefore we highlight here that this characterization can be easily extended to general ($\beta\in [0,1]$) for the uncorrelated SABR model and for SABR-Brownian motion, and outline the proof in Appendix \ref{Sec:Proofs}. 
\begin{figure}[h!]
\begin{center}
\includegraphics[scale=0.45]{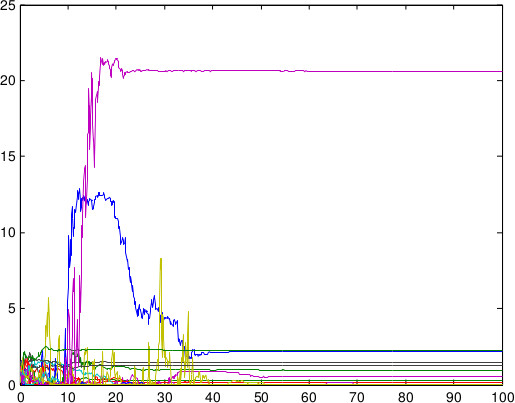}
\caption{Large-time behavior of the SABR process on the time horizon $T=100$ years.
The simulation was done by an explicit Euler scheme with absorbing boundary conditions at zero and shows $N=10$ sample paths of the $X$-coordinate of the SABR process for the parameters $\nu=1, \rho=0.9, \beta=0.5$. The sample paths are either absorbed at zero or they level off at a non-trivial (positive) value.}
\label{fig:Mass0InfTimeX0}
\end{center}
\end{figure}

\subsection{Large-time asymptotics}\label{Sec:AsymptoticsLargetime}
The SABR process and the SABR-Brownian motion have a non-trivial large-time behavior and the time change gives insight into the sample-path behaviour of the model. The second coordinate process $(Y_t)_{t\geq 0}$ of \eqref{eq:SABRSDE} (resp. of \eqref{eq:SABRSDEwithDrift}) is a driftless geometric Brownian motion and as such converges almost surely to $0$. For the first coordinate two scenarios are possible: either the geometric Brownian motion $Y$ stays long enough over some threshold so that the first coordinate process $\wX$ (resp. $\overline{X}$) ``has time'' to hit zero; or, $Y$ gets small quickly enough so that the fluctuations of $\wX$ level off and $\wX$ (resp. $\overline{X}$) converges to a non-zero limit. The next theorem shows that both happen with positive probability both for the (uncorrelated) SABR model \eqref{eq:SABRSDE} and for the SABR-Brownian motion \eqref{eq:SABRSDEwithDrift}: 
\begin{theorem}\label{Th:AsymptoticsLargetimeSABR}
	Let $(X,Y)$ denote the uncorrelated SABR model (i.e. we set $\rho=0$ in \eqref{eq:SABRSDE}) 
and let $(\overline{X},Y)$ denote the SABR-Brownian motion \eqref{eq:SABRSDEwithDrift} with $\beta\in[0,1)$ and $\rho\in(-1,1)$.
Then in both cases the limit
\begin{equation}\begin{array}{lll}
\lim_{t\to\infty}(X_t,Y_t)=:(X_\infty,Y_\infty)&\textrm{resp.} &\lim_{t\to\infty}(\overline{X}_t,Y_t)=:(\overline{X}_\infty,Y_\infty)
\end{array}\end{equation}
exists almost surely and it holds that $0<\PP(X_\infty>0)<1$ resp. $0<\PP(\overline{X}_\infty>0)<1$.
\end{theorem}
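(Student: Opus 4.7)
The plan is to exploit the time-change representations of Theorems \ref{Th:TimechangedSABRVolkonskii} and \ref{Th:TimechangedSABRwithDrift} to reduce the long-time behaviour of $X$ (respectively $\overline X$) to the evaluation of a one-dimensional CEV-type diffusion at a finite random time. Setting $\sigma_t := \tau_t^{-1} = \int_0^t Y_s^2\,ds$, one has $X_t = \wX_{\sigma_t}$ in the uncorrelated case and $\overline X_t = \widetilde{\overline X}_{\sigma_t}$ in the SABR-Brownian motion case. The large-time question thus becomes: at which time $\sigma_\infty$ is the one-dimensional CEV-type process sampled, and does it hit $0$ before or after that time?

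I would first establish that $\sigma_\infty := \int_0^\infty Y_s^2\,ds < \infty$ almost surely whenever $\alpha>0$. Writing $Y_s = y\,\E^{\alpha Z_s - \alpha^2 s/2}$, the strong law for Brownian motion gives $Z_s/s\to 0$ a.s., so that a random time $T(\omega)$ exists beyond which $Y_s^2 \leq \E^{-\alpha^2 s/2}$, securing integrability. Since $t\mapsto \sigma_t$ is increasing and bounded, $\sigma_t\to \sigma_\infty$ a.s.; extending $\wX$ by absorption at $0$ past its first hitting time $T_0^{\wX}$, path continuity yields $X_t=\wX_{\sigma_t}\to \wX_{\sigma_\infty}$ a.s., and analogously for $\widetilde{\overline X}$. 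Combined with $Y_t\to 0$ a.s., this gives existence of the limits $(X_\infty, Y_\infty)$ and $(\overline X_\infty, Y_\infty)$.

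To prove $0 < \PP(X_\infty > 0) < 1$ I would use the identities $\{X_\infty = 0\} = \{T_0^{\wX} \leq \sigma_\infty\}$ and $\{X_\infty > 0\} = \{T_0^{\wX} > \sigma_\infty\}$. In the uncorrelated case $\rho=0$, the CEV driver $W$ is independent of $Z$, so $T_0^{\wX}$ and $\sigma_\infty$ are independent; the hitting time $T_0^{\wX}$ of the CEV diffusion with $\beta\in[0,1)$ started at $x>0$ has a distribution with full support on $(0,\infty]$, and $\sigma_\infty$ admits a density on $(0,\infty)$ by the classical theory of exponential functionals of Brownian motion. A straightforward Fubini-type argument then yields strictly positive probability for each of the two events.

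The main obstacle is the correlated SABR-Brownian motion case, where $\widetilde{\overline X}$ and $\widetilde Y$ share a Brownian factor and the clean independence above fails. I would decompose $W = \rho Z + \sqrt{1-\rho^2}\,W^\perp$ with $W^\perp$ independent of $Z$, making $\sigma_\infty$ measurable with respect to $\sigma(Z)$, and then condition on the whole path of $Z$. Conditionally, $\widetilde{\overline X}$ solves a one-dimensional CEV-plus-drift SDE driven by the independent Brownian $W^\perp$ together with a now $Z$-measurable continuous perturbation, which reduces the problem to a conditional version of the argument above. The delicate point — and the one that needs the most care — is to combine the Riemannian drift $\tfrac{1}{2}\beta\,\widetilde{\overline X}^{2\beta-1}$ near the origin with the $Z$-dependent random input so that support theorems for one-dimensional SDEs still guarantee a non-degenerate conditional distribution of $T_0^{\widetilde{\overline X}}$ that places positive mass both above and below the (conditionally deterministic) value $\sigma_\infty$ on a set of $Z$-paths of positive probability.
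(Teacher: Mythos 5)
Your treatment of the existence of the limits and of the uncorrelated case is correct and essentially the paper's argument: the time change turns the question into whether the CEV hitting time $T_0^{\wX}$ exceeds the a.s.\ finite random time $\sigma_\infty=\int_0^\infty Y_s^2\,ds$, and for $\rho=0$ the independence of $W$ and $Z$ together with the full support of both laws gives $\PP(T_0^{\wX}>\sigma_\infty)\in(0,1)$ by exactly the Fubini argument you describe. (The paper obtains existence of $X_\infty$ from nonnegative-martingale convergence rather than from $\sigma_\infty<\infty$ plus path continuity, but your route is equally valid; note only that $T_0^{\wX}<\infty$ a.s.\ for $\beta\in[0,1)$, so its support is $(0,\infty)$ rather than $(0,\infty]$ --- harmless for the conclusion.)

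The genuine gap is the correlated SABR-Brownian-motion case, which you explicitly leave open: the conditioning-on-$Z$ strategy you sketch would require a support theorem for the hitting time of $0$ by a degenerate one-dimensional SDE driven by $W^\perp$ together with a $Z$-measurable input, and you give no argument that the resulting conditional law of $T_0^{\widetilde{\overline X}}$ places mass on both sides of $\sigma_\infty$ on a positive-probability set of $Z$-paths. None of this machinery is needed. The drift $\tfrac12\beta\,\widetilde{\overline X}^{2\beta-1}$ in \eqref{eq:SABRSDEwithDrift2} is precisely the It\^o--Stratonovich correction for $d\widetilde{\overline X}=\widetilde{\overline X}^{\beta}\circ dW$, so the time-changed first coordinate has the explicit solution $\widetilde{\overline X}_t=\bigl(\widetilde{\overline x}^{\,1-\beta}+(1-\beta)W_t\bigr)^{1/(1-\beta)}$ (cf.\ \eqref{eq:ExplicitsolCEVStrat}). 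Hence $T_0^{\widetilde{\overline X}}$ is the hitting time of a level by the Brownian motion $W$ itself, and for every $\beta\in[0,1)$ and $\rho\in(-1,1)$ the comparison of $T_0^{\widetilde{\overline X}}$ with $T_0^{\widetilde Y}$ reduces to the exit problem of a two-dimensional correlated Brownian motion from the first quadrant; that each coordinate axis is reached first with positive probability is then read off from the explicit formula in \cite{BranchingModel}. This reduction is how the paper closes the argument, and it is what your proof is missing.
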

The time change reduces the characterization of the limiting behaviour of the process \eqref{eq:SABRSDE} with $\beta=0, \rho\in(-1,1)$ on a hyperbolic plane 
to determining the hitting time of the coordinate axis of two correlated Brownian motions in the first quadrant of the (Euclidean) plane. Therefore, a lower bound for the probability $\PP(\overline{X}_\infty>0)$ for the SABR-Brownian motion
\eqref{eq:SABRSDEwithDrift} follows from \cite{BranchingModel}.\\

From a financial perspective, such time change constructions can be used to investigate whether the price process has the potential to hit zero in finite time. Such properties have implications for option prices in the limits of extreme strikes as already remarked in \cite{Hobson}. Indeed, the Moment Formula of~\cite{RogerLee} relates the behavior of the implied volatility for extreme strikes to the price of a Put (resp. Call) option. This model-independent result was refined in \cite{BenaimFriz, GulisashviliAsympt10} and extended in~\cite{DMHJ, GulisashviliMass} to the case where the price process has positive mass at zero.
Once the probability mass at the origin is known, arbitrage free wing asymptotics can be derived. Naturally, the probability mass at zero in the SABR model is dependent on the chosen parameter configuration. Asymptotic formulae for the mass at zero in the uncorrelated $\rho=0$ and for the normal $\beta=0$ SABR models were calculated in \cite{SABRMassZero}. 
 
\subsection{Short-time asymptotics and a generalized distance}\label{Sec:AsymptoticsShorttime}
At the heart of heat-kernel type asymptotic expansions lies Varadhan's  classical formula \cite{Varadhan}, which characterizes (for non-degenerate i.e. uniformly elliptic diffusions) the short time asymptotic behaviour of the transition density $p$ at leading order as
\begin{align}\label{eq:VaradhanFormula}
\lim_{t \rightarrow 0} t \log p_t(x,y)=-\frac{d(x,y)^2}{2},\qquad x,y \in \Ss.
\end{align}
The distance $d(\cdot,\cdot)$ appearing on the right hand side of \eqref{eq:VaradhanFormula} is the Riemannian distance
\begin{equation}\label{eq:RiemannianDistance}
d(x,y)=\textrm{inf}\left\{\int_0^1
\sqrt{\langle \dot{\gamma_i}(t),\dot{\gamma_j}(t)\rangle}_{g^{i,j}(\gamma(t))}
dt: \ \gamma\in C^1([0,1]),\ \gamma(0)=x, \gamma(1)=y \right\} 
\end{equation}
induced by the Riemannian metric $g(\cdot,\cdot)$, whose respective coefficients are obtained from the inverse of the covariance matrix of the diffusion (See section \ref{Sec:TimeChangeLocalTime} below for more details). 
The term in the integral is the length of the gradient  vector on a (minimal) parametrised curve\footnote{Note that the length of the parametrised curve (and hence the Riemannian distance) is invariant under reparametrisation and it is conventional to parametrise to arc length (cf. \cite[Section 6]{lee1997riemannian}). In this case, the gradient equals one, and hence the Eikonal equation (which is the starting point of the analysis in \cite{BBF}) is satisfied along the whole curve.}
 from $x$ to $y$.
There are extensions of this result to more general diffusions: see \cite{Molchanov} and L\'{e}andre's extension to the hypoelliptic diffusions \cite{Leandre1,Leandre2,Leandre3}).
\begin{remark}An important observation (and warning) here is that neither of the models discussed in the previous sections (the SABR model, the SABR-Brownian motion or the CEV model) are uniformly elliptic (nor hypoelliptic) in a neighbourhood  of $\{(x,y): x=0, y>0 \}$, that is whenever the forward rate $x$ is near zero.
This lack of regularity is crucial, because the derivation of the SABR formula presented in \cite{HLW}, relies heavily on Varadhan's formula \eqref{eq:VaradhanFormula} and the related heat-kernel expansion. The SABR formula is known to break down around zero forward, where the regularity, necessary for \eqref{eq:VaradhanFormula}, fails.
In fact, to date no \emph{direct} extension of the formula \eqref{eq:VaradhanFormula} is known to be valid for SABR-type models in the neighbourhood of zero.
\end{remark}
For diffusions with degeneracies beyond the hypoelliptic setup, similar asymptotic results can be made in some cases. A more general degenerate setup (general enough to cover the SABR and CEV processes) often requires a suitable generalization of the intrinsic metric \eqref{eq:RiemannianDistance}, via Dirichlet forms. 
Short-time asymptotic results for general degenerate processes are discussed in a number of works \cite{TerElstDiffusion, HinoRamirez, SturmGeometricAspect}. 
Specifically, results of \cite{TerElstDiffusion} indicate that for large classes of degenerate elliptic diffusions the asymptotic relation \eqref{eq:VaradhanFormula} fails, and difficulties often arise from non-ergodic behavior.
A series of articles (e.g. \cite{terElstDegenerateElliptic,terElstSeparation,TerElstDiffusion}) studies the behaviour of second-order operators of the form
\begin{equation*}\label{DegOperator}\begin{array}{ll}
\cA=-\sum_{i,j=1}^{d}\partial_i (\xi_{i,j}) \partial_j
\end{array}\end{equation*}
on $\mathbb{R}^d$ with bounded real symmetric measurable coefficients, such that
$\xi_{i,j}\geq0$ almost everywhere. The remarkable novelty in these works lies in the latter (particularly weak) requirement on the coefficients $(\xi_{i,j})_{i,j}$. Therefore, they cover a large class of degenerate models beyond the hypoelliptic setup.
The setup includes a univariate operator
\begin{equation}\label{eq:terElstOperator}
\partial_x \left(\frac{x^{2\beta}}{(1+x^2)^{\beta}}\right)\partial_x, 
\end{equation}
for $\beta\geq 0$, which approximates for $x\sim0$ the generator 
of the CEV model.
It is shown in \cite{TerElstDiffusion}, that for $\beta\in[0,1/2)$ the validity of the asymptotic relation \eqref{eq:VaradhanFormula} prevails for \eqref{eq:terElstOperator} and for $\beta\geq 1/2$ it fails. In the latter case, the origin is naturally absorbing, and hence acts as an \emph{inpenetrable boundary} although the Riemannian distance $d(-x,x)$ is finite\footnote{Since $d(-x,x)\leq \lim_{a\rightarrow 0}d(-x,a)+\lim_{a\rightarrow 0}d(a,x)$, where both limits are finite.} for any $x>0$, see \cite{TerElstDiffusion}. Indeed, it is well known that a similar phenomenon holds for the CEV model: The origin is naturally absorbing for $\beta\geq1/2$ see \cite{Antonov}, although the Riemannian distance to the origin is finite, see Remark \ref{Rem:CEV}. 
In \cite{TerElstDiffusion} a more general metric is proposed, which better reflects the behavior of the diffusion at zero. In the Appendix \ref{Sec:DiffusionasDirichletform} we included a precise formulation of the intrinsic metric considered in \cite{TerElstDiffusion} for the 
operator \eqref{eq:terElstOperator} for easy reference.
The approach taken in \cite{TerElstDiffusion} was introduced in \cite{HinoRamirez}, where a set-theoretic version of 
Varadhan's formula is proven for open sets $A,B \subset \mathbb{R}^2$
\begin{align}\label{VaradhanFormulaIntegratedHR}
\begin{split}
\lim_{t \rightarrow 0} t \log & \ p_t(A,B)=-\frac{d(A,B)^2}{2}.
\end{split}
\end{align}
Short-time asymptotic results of this sort as in  \cite{TerElstDiffusion, HinoRamirez}
differ from Varadhan's classical short-time asymptotic formula \eqref{eq:VaradhanFormula} in the crucial fact, that \eqref{VaradhanFormulaIntegratedHR} a priori only holds for open sets $A,B \in \R^2$. Therefore the classical formula \eqref{eq:VaradhanFormula} cannot be deduced from such results, unless further regularity conditions are satisfied.
For processes with CEV-type (or SABR-type) degeneracies, no general formula is known which guarantees the validity of a pointwise formula \eqref{eq:VaradhanFormula}. Its weaker version \eqref{VaradhanFormulaIntegratedHR} however, is known to extend to a class of such diffusions which exhibit the same type of degeneracy at the origin as the CEV model, see \cite{TerElstDiffusion}.
\\
In \eqref{VaradhanFormulaIntegratedHR}, the set theoretic distance $d(A,B)$ is, as usual, the infimum of $d(x,y)$, $x\in A, y\in B$, for a suitable intrinsic metric $d(x,y)$ on $\R^d$, determined by the Dirichlet form. Under certain regularity assumptions on the latter (see \cite[Section 1]{SturmLocal} for details) the intrinsic metric can be written in the following form:
\begin{equation*}
d(x,y)=\textrm{sup}\left\{u(x)-u(y):u\in \mathcal{D}_{loc}(\cE)\cap C(X), \frac{d\Gamma(u,u)}{dm}\leq 1\right\}, 
\end{equation*}
where the gradient is generalized to the so-called energy measure (see Appendix \ref{Sec:Closability} for the definition and the monographs \cite{BouleauHirsch,FukushimaOshimaTakeda} for a comprehensive discussion thereof). That is, the energy measure $\Gamma$ is absolutely continuous with respect to the speed measure $m$, and the density 
\begin{align}\label{eq:GenEikonal}
\frac{d\Gamma(u,u)}{dm}(z)
\end{align}
is interpreted as the square of the length of the gradient of $u$ at $z\in X$, cf. \cite{SturmLocal}.
\begin{corollary}[to Theorems \ref{Th:SABRDirichletForm} and \ref{Th:SABRDirichletFormTimechanged} and Lemma \ref{Th:CEVSymmDirichletform}]
The energy measures for the SABR and CEV Dirichlet forms  \eqref{eq:DirichletformSABRwithdrift1} and \eqref{eq:DirichletformSABRwithdrift3} such as the time-changed Dirichlet form
\eqref{eq:DirichletformSABRwithdrift2} are the operators \eqref{eq:EnergymeasureCEV}, and\eqref{eq:Energymeasure2} respectively. 
On $\Ss$ (resp. $(0,\infty)$) these energy measures are in fact determined by the gradient of a geodesic curve on the respective (weighted) manifolds where the Riemannian metric is as in \eqref{eq:ManifoldMetric} for SABR and as in Remark \ref{Rem:CEV} for CEV.
\end{corollary}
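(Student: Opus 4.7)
The plan is to identify the energy measure via Fukushima's carr\'e du champ formula in three steps: first verify the identification on the smooth compactly supported test functions that form a core, then extend to the full Dirichlet-form domain by closability, and finally re-interpret the resulting bilinear form $\Gamma$ as the squared norm of a Riemannian gradient, which yields the geodesic-length interpretation.

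I would start from the fact that, for a regular Dirichlet form $(\cE,\DcE)$, the energy measure $\mu_{\langle u \rangle}$ of $u\in\DcE$ is uniquely determined by
\begin{equation*}
\int f\,d\mu_{\langle u \rangle}=2\cE(u,uf)-\cE(u^2,f),
\end{equation*}
as $f$ ranges over a suitable subalgebra of $\DcE$ (cf.\ \cite[Sec.~3.2]{FukushimaOshimaTakeda}). Each of the integrands \eqref{eq:Energymeasure}, \eqref{eq:Energymeasure2} and \eqref{eq:EnergymeasureCEV} is a bilinear first-order differential operator, hence satisfies the Leibniz rule $\Gamma(uv,w)=u\Gamma(v,w)+v\Gamma(u,w)$ on $C_0^\infty$. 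Plugging this into the right-hand side above with $u,f\in C_0^\infty$ and cancelling the two $u\Gamma(u,f)$ contributions collapses the identity to $\int f\,\Gamma(u,u)\,dm$, so $d\mu_{\langle u\rangle}=\Gamma(u,u)\,dm$ on the core. The identification then extends to all of $\DcE$ by the closability results already established in Theorem \ref{Th:SABRDirichletForm}(i), Theorem \ref{Th:SABRDirichletFormTimechanged}(i) and Lemma \ref{Th:CEVSymmDirichletform}(i): $C_0^\infty$ is dense in the respective domains in the $\|\cdot\|_{\mathcal{E}_1}$-norm, the map $u\mapsto \mu_{\langle u \rangle}$ is continuous in $\mathcal{E}_1$-norm by the standard energy-measure inequality, and $u\mapsto \Gamma(u,u)\,dm$ is continuous in the same topology by Cauchy--Schwarz in $L^1(m)$.

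For the geometric interpretation, I would read off from \eqref{eq:Energymeasure}, \eqref{eq:Energymeasure2} and \eqref{eq:EnergymeasureCEV} the coefficient matrix of $\Gamma$ in the basis $(\partial_x,\partial_y)$ (resp.\ $\partial_x$) and observe that it is precisely the inverse of the SABR metric tensor \eqref{eq:ManifoldMetric} (resp.\ of the CEV metric of Remark \ref{Rem:CEV}). Consequently,
\begin{equation*}
\Gamma(u,u)(z)=g^{ij}(z)\,\partial_i u(z)\,\partial_j u(z)=|\nabla_g u(z)|_g^2
\end{equation*}
on $\Ss$ (resp.\ on $(0,\infty)$). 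Choosing $u$ locally as the signed $g$-distance from a hypersurface transverse to a minimal geodesic $\gamma$ parametrised by arc length, the Eikonal equation yields $|\nabla_g u|_g\equiv 1$ along $\gamma$ and $\nabla_g u$ coincides with the unit tangent $\dot\gamma$, which is precisely the sense in which the energy measure density is ``determined by the gradient of a geodesic curve''.

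The main obstacle I anticipate is the $\rho=0$, $\beta>0$ case covered by $\cE_{m_1}$: by Theorem \ref{Th:SABRDirichletFormSymmetricNo}(ii) the ambient structure is then a \emph{weighted} manifold $(\Ss,g,\Upsilon\mu)$ with $\Upsilon(x,y)=x^{-\beta}$ rather than a plain Riemannian one, and the speed measure $m_1$ carries this extra density. The pointwise identity $\Gamma(u,u)=|\nabla_g u|_g^2$ is unaffected, but the geodesic/Eikonal calculus must be run on the underlying $(\Ss,g)$ while bookkeeping the weight $\Upsilon$ in $m_1$---a compatibility check that has to be spelled out explicitly before the present corollary can be fed into Varadhan-type statements of the form \eqref{VaradhanFormulaIntegratedHR}.
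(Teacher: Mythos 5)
Your proposal is correct and follows essentially the same route as the paper: the paper's proof of the corollary is a pointer to Theorem \ref{closability}, whose proof carries out exactly your Leibniz-rule computation $2\cE(uv,u)-\cE(u^2,v)=\int \Gamma(u,u)\,v\,m\,dx$ on the core, and to Grigor'yan's weighted-manifold gradient for the observation that the coefficient matrix of $\Gamma$ is the inverse metric tensor (the weight $\Upsilon$ affecting only the speed measure, not the pointwise identity $\Gamma(u,u)=|\nabla_g u|_g^2$). Your additional detail on the $\|\cdot\|_{\mathcal{E}_1}$-density extension and the Eikonal/arc-length interpretation is consistent with, and slightly more explicit than, what the paper records.
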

See Theorem \ref{closability} in Appendix \ref{Sec:Closability} for a proof of the statements about the energy measure and \cite[pages 108-109: ``Second Proof'']{grigoryan09AnalysisManifolds} to compare with the gradient on a weighted manifold.
Note that in the time-changed forms the energy measures and hence the intrinsic metric (Dirichlet distances) change accordingly as predicted.
\begin{remark}
Results in the spirit of \cite{TerElstDiffusion} leading to Varadhan-type asymptotics often rely on the assumption that the topology induced by the intrinsic metric is equivalent to the original topology of the underlying space $X$ and that all balls  $B_r(x)$, of radius $r>0$, $x,\in X$ are relatively compact, cf. \cite[Section 2, Condition L]{TerElstDiffusion} \footnote{Note that if the topologies coincide (which is the case for the Riemannian distance, cf. \cite[Proposition 11.20]{lee2003introduction}), the fact that all balls are relatively compact is equivalent to the completeness of the metric space cf. \cite{SturmLocal}, which implies in the case of Riemannian manifolds (by the Hopf-Rinow theorem, cf. \cite[Theorem 6.13]{lee1997riemannian}) that the manifold has no boundary.}. 
This is the case for the manifold $(\Ss_1,g_1)$ considered in \eqref{eq:ManifoldMetricBeta10}, but not the case for 
$(\Ss,g)$ in \eqref{eq:ManifoldMetric} for $\beta\in[0,1)$, since the Riemannian distance to the coordinate axis $\{(x,y)\in \R^2: x=0\}$ is infinite for $\beta=1$, but finite for $\beta\in[0,1)$.
\end{remark}

\appendix
\section{Proofs}\label{Sec:Proofs}
\begin{proposition}\label{Th:WellPosednessMPSABR}
Imposing absorbing boundary conditions at $X=0$, the SABR martingale problem is well-posed for any $\beta\in[0,1]$. That is, the process
\begin{equation}\label{eq:SABRMartingaleProblem}
	M_t^f:=f(X_t,Y_t)-f(X_0,Y_0)-\int_0^t A f(X_s,Y_s)\,ds,\qquad f\in C_c^{\infty}(D),
\end{equation}
with $A$ as in \eqref{eq:GeneratorTimeChangedSABR} 
is a martingale for any smooth function with compact support, and uniqueness in law holds for solutions of \eqref{eq:SABRSDE} for any initial value $(x,y) \in D$.
Furthermore, the solutions $P^{(x,y)}$, $x,y\geq0$ form a (strong) Markov family, and even pathwise uniqueness holds.
\end{proposition}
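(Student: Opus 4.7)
The plan is to deduce well-posedness for the SABR martingale problem from pathwise uniqueness of the underlying SDE~\eqref{eq:SABRSDE}, since well-posedness of \eqref{eq:SABRMartingaleProblem} on $C_c^\infty(D)$ is equivalent to uniqueness in law of the SDE (standard Stroock--Varadhan equivalence), and the strong Markov property is a consequence once uniqueness has been established.

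First I would handle the volatility factor in isolation. The equation $dY_t = \alpha Y_t dZ_t$ is a geometric Brownian motion with explicit solution $Y_t = y\exp(\alpha Z_t - \tfrac{\alpha^2}{2} t)$, which is strictly positive for all $t \geq 0$, so it admits a unique strong solution, adapted to the filtration of $Z$. Decomposing $W = \rho Z + \sqrt{1 - \rho^2}\,B$ for an independent Brownian motion $B$, the equation for $X$ conditional on the path of $Y$ becomes
\begin{equation*}
dX_t = Y_t X_t^{\beta}\bigl(\rho\, dZ_t + \sqrt{1-\rho^2}\,dB_t\bigr),
\end{equation*}
which is a one-dimensional time-inhomogeneous CEV-type SDE with a continuous, strictly positive, $(\mathcal{F}_t^Z)$-adapted coefficient $Y_t$. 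With absorbing boundary at $X = 0$, weak existence is standard (Skorokhod/Stroock--Varadhan, using linear growth of $x \mapsto y x^\beta$).

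For pathwise uniqueness of the $X$-equation given $Y$, I would split on $\beta$. For $\beta \in [\tfrac{1}{2}, 1]$, the diffusion coefficient $x \mapsto y x^\beta$ is H\"older continuous with exponent $\geq 1/2$, so the Yamada--Watanabe criterion yields pathwise uniqueness and, in combination with weak existence, strong existence. For $\beta \in [0, \tfrac{1}{2})$ the Yamada--Watanabe condition fails near the origin, and this is the genuine technical hurdle. Here I would either invoke the CEV literature directly (the CEV SDE with absorbing boundary at $0$ is classically well-posed for every $\beta \in [0,1]$), or reduce to a squared-Bessel-type representation by applying It\^o to $R_t := X_t^{2(1-\beta)}/(2(1-\beta))$, which gives
\begin{equation*}
dR_t = Y_t\sqrt{2(1-\beta)R_t}\,dW_t + \tfrac{1-2\beta}{2}Y_t^2\, dt,
\end{equation*}
a (time-changed) squared Bessel SDE whose well-posedness with absorption at $0$ is classical. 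Combining this with the strong solution for $Y$ yields strong existence and pathwise uniqueness for the full system \eqref{eq:SABRSDE}.

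Finally, uniqueness in law for the SDE translates directly into well-posedness of the martingale problem \eqref{eq:SABRMartingaleProblem} on $C_c^\infty(D)$, and the strong Markov family property $\{P^{(x,y)}\}_{(x,y) \in D}$ follows from well-posedness of the martingale problem by standard Stroock--Varadhan theory. Alternatively, one can argue existence via Theorem~\ref{Th:TimechangedSABRVolkonskii}: well-posedness of the simpler system~\eqref{eq:SABRdynamicsTimechanged} (a CEV process $\widetilde X$ coupled with a Brownian motion $\widetilde Y$) is obtained by the same Yamada--Watanabe/Bessel argument, and the random time change $\tau_t^{-1}$ (which sweeps out $[0,\infty)$ since $\int_0^{T_0^{\widetilde Y}} \widetilde Y_s^{-2}\,ds = \infty$ a.s.) produces a SABR solution by Volkonskii's theorem. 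The main obstacle, as noted, is the regime $\beta \in [0, \tfrac{1}{2})$ where the CEV coefficient is irregular at the origin; imposing absorption at $X = 0$ is what makes the problem well-posed there, and the Bessel transformation is the cleanest way to see it.
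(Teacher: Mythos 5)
Your proposal is correct and shares the paper's overall skeleton --- pathwise uniqueness plus weak existence, then Yamada--Watanabe to get uniqueness in law, then well-posedness of the martingale problem and the strong Markov property by the standard Stroock--Varadhan machinery --- but it establishes the key step, pathwise uniqueness, by a genuinely different and more granular route. The paper's proof is a one-liner at this point: the coefficients of \eqref{eq:SABRSDE} are locally Lipschitz on $\R_{>0}\times\R_{>0}$, so two solutions agree up to the first hitting time of $\{x=0\}$, and the absorbing boundary condition forces them to agree thereafter; this argument is uniform in $\beta\in[0,1]$ and never needs to distinguish regimes. You instead solve $Y$ explicitly, condition on its path, and treat $X$ as a time-inhomogeneous CEV equation, splitting into $\beta\geq\tfrac12$ (H\"older-$\tfrac12$ Yamada--Watanabe criterion) and $\beta<\tfrac12$ (the transformation $R_t=X_t^{2(1-\beta)}/(2(1-\beta))$, whose It\^o computation is correct). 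What your route buys is finer information: it shows that for $\beta\in[\tfrac12,1]$ no boundary condition is needed at all, a fact the paper only records in a remark with a citation, and it isolates exactly where absorption is doing work. What the paper's route buys is brevity and uniformity. Your closing alternative --- existence via the time change of Theorem \ref{Th:TimechangedSABRVolkonskii}, using that $\tau$ sweeps out $[0,\infty)$ before $T_0^{\widetilde Y}$ --- is also sound and is in the spirit of how the paper proves Theorem \ref{Th:TimechangedSABRVolkonskii} itself.

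One small imprecision worth fixing: the absorbed process $R$ does \emph{not} satisfy the squared-Bessel SDE globally, because after $T_0^{X}$ the drift $\tfrac{1-2\beta}{2}Y_t^2\,dt$ is strictly positive while $dR_t=0$. The correct formulation is that two solutions of the $X$-equation yield two solutions of the Bessel-type equation \emph{up to} $T_0^{X}$, Yamada--Watanabe forces them to coincide on $[0,T_0^{X}]$ (hence the hitting times agree), and absorption handles everything afterwards. This is clearly what you intend, but as written the phrase ``well-posedness with absorption at $0$ is classical'' conflates the absorbed process with a solution of the drifted SDE. Also note that $M^f_t$ being a true (not merely local) martingale for $f\in C_c^\infty(D)$ deserves the one sentence the paper gives it: $Af$ is bounded on the compact support of $f$ and the It\^o integrand is bounded there as well.
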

\begin{proof}[Proof of Proposition \ref{Th:WellPosednessMPSABR}]
From It\^o's formula it follows for SABR that \eqref{eq:SABRMartingaleProblem} is a local martingale\footnote{In fact, by \cite[ Chapter 4.8. p.228]{EthierKurtz}, this is sufficient for the subsequent time change arguments.}. 
The statements about pathwise uniqueness follows by local Lipschitz continuity of the coefficients in \eqref{eq:SABRSDE} away from the origin. Well-posedness of the martingale problem and uniqueness is then a consequence of the Yamada-Watanabe theorem, see for instance \cite[Lemma 21.17]{kallenberg2002foundations}.
Finally the strong Markov property follows from the well-posedness of the martingale problem \cite[Theorem 21.11, p.421.]{kallenberg2002foundations}.
\end{proof}
\begin{remark}
In fact, for $\beta\in [\frac{1}{2},1]$, the SABR-martingale problem is well-posed without imposing any boundary conditions at zero, which essentially follows from the well-posedness of the CEV-martingale problem for these parameters, see \cite{AntonovFree}.
For the parameters $\beta\in [0,\frac{1}{2})$, uniqueness holds by local-Lipschitz continuity of the parameters on $\R_{>0}\times\R_{>0}$, and imposing absorbing boundary conditions at $X=0$, uniqueness holds on the whole state space $\R_{\geq 0}\times\R_{>0}$.
\end{remark}
\begin{proof}[Proof of Theorems \ref{Th:TimechangedSABRVolkonskii} and \ref{Th:TimechangedSABRwithDrift}]
The statement of Theorems \ref{Th:TimechangedSABRVolkonskii} and \ref{Th:TimechangedSABRwithDrift} follows from the Dambis-Dubins-Schwarz Theorem \cite[Theorem 1]{KallsenShiryaev} or \cite[Proposition 21.13]{kallenberg2002foundations} directly, and it is immediate that the time change $t\mapsto \int_0^t(Y_s)^2ds$ is nothing but a time change of a Brownian motion $\widetilde Y$ started at $y>0$ to a geometric Brownian motion $Y$.
\end{proof}
\begin{remark}\label{Rem:TimechangeCEV}
We remark here briefly that the corresponding time change, which transforms the a Brownian motion $(W_t)_{t\geq0}$ started at $x>0$ into a CEV process with parameter $\beta\in [0,1]$ reads
$t\mapsto \int_0^t|W_s|^{2\beta}ds$ with corresponding stopping time 
\begin{align*}
\tau_t^{\beta}=\inf\left\{u:\left(\int_0^{\cdot}|W_s|^{2\beta}ds\right)_u\geq t\right\}.
\end{align*}
Note also that a similar scaling can be induced following \cite[Theorem 2]{KallsenShiryaev}, replacing the Brownian motion $Z$ by a symmetric $\alpha$-stable process, see \cite[Definition 1]{KallsenShiryaev}. The corresponding time change is $t\mapsto \int_0^t|Y_s|^{\alpha}ds$
and for $\alpha=2$ the symmetric $\alpha$-stable L\'{e}vy motion in \cite[Theorem 2]{KallsenShiryaev} is a Brownian motion. 
\end{remark}

\begin{proof}[Proof of Lemma \ref{Th:FellerPropertySABR}]
The martingale problem is well-posed so the SABR process is (strong) Markov process and the corresponding semigroup $P_t f(x,y)=E^{x,y}[f(X_t,Y_t)], t\geq 0,$ automatically satisfies a version of the Feller property with weak continuity properties (see for instance Walsh \cite{Wa}). It is a consequence of the well-posedness and the general theory of martingale problems that $P_t$ satisfies the (simple) Feller property: 
the semigroup properties (F1) are implied by the Chapman-Kolmogorov equations of the Markov process, the pointwise continuity (F2) is a consequence of the continuity of paths and 
the property (F), that for any fixed $t\geq0$ $P_t$ maps $C_b(D)$ back into $C_b(D)$ is equivalent (cf. \cite[Lemma 19.3]{kallenberg2002foundations}) to convergence in distribution $(X_t^{x'},Y_t^{y'})\rightarrow (X_t^{x},Y_t^{y})$ as $(x',y')\rightarrow(x,y)$, which is implied by uniqueness in law for solutions of \eqref{eq:SABRSDE}, cf. \cite[Theorem 21.9]{kallenberg2002foundations}.
\end{proof}
Note that if the coefficients of the SDE were bounded, then well-posedness of the martingale problem would readily yield the Feller-Dynkin property cf. \cite[Theorem 21.11.]{kallenberg2002foundations}. Since the coefficitents of the SDE \eqref{eq:SABRSDE} are unbounded, the Feller-Dynkin property needs to be proven separately.

\subsubsection*{Proof of the Feller-Dynkin property for SABR and for SABR-Brownian motion} 
In this section we use the SABR time change in order to prove the Feller-Dynkin property of SABR. By the time change one relates SABR to a CEV model running on a stochastic clock, which in turn allows us to derive the Feller-Dynkin property for the SABR model from Feller's boundary classification of the boundary at infinity for the CEV process.
The property, which distinguishes a Feller-Dynkin process (FD) from a simple Feller process (F) is the 
requirement that 
for any $f\in C_{\infty}[0,\infty)$ and any $t\geq 0$ the following convergence property is satisfied:
\begin{equation}\label{CEVFellerDynkin}
 \lim_{x\rightarrow\infty}\mathbb{E}^x\left[f\left(\widetilde X_{t}\right)\right]=0.
\end{equation}
Hence, the Feller-Dynkin property for SABR is established by the following proposition:
\begin{proposition}\label{Prop2}
Let $\widetilde X$ be the CEV process \eqref{eq:CEVProcess} with parameters $\beta \in[0,1]$ and $\sigma>0$ on a stochastic basis $(\Omega,\mathcal{F},(\mathcal{F})_{t\geq0},\mathbb{P})$ 
resp. the process\footnote{The process $\overline{X}$ corresponds to a Stratonovich-version of the CEV process $\widetilde X$, cf. \cite[Chapter 5, Example 2, p.284]{Protter}} $\overline{X}$ corresponding to the first coordinate of SABR-Brownian motion \eqref{eq:SABRSDEwithDrift}
\begin{equation}\label{eq:CEVProcess}
\begin{array}{rrlrl}
&d\widetilde X_t&=\sigma \widetilde X_t^{\beta}dW_t, & t> 0,&\widetilde X_0=\widetilde x>0\\
\textrm{resp.}&d\overline{X}_t&=\sigma \overline{X}_t^{\beta}dW_t+ \frac{\sigma^2}{2}  \beta\overline{X}_t^{2\beta-1}dt, & t> 0,
&\overline{X}_0=\overline{x}>0
\end{array}
\end{equation}
Furthermore, for any $t\geq 0$ and $y\geq0$, let  $(\tau_t^y)_{t\geq0, y\in[0,\infty)}$ be a family of $\mathbb{P}$-a.s. finite $(\mathcal{F})_{t\geq0}$-stopping times, 
such that for any $t\geq 0$
\begin{equation}\label{ConvergenceOfTimechanges}
\tau_t^{\widetilde{y}} \stackrel{P}{\longrightarrow} \tau_t^y, \quad \textrm{as} \quad \widetilde{y} \rightarrow y.
\end{equation}
Then for any $f\in C_{\infty}[0,\infty)$ and $y\in [0,\infty)$
the following convergence statements hold:
\begin{equation}\label{StatementProp2}
 \lim_{x\rightarrow\infty,\ \widetilde{y}\rightarrow y}\mathbb{E}^x\left[f\left(\widetilde X_{\tau_t^{\widetilde{y}}}\right)\right]=0\qquad \textrm{resp.} \qquad  \lim_{x\rightarrow\infty,\ \widetilde{y}\rightarrow y}\mathbb{E}^x\left[f\left(\overline X_{\tau_t^{\widetilde{y}}}\right)\right]=0.
\end{equation}
\end{proposition}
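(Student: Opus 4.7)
The plan is to reduce \eqref{StatementProp2} to a Brownian hitting tail estimate via a Lamperti-type transformation, in three steps. First, I fix $\epsilon > 0$ and exploit $f \in C_\infty[0,\infty)$ by choosing $M > 0$ with $|f(z)| < \epsilon$ for $z \geq M$, which gives
\begin{equation*}
\bigl|\mathbb{E}^x[f(\widetilde X_{\tau_t^{\widetilde y}})]\bigr| \;\leq\; \epsilon \;+\; \|f\|_\infty \, \mathbb{P}^x\bigl(\widetilde X_{\tau_t^{\widetilde y}} \leq M\bigr).
\end{equation*}
The $\mathbb{P}$-a.s.\ finiteness of $\tau_t^y$ together with \eqref{ConvergenceOfTimechanges} makes the family $(\tau_t^{\widetilde y})_{\widetilde y}$ tight in a neighborhood of $y$, so for any $\delta > 0$ I can choose a deterministic $T < \infty$ with $\mathbb{P}(\tau_t^{\widetilde y} > T) < \delta$ uniformly in $\widetilde y$ near $y$; path-continuity of $\widetilde X$ then yields
\begin{equation*}
\mathbb{P}^x\bigl(\widetilde X_{\tau_t^{\widetilde y}} \leq M\bigr) \;\leq\; \mathbb{P}^x\Bigl(\inf_{s \leq T} \widetilde X_s \leq M\Bigr) + \delta,
\end{equation*}
reducing the task to the uniform hitting bound $\mathbb{P}^x(\inf_{s\leq T}\widetilde X_s \leq M) \to 0$ as $x\to\infty$ (analogously for $\overline X$).

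For this bound I apply a Lamperti-type transformation to the power $1-\beta$, valid for $\beta \in [0,1)$. In the drifted case a direct It\^o computation shows that the Stratonovich drift and the second-order correction cancel exactly, so that
\begin{equation*}
d(\overline X_t^{1-\beta}) = (1-\beta)\sigma \, dW_t, \qquad\text{i.e.,}\qquad \overline X_t^{1-\beta} = x^{1-\beta} + (1-\beta)\sigma W_t
\end{equation*}
up to the first zero of the right-hand side. Hence $\mathbb{P}^x(\inf_{s\leq T}\overline X_s \leq M) = \mathbb{P}(\inf_{s\leq T}W_s \leq -(x^{1-\beta}-M^{1-\beta})/((1-\beta)\sigma))$, which vanishes as $x\to\infty$ since $\inf_{s\leq T} W_s$ is an a.s.\ finite random variable. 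For the driftless process $\widetilde X$ the same transformation leaves an extra drift term,
\begin{equation*}
d(\widetilde X_t^{1-\beta}) = (1-\beta)\sigma\, dW_t - \tfrac{\sigma^2\beta(1-\beta)}{2}\widetilde X_t^{\beta-1}\,dt,
\end{equation*}
and the key point is to apply It\^o only up to the stopping time $\tau_M := \inf\{s : \widetilde X_s \leq M\}$. On $[0,\tau_M]$ one has $\widetilde X_s \geq M$, so (since $\beta-1 < 0$) the nonlinear drift is dominated by $M^{\beta-1}T$ on the event $\{\tau_M \leq T\}$; rearranging then gives the inclusion
\begin{equation*}
\{\tau_M \leq T\} \;\subseteq\; \Bigl\{\sup_{s\leq T}(-W_s) \;\geq\; \frac{x^{1-\beta}-M^{1-\beta}}{(1-\beta)\sigma} - \tfrac{\sigma\beta}{2}M^{\beta-1}T\Bigr\},
\end{equation*}
whose probability tends to $0$ by the explicit law of the running minimum of a Brownian motion.

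For the boundary case $\beta = 1$ the Lamperti exponent degenerates, so I would handle it separately via the explicit geometric Brownian expressions $\widetilde X_t = x\exp(\sigma W_t - \sigma^2 t/2)$ and $\overline X_t = x\exp(\sigma W_t)$: the laws of $\widetilde X_{\cdot}/x$ and $\overline X_{\cdot}/x$ do not depend on $x$, so $\mathbb{P}^x(\inf_{s\leq T}\widetilde X_s \leq M)$ reduces to a Gaussian tail in $\log(M/x) \to -\infty$. Combining these pieces yields $\limsup_{x\to\infty,\,\widetilde y \to y}\bigl|\mathbb{E}^x[f(\widetilde X_{\tau_t^{\widetilde y}})]\bigr| \leq \epsilon + \|f\|_\infty\delta$; letting $\delta \downarrow 0$ and $\epsilon \downarrow 0$ closes the proof. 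The main obstacle is the drift control for $\widetilde X$: the naive It\^o expansion on all of $[0,T]$ leaves a term that could blow up near zero, so one must confine the comparison to $[0,\tau_M]$ in order to obtain a bound that is uniform in $x$.
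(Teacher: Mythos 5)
Your proof is correct, and its first two stages coincide with the paper's: the reduction via $f\in C_\infty$ to a tail bound $\mathbb{P}^x(\widetilde X_{\tau_t^{\widetilde y}}\le M)$, and the replacement of the random time by a deterministic horizon $T$ using the a.s.\ finiteness of $\tau_t^y$ together with \eqref{ConvergenceOfTimechanges} (the paper runs the same $\epsilon/3$-argument in \eqref{eq:CalcSIC}), leaving the hitting estimate $\mathbb{P}^x(T^{\widetilde X}_{[0,M]}\le T)\to 0$ as $x\to\infty$. Where you genuinely diverge is in proving this last estimate: the paper invokes Feller's boundary classification, checking via the speed-measure integral $\int_1^\infty x^{1-2\beta}\,dx=\infty$ (Lemma \ref{CEVNotEntrance}) that $+\infty$ is not an entrance boundary and concluding by monotonicity in $x$; you instead apply the Lamperti map $z\mapsto z^{1-\beta}$ and reduce everything to the explicit law of the running minimum of Brownian motion, with the correct localization to $[0,\tau_M]$ so that the nonlinear drift term $\widetilde X_s^{\beta-1}$ stays bounded by $M^{\beta-1}$, and a separate elementary treatment of $\beta=1$. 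Your route is more self-contained and quantitative (it yields an explicit Gaussian rate), at the cost of a case split at $\beta=1$; the paper's route is softer but leans on the general one-dimensional boundary theory. A pleasant side effect of your computation: the exact cancellation of the Stratonovich drift under $z\mapsto z^{1-\beta}$ identifies the natural scale of $\overline X$ as $x^{1-\beta}/(1-\beta)$, which is in fact what the paper's Lemma \ref{CEVNotEntrance} should read in place of the stated $p(x)=x^{\beta+1}/(\beta+1)$ (the latter does not solve $\tfrac12\sigma^2x^{2\beta}p''+\tfrac{\sigma^2}{2}\beta x^{2\beta-1}p'=0$), so your version also repairs a slip in the paper's argument for the drifted case.
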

\begin{proof}
For notational simplicity we only consider $\widetilde X$ here. The statement  \eqref{StatementProp2} follows if for any $\epsilon>0$ and $r>$ there exist constants $N:=N(\epsilon,r)>0$ and $\widetilde{\delta}:=\widetilde{\delta}(\epsilon,r)$, such that
\begin{equation}\label{SIC}
\mathbb{P}^x\left[\widetilde X_{\tau_t^{\widetilde{y}}}\leq r\right] <\epsilon \quad \textrm{for all}\quad x\geq N,\widetilde{y}\in B_y(\widetilde \delta).
\end{equation}
Let us first consider $\tau_t^{\widetilde{y}}$ deterministic, say $\tau_t^{\widetilde{y}}=\overline{t}<\infty$. Then indeed, for any $f \in C_{\infty}[0,\infty)$ and any $\epsilon>0$, let  $\overline{f}:=\sup_{s \in [0,\infty)}f(s)$ and $[0,r]\subset [0,\infty)$ denote the compact set such that 
$|f(s)|<\epsilon$ for $s\notin [0,r]$. Then
\begin{equation*}\mathbb{E}^x\left[f\left(\widetilde X_{\overline{t}}\right)\right]=\mathbb{E}^x\left[f\left(\widetilde X_{\overline{t}}\right)1_{\{\widetilde X_{\overline{t}}>r\}}\right]+\mathbb{E}^x\left[f\left(\widetilde X_{\overline{t}}\right)1_{\{ \widetilde X_t\leq r\}}\right]
\leq \epsilon+\overline{f} \ \mathbb{P}^x\left[\widetilde X_{\overline{t}}\leq r\right].
\end{equation*}
Now let $\epsilon>0$ be arbitrary but fixed. Since $\tau_t^{\widetilde{y}},\tau_t^{y}$ a.s. finite,
there exists a $T>0$ such that 
\begin{equation}\label{eq:randomtimefinite}
\mathbb{P}\left[\tau_t^y >T\right]<\frac{\epsilon}{3},
\end{equation}
and since $\tau_t^{\widetilde{y}},\tau_t^{y}$ are stopping times, for any $\delta>0$ the sets $\{|\tau_t^{\widetilde{y}}-\tau_t^{y}|>\delta\}$ 
are measurable. 
By (\ref{ConvergenceOfTimechanges}) there exists a $\delta>0$ such that 
\begin{equation}\label{DifferenceTauDelta}
\mathbb{P}\left[|\tau_t^{\widetilde{y}}-\tau_t^{y}|>\delta\right]<\frac{\epsilon}{3}. 
\end{equation}
Then for any $r>0$ there exists an $N(\epsilon,r, T)>0$, such that 
\begin{equation}\label{eq:CalcSIC}\begin{array}{rl}
\mathbb{P}^x\left[\widetilde X_{\tau_t^{\widetilde{y}}}\leq r\right]
= &\mathbb{P}^x\left[ \{\widetilde X_{\tau_t^{\widetilde{y}}}\leq r\}\cap \{\tau_t^{y}\leq T\} \right]
+\mathbb{P}^x\left[ \{\widetilde X_{\tau_t^{\widetilde{y}}}\leq r\}\cap \{\tau_t^{y}> T\} \right]
\\\stackrel{\eqref{eq:randomtimefinite}}{<}
&\mathbb{P}^x\left[\{\widetilde X_{\tau_t^{\widetilde{y}}}\leq r\}
\cap\{\tau_t^{y}\leq T\}
\cap\{ |\tau_t^{\widetilde{y}}-\tau_t^{y}|\leq\delta \}\right]\\
& +\mathbb{P}^x\left[\{\widetilde X_{\tau_t^{\widetilde{y}}}\leq r\}
\cap\{\tau_t^{y}\leq T\}
\cap \{ |\tau_t^{\widetilde{y}}-\tau_t^{y}|>\delta \}\right]
+\frac{\epsilon}{3}\\
\stackrel{(\ref{DifferenceTauDelta})}{<}
&\mathbb{P}^x\left[\{\widetilde X_{\tau_t^{\widetilde{y}}}\leq r\}
\cap\{\tau_t^{y}\leq T\}
\cap\{ |\tau_t^{\widetilde{y}}-\tau_t^{y}|\leq \delta \}\right]+\frac{\epsilon}{3}+\frac{\epsilon}{3}\\
\leq\
&\mathbb{P}^x\left[\{\widetilde X_{\tau_t^{\widetilde{y}}}\leq r\}
\cap\{ \tau_t^{\widetilde{y}}\leq T+\delta \}\right]+\frac{\epsilon}{3}+\frac{\epsilon}{3}\\
\leq\
&\mathbb{P}^x\left[ \exists \ \widetilde t \in [0,T+\delta]: \widetilde X_{\widetilde t} \in [0,r] \right]+\frac{\epsilon}{3}+\frac{\epsilon}{3}.
\end{array}\end{equation}
The last probability in \eqref{eq:CalcSIC} coincides with the probability of hitting $[0,r]$ before $T+\delta$
\begin{equation*}
\mathbb{P}^x\left[ \exists \ \widetilde t \in [0,T+\delta]: \widetilde X_{\widetilde t} \in [0,r] \right]=
\mathbb{P}^x\left[T_{[0,r]}^{\widetilde X}\leq T+\delta \right]. 
\end{equation*}
Hence, \eqref{SIC} follows if for any $\epsilon,r, \widetilde T:=T+\delta>0$ there exists a constant $N(\epsilon,r, \widetilde T)>0$, such that 
\begin{equation}\label{eq:ReducedNecessaryCond}
\mathbb{P}^x\left[T_{[0,r]}^{\widetilde X}\leq \widetilde T \right]<\frac{\epsilon}{3} 
\end{equation}
for all $x\geq N(\epsilon,r,\widetilde T)$.
For regular diffusions on an interval $[a,b]\subset \R$, Feller's boundary classification  \cite[ p. 461, eq. (21)]{kallenberg2002foundations}
provides a sufficient condition for \eqref{eq:ReducedNecessaryCond}:
The value of $\mathbb{P}^x[T^{X}_{[0,r]}\leq \widetilde T]$ is clearly non-decreasing in $\widetilde T$ and by the strong Markov property non-increasing in $x$. Hence if the right boundary point ($b=\infty$) is not of entrance type\footnote{
That is, by \cite[ p. 461]{kallenberg2002foundations} if
 \begin{equation*}\label{EntranceBoundary}
 \lim_{T\rightarrow \infty} \left( \inf_{x>r}\mathbb{P}^x\left[T^{X}_{[0,r]}\leq \widetilde T\right] \right)=0, \quad r>0.
 \end{equation*}}, then it holds in particular for any finite $\widetilde T>0$ that
\begin{align}\label{NotEntranceBoundary}
\lim_{x\rightarrow \infty} \mathbb{P}^x\left[T^{X}_{[0,r]}\leq \widetilde T\right] = \inf_{x>r}\mathbb{P}^x\left[T^{X}_{[0,r]}\leq \widetilde T\right] = 0,
\end{align}
where the first equality holds by monotonicity in $x$, the second by monotonicity in $\widetilde T$.
\end{proof}
\begin{lemma}\label{CEVNotEntrance}
Let $(\Omega,\mathcal{F},(\mathcal{F})_{t\geq0},\mathbb{P})$ be a stochastic basis and let $\widetilde X$ (resp. $\overline{X}$) denote CEV process (resp. the process in \eqref{eq:CEVProcess}) on the interval $[a,b]=[0,\infty]$ with parameters 
$\beta \in[0,1]$ and $\sigma>0$.
Then for any $\beta \in [0,1]$ the right endpoint $b=\infty$ is
not an entrance boundary for $\widetilde X$ (resp. $\overline{X}$). 
\end{lemma}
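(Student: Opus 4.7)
The plan is to apply Feller's boundary classification at $\infty$ directly. The key input is the standard test (cf.~\cite[p.~461]{kallenberg2002foundations}): for a one-dimensional regular diffusion on $(0,\infty)$ with scale function $s$ satisfying $s(\infty) = \infty$ and speed density $m$, the right boundary $\infty$ is of entrance type if and only if
\[
J(\infty) \;:=\; \int_{x_0}^{\infty} \bigl( s(y) - s(x_0) \bigr)\, m(y)\, dy \;<\; \infty.
\]
So it suffices to check that $s(\infty) = \infty$ and $J(\infty) = \infty$ for both $\widetilde X$ and $\overline X$.

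First I would compute $s$ and $m$ explicitly. Since $\widetilde X$ has vanishing drift, it is already in natural scale, whence $s(x) = x$ and $m(x) = 2/(\sigma^2 x^{2\beta})$. For $\overline X$ with drift $\mu(x) = (\sigma^2/2)\beta x^{2\beta-1}$, the identity $s'(x) = \exp\bigl(-\int 2\mu/\sigma^2\,dy\bigr) = x^{-\beta}$ yields $s(x) = x^{1-\beta}/(1-\beta)$ for $\beta \in [0,1)$ and $s(x) = \log x$ for $\beta = 1$, together with $m(x) = 2 x^{-\beta}/\sigma^2$. In all cases $s(\infty) = \infty$, so the entrance/natural dichotomy is entirely controlled by $J(\infty)$.

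The second step is a short asymptotic analysis of $J(\infty)$. In both settings the integrand $(s(y)-s(x_0))m(y)$ behaves like $y^{1-2\beta}$ as $y\to\infty$ (up to positive multiplicative constants), and $\int^{\infty} y^{1-2\beta}\, dy$ diverges for every $\beta \in [0,1]$. The marginal values $\beta = 1/2$ for $\widetilde X$ (where the integrand reduces to $(y-x_0)/y$) and $\beta = 1$ for $\overline X$ (where it becomes proportional to $(\log y)/y$) require a separate but elementary check; both integrals again diverge. Hence $J(\infty) = \infty$ throughout, so $\infty$ cannot be an entrance boundary.

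The only obstacle I anticipate is notational: several conventions coexist in the literature for the four Feller integrals at a boundary. Once one has fixed a convention under which ``$s(\infty) = \infty$ and $J(\infty) = \infty$'' is equivalent to ``$\infty$ is of natural (and in particular not entrance) type'' --- as in the formulation of \cite[p.~461]{kallenberg2002foundations} used in the proof of Proposition~\ref{Prop2} --- the remainder of the argument reduces to the elementary power-logarithmic computations sketched above.
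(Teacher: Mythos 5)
Your proof is correct and follows essentially the same route as the paper: both reduce the claim to Feller's boundary test at $+\infty$ and observe that the relevant integral behaves like $\int^\infty y^{1-2\beta}\,dy$, which diverges for every $\beta\in[0,1]$. The only organizational difference is in the treatment of $\overline X$: the paper maps it to natural scale and reduces to a driftless power diffusion with exponent $2\beta/(\beta+1)\in[0,1]$, whereas you keep the drifted process and evaluate the Feller integral directly with $s(x)=x^{1-\beta}/(1-\beta)$ (resp.\ $\log x$ for $\beta=1$) and speed density $\propto x^{-\beta}$; these are equivalent computations, and your scale function is in fact the correct one for the dynamics of $\overline X$ (the exponent $\beta+1$ appearing in the paper's $p$ looks like a sign slip, though it does not affect the conclusion since the transformed exponent still lies in $[0,1]$).
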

\begin{proof}
For the CEV process it is well known that the right endpoint $b=\infty$ 
is not an entrance boundary for any $\beta\in[0,1]$. We recall the argument from \cite[Theorem 23.13 (iii)]{kallenberg2002foundations} here. Note that the speed measure $\nu$ of the process $\widetilde X$ has density 
$\sigma(x)^{-2}=x^{-2\beta}$, $x\geq 0$ \cite[p. 458]{kallenberg2002foundations}). Then the claim follows from Feller's boundary classification \cite[Theorem 23.12]{kallenberg2002foundations}) and from
\begin{equation*}
\int_1^\infty x\nu(dx)=\int_1^\infty x^{1-2\beta} dx=
 \begin{cases}
  \infty\quad &\textrm{if} \quad \beta\leq 1\\
 \frac{1}{2\beta-2}\quad  &\textrm{if} \quad \beta> 1.
 \end{cases}
\end{equation*}
For the process $\overline{X}$ in \eqref{eq:CEVProcess} the function $p(x)=\frac{x^{\beta+1}}{\beta+1}$ is scale function  (see \cite[Theorem 23.7, p.456]{kallenberg2002foundations}).
In fact, the process $S_t:=p(\widetilde X_t)$, $t\geq0$ satisfies
\begin{equation}
\begin{array}{rrlrl}
dS_t&=c S_t^{2\beta/(\beta+1)}dW_t, & t>0,&S_0&=p(x),
\end{array}\end{equation}
for $c:=\frac{1}{2}(\beta+1)^{2\beta/(\beta+1)}$ and the arguments from \cite[Theorem 23.13 (iii)]{kallenberg2002foundations} apply to $S$. Furthermore, $2\beta/(\beta+1)\in [0,1]$ whenever $\beta\in[0,1]$ and $b=\infty$ is not entrance.
\end{proof}
\subsubsection*{Proofs of the Generalized Feller property for SABR and SABR-Brownian motion}
According to the following Lemma \ref{Th:ReductionSubEigenspaces}, a key ingredient 
for the characterization of the Banach spaces $\cBp$ in Theorem \ref{Th:GeneralizedFellerPropSABRwithDrift} 
is to construct admissible weight functions $\psi$, which are sub-eigenfunctions (that is, functions which satisfy \eqref{eq:SubEigenspace}) of the respective infinitesimal generators $\Delta_g$ of \eqref{eq:SABRSDEwithDrift} resp. $A$ of \eqref{eq:SABRSDE}.
\begin{lemma}[Reduction to Sub-Eigenspaces]\label{Th:ReductionSubEigenspaces}
Let $A$ denote the infinitesimal generator of the semigroup $(P_t)_{t\geq 0}$ and $\psi$ an admissible weight function. Then property $(\widetilde F 4)$ follows if there exists a constant $\lambda>0$ such that
\begin{equation}\label{eq:SubEigenspace}
A \psi(x) \leq \lambda \psi(x) \qquad \textrm{for all} \quad x \in D.
\end{equation}
\end{lemma}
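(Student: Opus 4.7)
The plan is to upgrade the pointwise sub-eigenfunction inequality $A\psi\leq \lambda\psi$ into the exponential growth bound $P_t\psi(x)\leq e^{\lambda t}\psi(x)$ for all $x\in D$ and then to pass this bound from the single function $\psi$ to an operator norm estimate on all of $\cBp$. First, I would like to write Dynkin's identity
\begin{equation*}
P_t\psi(x)=\psi(x)+\int_0^t P_s(A\psi)(x)\,ds,
\end{equation*}
and use the assumption $A\psi\leq \lambda\psi$ together with the positivity of $(P_s)_{s\geq 0}$ (a consequence of the fact that $(P_t)$ is the transition semigroup of the underlying process) to get
\begin{equation*}
P_t\psi(x)\leq \psi(x)+\lambda\int_0^t P_s\psi(x)\,ds.
\end{equation*}
Gronwall's inequality applied pointwise in $x$ then yields $P_t\psi(x)\leq e^{\lambda t}\psi(x)$ for all $x\in D$ and $t\geq 0$.

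Given this bound, the extension to $\cBp$ is immediate: for any $f\in \cBp$ one has $|f(x)|\leq \|f\|_\psi\,\psi(x)$, whence by positivity
\begin{equation*}
|P_tf(x)|\leq P_t|f|(x)\leq \|f\|_\psi\,P_t\psi(x)\leq e^{\lambda t}\|f\|_\psi\,\psi(x).
\end{equation*}
Dividing by $\psi(x)$ and taking the supremum in $x\in D$ gives $\|P_tf\|_\psi\leq e^{\lambda t}\|f\|_\psi$, so that $\|P_t\|_{L(\cBp)}\leq e^{\lambda t}$. Choosing any $\epsilon>0$ and setting $C:=e^{\lambda\epsilon}$ establishes exactly property $(\widetilde F4)$.

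The main obstacle is the rigorous justification of Dynkin's formula for the unbounded function $\psi$, since $\psi$ typically does not lie in the strong domain of the generator on $C_\infty$ (or in $\cBp$). The standard remedy, which I would carry out here, is a localisation: by admissibility of $\psi$, the sub-level sets $K_R=\{\psi\leq R\}$ are compact, so the exit times $\tau_R:=\inf\{t\geq 0:(X_t,Y_t)\notin K_R\}$ increase to $+\infty$ almost surely under $\PP^{(x,y)}$ for each starting point $(x,y)\in D$. Applying It\^o's formula (or the optional sampling theorem to the martingale in \eqref{eq:SABRMartingaleProblem}) to $\psi$ along the stopped process $(X_{t\wedge\tau_R},Y_{t\wedge\tau_R})$ yields a localised version of Dynkin's identity with $A\psi$ replaced by its restriction to $K_R$, on which the sub-eigenfunction bound applies. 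Gronwall on $[0,t\wedge\tau_R]$ then gives $\E^{(x,y)}[\psi(X_{t\wedge\tau_R},Y_{t\wedge\tau_R})]\leq e^{\lambda t}\psi(x,y)$, and Fatou's lemma together with $\tau_R\uparrow\infty$ transfers this bound to $P_t\psi(x,y)\leq e^{\lambda t}\psi(x,y)$, as required. The remainder of the argument is then the routine extension outlined above.
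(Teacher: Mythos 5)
Your proof follows essentially the same route as the paper's: Dynkin's identity plus positivity of the semigroup gives the integral inequality $P_t\psi(x)\leq\psi(x)+\lambda\int_0^tP_s\psi(x)\,ds$, Gronwall upgrades it to $P_t\psi\leq e^{\lambda t}\psi$, and the bound is transferred to all of $\cBp$ via $|f|\leq\|f\|_\psi\,\psi$ and positivity, yielding $(\widetilde F4)$. The paper simply asserts the integral inequality without comment, so your localisation argument (stopping at the exit times of the compact sublevel sets $K_R$, applying optional sampling and Gronwall to the stopped process, and passing to the limit with Fatou) supplies a justification the paper leaves implicit rather than taking a genuinely different approach.
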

\begin{proof}
Assume that \eqref{eq:SubEigenspace} holds. Then there is an $\epsilon >0$ such that
\begin{equation}\label{eq:SubEigenspaceGronwall1}
P_t\psi(x)\leq \psi(x)+ \lambda \int_0^tP_s\psi(x)ds, \qquad \textrm{for any}\quad t \in  [0,\epsilon].
\end{equation}
Then Gronwall's inequality yields that
\begin{equation}\label{eq:SubEigenspaceGronwall2}
|P_t\psi(x)|\leq \widetilde \lambda\psi(x),\qquad \textrm{for any} \quad t \in [0,\epsilon], \ x\in D. 
\end{equation}
The definition of the norm
$||\cdot||_{\psi}$ 
yields with  $\lambda':=||f||_{\psi}$ the inequality
\begin{equation*}
f(x)\leq \lambda' \psi(x) \qquad \textrm{for all} \quad x\in D.
\end{equation*} 
Hence, by positivity $(\widetilde F 3)$ of the semigroup and by linearity the following estimate holds:
\begin{equation}
P_t f(x) \leq \lambda' P_t \psi(x) \leq \lambda \psi (x),\qquad \textrm{for any} \quad x \in D, \ t \in [0,\epsilon]
\end{equation}
holds, where we used \eqref{eq:SubEigenspaceGronwall2} in the second inequality seting $\lambda:=\lambda' \widetilde \lambda$.
\end{proof}
\begin{lemma}[An admissible weight function $\psi$: The ad-hoc approach]
The function
\begin{align*}
 &\psi: \quad \Ss \quad  \longrightarrow \quad (0,\infty),\\
       &\psi(x,y):=y+2x^{1-\beta}+\frac{x^{2-2\beta}}{y} 
\end{align*}
is a sub-eigenfunction of the SABR infinitesimal $A$, that is
 \begin{align*}
y^2 \left( x^{2\beta} \tfrac{\partial^2 }{\partial x^2} + 2 \rho x^{\beta} \tfrac{\partial^2}{\partial x\partial y}
    + \tfrac{\partial^2 }{\partial y^2} \right)\psi(x,y)\leq 2 \psi(x,y) \quad  \textrm{holds for all} \quad x,y \in \Ss.  
 \end{align*}
\end{lemma}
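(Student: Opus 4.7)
The plan is a direct (if slightly lengthy) computation, driven by the observation that the three summands in $\psi(x,y)=y+2x^{1-\beta}+x^{2-2\beta}y^{-1}$ have been tuned so that the most singular contribution (the one of order $y^{-1}$) cancels exactly against $2\psi$, leaving only controllable lower-order terms.

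First, I would compute the four partial derivatives $\partial_{xx}\psi$, $\partial_{xy}\psi$, $\partial_{yy}\psi$ summand by summand. The key intermediate fact to record is that $y^{2}\,\partial_{yy}\psi=2x^{2-2\beta}y^{-1}$, which matches exactly twice the last summand of $\psi$. Next, I would assemble
\begin{equation*}
y^{2}\bigl(x^{2\beta}\partial_{xx}+2\rho x^{\beta}\partial_{xy}+\partial_{yy}\bigr)\psi(x,y)
\end{equation*}
as a sum of four monomials: one of order $x^{\beta-1}y^{2}$ coming from $y^{2}x^{2\beta}\partial_{xx}$, one of order $y$ from the same contribution, one of order $x^{1-\beta}$ from the mixed term, and the $x^{2-2\beta}y^{-1}$ term just mentioned. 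Subtracting $2\psi=2y+4x^{1-\beta}+2x^{2-2\beta}y^{-1}$ then kills the singular $y^{-1}$ piece and reduces the desired inequality to a termwise check on the three remaining monomials.

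The remaining step is to verify that each surviving coefficient has the correct sign. The $x^{\beta-1}y^{2}$ term carries a factor $-2\beta(1-\beta)$, which is non-positive for $\beta\in[0,1)$, so it contributes favorably and can be discarded. The $y$ coefficient reduces to $(2-2\beta)(1-2\beta)-2=-2\beta(3-2\beta)$, which is again non-positive on $[0,1)$; this is where I would note that, despite the sign of $(1-2\beta)$ flipping at $\beta=\tfrac12$, the overall coefficient stays $\leq 0$ after the cancellation with $-2$. Finally, the $x^{1-\beta}$ coefficient becomes $-4\bigl(1+\rho(1-\beta)\bigr)$, and $\rho\in(-1,1)$ together with $1-\beta\in(0,1]$ yields $\rho(1-\beta)\geq -|\rho|>-1$, so this term is strictly non-positive as well. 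Combining the three sign statements gives $y^{2}\widetilde{A}\psi-2\psi\leq 0$ pointwise on $\mathcal{S}$, which is the claim.

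The main obstacle is purely bookkeeping: one has to keep track of the factor $1-2\beta$ changing sign at $\beta=\tfrac12$, and one has to invoke the strict inequality $\rho>-1$ (not merely $\rho\geq -1$) to rule out a borderline cross term. Conceptually, nothing deeper is going on — the function $\psi$ has been reverse-engineered precisely so that its action under $\widetilde{A}$ at the most singular order $y^{-1}$ produces exactly a multiple of $\psi$, while the parameter constraints $\beta\in[0,1)$ and $\rho\in(-1,1)$ ensure that the lower-order residuals each have the right sign.
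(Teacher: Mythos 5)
Your proposal is correct and is essentially the same direct computation as the paper's proof: the same four partial derivatives, the same exact cancellation of the $x^{2-2\beta}/y$ term against $2\psi$, and the same termwise sign checks (the paper bounds $(2-2\beta)(1-2\beta)\leq 2$ and $-2\rho(2-2\beta)x^{1-\beta}\leq 4x^{1-\beta}$ rather than writing the differences $-2\beta(3-2\beta)$ and $-4(1+\rho(1-\beta))$, which is the same estimate). The only cosmetic remark is that strictness of $\rho>-1$ is not actually needed — the cross-term coefficient is still $\leq 0$ in the borderline case $\rho=-1$, $\beta=0$.
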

\begin{proof}The derivatives of $\psi$ are 
\begin{align*}
 \partial_{xx} \psi(x,y) & = (2-2\beta)(1-2\beta) \tfrac{x^{-2\beta}}{y} -\beta(1-\beta)2 x^{-1-\beta}\\
 \partial_{xy}^2 \psi(x,y) & = -(2-2\beta) \tfrac{x^{1-2 \beta}}{y^2}\\
 \partial_{yy} \psi(x,y) & = 2 \tfrac{x^{2-2\beta}}{y^3}.
\end{align*}
Therefore,
\begin{align*}
A \psi(x,y)
&= \left(  (2-2\beta)(1-2\beta)y -\beta(1-\beta)2 x^{\beta-1}y^2 -2 \rho (2-2\beta) x^{1-\beta}+ 
 2\tfrac{x^{2-2\beta}}{y}\right) \\
 &\leq 2 y+4x^{1-\beta}+2\frac{x^{2-2\beta}}{y}=2 \psi(x,y),
 \end{align*}
The claimed inequality follows from the estimates
\begin{align*}
&\quad \ (2-2\beta)(1-2\beta) \leq 2  \quad \quad \quad \quad \textrm{for all} \ \beta \in [0,1]\\
&-\beta (1-\beta) 2 x^{\beta-1}y^2\leq 0 \quad \quad \quad \quad \textrm{for all} \ x,y  \in \R_{\geq 0}\times\R_+ \quad \textrm{and all} \ \beta \in [0,1]\\
&-2 \rho (2-2\beta) \ x^{1-\beta} \ \leq 4 \ x^{1-\beta}\quad \ \textrm{for all} \ x  \in \R_{\geq 0}, \ \textrm{all} \ \beta \in [0,1]\ \textrm{and all} \ \rho \in [-1,1].
\end{align*}
\end{proof}
The ad-hoc approach yields a suitable weight function for all SABR parameters. However, this function grows only at a rate $x^{2-2\beta}$, and therefore with this choice of $\psi$ for parameters $\beta>\frac{1}{2}$ the space $\cBp(D)$ does not include the payoff function of a European call option.
In order to extend the above to higher exponents, we take a geometric approach:
We determine true eigenspaces of the hyperbolic Laplace-Beltrami operator and from these we construct suitable eigenspaces of the generator of the SABR-heat semigroup such as sub-eigenspaces of the above type---under suitable parameter restrictions---for arbitrarily high (integer) exponents for the SABR infinitesimal generator. 
Indeed, in \cite{HLW} a local isometry is introduced from the SABR-plane to the Poincar\'{e}-plane
\begin{equation}\label{eq:SABRIsometry}
\begin{array}{lrll}
\phi: & (\mathcal{S},g) & \longrightarrow (\mathbb{H},h), \\
& (\widehat{x},\widehat{y}) & \longmapsto \left(x,y\right)
 :=
\displaystyle \left(\frac{\widehat{x}^{1-\beta}}{\rrho(1-\beta)} - \frac{\rho \widehat{y}}{\rrho}, \widehat{y} \right). 
\end{array}
\end{equation}

\begin{lemma}[Radial eigenspaces on the Poincar\'{e} plane]\label{Th:RadialEigenspacesHyperbolic}
Radial eigenfunctions of hyperbolic Laplace-Beltrami operator $\Delta_h$ are of the form
\begin{align*}
L_{\lambda}(r_Z(z))=L_{\lambda}(\cosh (d(Z,z))=L_{\lambda}\left(1 + \frac{(x - X)^2 + (y - Y)^2}{2yY}\right), \quad z\in \mathbb{H}^2,
\end{align*}
where $Z=(X,Y)\in \mathbb{H}$ is an arbitrary fixed reference point, $d$ denotes the hyperbolic distance
\begin{equation}\label{eq:radialdistance}
d(Z,z)=\textrm{arcosh}\left(1 + \frac{(x - X)^2 + (y - Y)^2}{2yY}\right), \quad Z,z \in \mathbb{H}
\end{equation}
and $L_{\lambda}$ is a Legendre polynomial with $\lambda=-n(n+1)$, for some $n\in \mathbb{N}$. 
\end{lemma}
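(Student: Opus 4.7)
My plan is to reduce the claim to a classical ODE by working in geodesic polar coordinates centred at the reference point $Z=(X,Y)$. The hyperbolic metric on $\mathbb{H}=\{y>0\}$ is $ds^{2}=(dx^{2}+dy^{2})/y^{2}$, and a standard computation (using isometry invariance and the fact that the hyperbolic distance is a function of the Euclidean data $(x-X)^2+(y-Y)^2$ divided by $yY$) yields
\[
\cosh(d(Z,z)) \;=\; 1 + \frac{(x-X)^{2}+(y-Y)^{2}}{2yY},
\]
which identifies $r_{Z}(z)=\cosh(d(Z,z))$ and justifies \eqref{eq:radialdistance}. Thus the claim that a function $f(z)=L(r_{Z}(z))$ is an eigenfunction of $\Delta_{h}$ is equivalent to a statement about $L$ as a function of one variable.

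In geodesic polar coordinates $(\rho,\theta)$ around $Z$ the hyperbolic metric takes the rotationally symmetric form $d\rho^{2}+\sinh^{2}(\rho)\,d\theta^{2}$, and the Laplace--Beltrami operator becomes
\[
\Delta_{h} \;=\; \partial_{\rho}^{2}+\coth(\rho)\,\partial_{\rho}+\frac{1}{\sinh^{2}(\rho)}\,\partial_{\theta}^{2}.
\]
For a radial function $f(\rho)=L(\cosh\rho)$, substituting $r=\cosh\rho$ and using $\sinh^{2}\rho=r^{2}-1$ together with the chain rule gives
\[
\Delta_{h} f \;=\; (r^{2}-1)\,L''(r) + 2r\,L'(r).
\]
Thus the radial eigenvalue equation $\Delta_{h} f = \lambda f$ transforms into the Legendre differential equation
\[
(1-r^{2})\,L''(r) - 2r\,L'(r) + n(n+1)\,L(r) \;=\; 0, \qquad \text{with } \lambda = -n(n+1).
\]

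It remains to single out the Legendre \emph{polynomial} solutions: this is the classical observation that Legendre's equation on $r\in[1,\infty)$ admits two linearly independent solutions (Legendre functions of the first and second kind), but among these only $P_{n}$ extends as a smooth function to $r=1$ (i.e.\ to the reference point $Z$ where $\rho=0$). Regularity at the pole of the polar coordinate system forces polynomial solutions and hence the quantisation $n\in\mathbb{N}$. The main obstacle, albeit mild, is verifying the distance formula and carrying out the chain-rule computation cleanly enough to recognise Legendre's equation in the unnormalised form above; once that is done the identification with the $P_{n}$'s and the indexing of the eigenvalue by $\lambda=-n(n+1)$ is immediate from the classical Sturm--Liouville theory.
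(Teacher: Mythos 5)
Your computation is correct and does explicitly what the paper outsources entirely to a citation: the paper's ``proof'' of this lemma is the single line ``Radial eigenfunctions of the hyperbolic Laplace--Beltrami operator are well known, see Grigor'yan, pages 82 and 275.'' Your route --- verify the distance formula, pass to geodesic polar coordinates where $\Delta_h=\partial_\rho^2+\coth(\rho)\partial_\rho+\sinh^{-2}(\rho)\partial_\theta^2$, substitute $r=\cosh\rho$ to get $\Delta_h f=(r^2-1)L''(r)+2rL'(r)$, and recognise Legendre's equation --- is exactly the standard derivation underlying that reference, and the chain-rule computation checks out (it is consistent with the paper's Example, where $L(r)=r$ gives $\Delta_h(\cosh d)=2\cosh d$). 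So in the direction the paper actually uses this lemma (that $L_{n(n+1)}\circ\cosh\circ\, d$ \emph{is} a radial eigenfunction with eigenvalue $n(n+1)$), your argument is a complete and more self-contained proof than the paper offers.

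One claim in your last paragraph is not correct, however: regularity at the pole $r=1$ does \emph{not} force polynomial solutions or the quantisation $\lambda=-n(n+1)$ with $n\in\mathbb{N}$. Legendre's equation with parameter $\nu(\nu+1)$ for non-integer $\nu$ still has a solution $P_\nu(r)={}_2F_1\bigl(-\nu,\nu+1;1;\tfrac{1-r}{2}\bigr)$ that is analytic at $r=1$; only the second-kind solutions $Q_\nu$ are singular there. Correspondingly, the hyperbolic plane admits smooth radial eigenfunctions (spherical functions) for a continuum of eigenvalues, so the ``quantisation'' you invoke is an artefact of restricting attention to polynomial $L$, not a consequence of regularity at the reference point. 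This does not damage the lemma as it is used downstream (Theorem \ref{Th:GeneralizedFellerPropSABRwithDrift} only needs the stated functions to be eigenfunctions with the stated eigenvalues, and needs them to be polynomials in $r$ so that positivity and growth can be controlled), but the converse assertion that \emph{all} radial eigenfunctions arise this way, which both the lemma's phrasing and your closing Sturm--Liouville remark suggest, would require either restricting to the point spectrum in a suitable function space or simply being dropped.
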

\begin{proof}
Radial eigenfunctions of the hyperbolic Laplace-Beltrami operator $\Delta_h$ are well known. See \cite[pages 82 and 275]{grigoryan09AnalysisManifolds}. 
\end{proof}
\begin{example}\label{Ex:HyperbolicRadialEigenfunction}
An obvious example is the hyperbolic distance from a reference point $Z=(0,1)$, on a hyperbolic cosine scale, is the function 
\begin{align*}
 \tilde r_0(z):= \cosh d(z,Z) = 1 + \frac{x^2 + (y - 1)^2}{2y}
,\qquad z\in \HH^2.
\end{align*}
This function is an eigenfunction of $\Delta_h$ satisfying $\Delta_h \ \tilde r_0=2\ \tilde r_0$. 
\end{example}
The local isometry property of \eqref{eq:SABRIsometry} and an application of \cite[Lemma 3.27]{grigoryan09AnalysisManifolds} (see \eqref{eq:EigenfunctionPullback} below) allows us to construct eigenfunctions of the Laplace-Beltrami operator $\Delta_g$ of the SABR plane from eigenfunctions 
of the Laplace-Beltrami operator $\Delta_h$ of the hyperbolic plane.
\begin{lemma}[Implied eigenspaces on the SABR-plane]\label{Th:RadialEigenspacesSABR}
For any radial eigenfunction $\psi$ with eigenvalue $\lambda$ of the Laplace-Beltrami operator $\Delta_h$ on the Poincar\'{e}-plane $(\mathbb{H},h)$ the pullback $\phi^*\psi$ under the SABR-isometry \eqref{eq:SABRIsometry} (i.e. the composition $\phi\circ \psi:\Ss\rightarrow \R$) is a radial eigenfunction of the Laplace-Beltrami operator $\Delta_g$ of the SABR-plane $(\Ss,g)$, with $\Ss=\{(\hx,\hy):\hx,\hy>0\}$ to the same eigenvalue.
Therefore, radial eigenfunctions of $\Delta_g$ are of the form 
\begin{equation}\label{eq:SABRLaplEigenfunctionForm}
L_{\lambda}\left( \cosh \delta(\hz,\hZ)) \right)=L_{\lambda}\left(\cosh d(\phi(\hz),\phi(\hZ)) \right),\qquad \hz \in \Ss
\end{equation}
 for some fixed $\hZ=(\hX,\hY) \in \mathcal{S}$, where $L_{\lambda}$ denote Legendre polynomials with $\lambda=-n(n+1)$, for some $n\in \mathbb{N}$, and $\delta$ is the Riemannian distance function on the SABR-plane. 
 Explicitly,
\begin{equation}\label{eq:SABRLaplEigenfunction}
\cosh d(\phi(\hz),\phi(\hZ))= 1+ \frac{ 
\Big( \frac{1}{1-\beta}(\hX^{1-\beta}-\hx^{1-\beta}) - \rho(\hY-\hy)  \Big)^2}
{(1 - \rho^2)2\hY \hy} + \frac{(\hY-\hy)^2}{2\hY \hy}.
\end{equation}
\end{lemma}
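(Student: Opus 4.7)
The plan is to exploit the fact that $\phi$ in \eqref{eq:SABRIsometry} is a local isometry together with the naturality of the Laplace--Beltrami operator under isometries, thereby reducing the construction of SABR-radial eigenfunctions to the hyperbolic result already recorded in Lemma \ref{Th:RadialEigenspacesHyperbolic}.

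First I would verify that $\phi^{*}h=g$, which is where the isometry claim of \eqref{eq:SABRIsometry} comes in. Writing $(x,y)=\phi(\hat x,\hat y)$ with $x=\hat x^{1-\beta}/(\rrho(1-\beta))-\rho\hat y/\rrho$ and $y=\hat y$ one computes
\begin{equation*}
dx=\frac{\hat x^{-\beta}}{\rrho}\,d\hat x-\frac{\rho}{\rrho}\,d\hat y,\qquad dy=d\hat y.
\end{equation*}
Using the identity $\rho^{2}/\rrho^{2}+1=1/\rrho^{2}$ (valid because $\rrho^{2}=1-\rho^{2}$) the Poincar\'e metric $h=(dx^{2}+dy^{2})/y^{2}$ pulls back to
\begin{equation*}
\phi^{*}h=\frac{1}{(1-\rho^{2})\hat y^{2}}\bigl(\hat x^{-2\beta}\,d\hat x^{2}-2\rho\,\hat x^{-\beta}\,d\hat x\,d\hat y+d\hat y^{2}\bigr),
\end{equation*}
which is exactly the SABR-metric tensor $g$ of \eqref{eq:ManifoldMetric}.

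Second I would invoke the standard naturality statement, e.g.\ \cite[Lemma 3.27]{grigoryan09AnalysisManifolds}: for any local isometry $\phi\colon(M,g)\to(N,h)$ and any $\psi\in C^{2}(N)$ one has $\Delta_{g}(\phi^{*}\psi)=\phi^{*}(\Delta_{h}\psi)$. Hence, if $\Delta_{h}\psi=\lambda\psi$ on $\HH$ then $\Delta_{g}(\phi^{*}\psi)=\lambda(\phi^{*}\psi)$ on $\Ss$, and combining with Lemma \ref{Th:RadialEigenspacesHyperbolic} produces $\Delta_{g}$-eigenfunctions of the form $L_{-n(n+1)}(\cosh d(\phi(\hat z),\phi(\hat Z)))$ for any reference point $\hat Z\in\Ss$. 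Since $\phi$ is a local isometry it preserves Riemannian distance, so $\delta(\hat z,\hat Z)=d(\phi(\hat z),\phi(\hat Z))$ locally, which matches \eqref{eq:SABRLaplEigenfunctionForm}.

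Finally, the explicit expression \eqref{eq:SABRLaplEigenfunction} follows by substituting \eqref{eq:SABRIsometry} into the hyperbolic distance formula \eqref{eq:radialdistance}. Since $y=\hat y$ and $Y=\hat Y$ the vertical contribution immediately becomes $(\hat Y-\hat y)^{2}/(2\hat Y\hat y)$, while the horizontal contribution expands, via $x-X=(\hat x^{1-\beta}-\hat X^{1-\beta})/(\rrho(1-\beta))-\rho(\hat y-\hat Y)/\rrho$ together with $\rrho^{2}=1-\rho^{2}$, into the first summand of \eqref{eq:SABRLaplEigenfunction}. The main (but modest) obstacle is the first step: the algebraic verification of $\phi^{*}h=g$ hinges on the precise cancellation $\rho^{2}/\rrho^{2}+1=1/\rrho^{2}$ and on getting the sign and coefficient of the mixed $d\hat x\,d\hat y$ term correct. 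Once this is in place, the eigenfunction transfer via naturality and the substitution into \eqref{eq:radialdistance} are essentially routine.
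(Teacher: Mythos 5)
Your proposal is correct and follows essentially the same route as the paper: transfer of radial eigenfunctions from Lemma \ref{Th:RadialEigenspacesHyperbolic} via the naturality of the Laplace--Beltrami operator under the local isometry $\phi$ (citing \cite[Lemma 3.27]{grigoryan09AnalysisManifolds}), together with substitution of \eqref{eq:SABRIsometry} into the hyperbolic distance \eqref{eq:radialdistance}. The only difference is that you verify $\phi^{*}h=g$ by direct computation, whereas the paper simply takes the isometry property of \eqref{eq:SABRIsometry} as given from \cite{HLW}; your calculation is correct and only makes the argument more self-contained.
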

\begin{proof}
Let $Z:=\phi(\hZ),z:=\phi(\hz)\in \HH^2$. 
Then from the construction of the distance function 
\begin{equation}
\delta(Z,z)=d(\phi(\hZ),\phi(\hz)), \qquad \hZ,\hz \in \Ss
\end{equation}
on the SABR-plane it is immediate that for any $f:\R_{\geq 0}\rightarrow \R$
\begin{equation}
f(d(Z,z))=f(d(\phi(\hZ),\phi(\hz))=f(\delta(\hZ,\hz)), \qquad \hZ,\hz \in \Ss
\end{equation}
and the form \eqref{eq:SABRLaplEigenfunction} follows from Lemma \ref{Th:RadialEigenspacesHyperbolic} with \eqref{eq:SABRIsometry} directly.
Furthermore, by Lemma \ref{Th:RadialEigenspacesHyperbolic} 
$$\psi_{\lambda}~:=~L_{\lambda}~\circ~\cosh~\circ~d$$ is an eigenfunction of $\Delta_h$ to the eigenvalue $\lambda$, hence by Lemma \ref{Th:RadialEigenspacesHyperbolic} and by \cite[Lemma 3.27]{grigoryan09AnalysisManifolds}, the pullback $\phi^*\psi_{\lambda}$ is an eigenfunction of $\Delta_g$ to the same eigenvalue
\begin{equation}\label{eq:EigenfunctionPullback}
\Delta_g(\phi^*\psi_{\lambda})(\hz)=\Delta_g L_{\lambda} (\delta(\hZ,\hz))\stackrel{
}{=}\Delta_h L_{\lambda} (d(Z,z))=\Delta_h\psi_{\lambda}(z)=\lambda \psi_{\lambda}(z)=\lambda (\phi^*\psi_{\lambda})(\hz).
\end{equation}
\end{proof}
\smallskip
\begin{proof}[Proof of Theorem \ref{Th:GeneralizedFellerPropSABRwithDrift}]
Note that \eqref{eq:Cosh} is nothing but the function \eqref{eq:SABRLaplEigenfunction} with reference point $\phi(\hZ)=(\frac{1}{\rrho^2},1)\in\mathbb{H}$.
Hence the functions \eqref{eq:Cosh} are eigenfunctions of $\Delta_g$, which is an immediate consequence of Lemma \ref{Th:RadialEigenspacesSABR}. 
For $c\in [1,\infty)$ the reference point $\phi(\hZ)=(c/\rrho,1)\in \HH^2$ can always be realized as the image under $\phi$ of a point $\hZ\in \Ss=(0,\infty)\times(0,\infty)$ (that is, there exists $\hZ=(\hX,\hY)\in \Ss$ such that $\phi(\hX,\hY)=(c/\rrho,1)$):
By the definition the map $\phi$ in \eqref{eq:SABRIsometry},
\begin{equation}\label{eq:ReferencePoint}
(\hX,\hY)=\left(\left((1-\beta)(c+\rho) \right)^{1/(1-\beta)},1\right)  \qquad \Rightarrow  \qquad \phi(\hX,\hY)=\left(c/\rrho,1\right),
\end{equation}
and indeed for $c\in [1,\infty)$ the value of $c+\rho$ is always positive. The same statement holds whenever $\rho \in (0,1)$ or when $\rho\in (-1,0)$ but $c>|\rho|$.
If $\rho\in(-1,0)$ and $c\leq |\rho|$, then
it holds that $\hX > 0$, whenever the following conditions on the parameter $\beta$ are satisfied:
$\beta ~\in ~\{0\}\cup\left\{\frac{2m-1}{2m}, m\in \mathbb{N}\right\}.$ In this case, it is ensured that $1/(1-\beta)=2m$ for some $m\in \mathbb{N}$ and hence
$\hX=\left((1-\beta)(c+\rho) \right)^{1/(1-\beta)}>0$ and hence $(\hX,\hY)\in \Ss$ although $c+\rho<0$.
Furthermore, the eigenfunctions $\psi$ are admissible weight functions 
if the range is bounded from below.
The range of the function $r_{\hZ}(\cdot)\equiv\cosh(\delta(\hZ,\cdot))$ is on $(1,\infty)$ cf. \eqref{eq:SABRLaplEigenfunction}, hence composition with the Legendre polynomials $L_{n(n+1)}\left(r_{\hZ}(\cdot)\right)$ yields functions with range on $(1,\infty)$ by the properties of Legendre polynomials, cf. \cite[Section 8]{Abramowitz}.
Having constructed admissible weight functions, which are eigenfunctions of the operator, it follows from Lemma \ref{Th:ReductionSubEigenspaces} that the heat-semigroup \eqref{eq:HeatSemigrouponBpsi} has the generalized Feller property (FG). 
The strong continuity of the SABR-heat semigroup \eqref{eq:HeatSemigrouponBpsi} on the Banach spaces $\mathcal{B}^{\psi_n}$, $n\in \N$ is then a consequence of Theorem \ref{Th:StrongContinuityBpsi}. 
This completes the proof of all statements of Theorem \ref{Th:GeneralizedFellerPropSABRwithDrift} about the SABR-heat semigroup. 
Now for the SABR semigroup, recall that the following relationship holds between $\Delta_g$ and the generator $A$ of the SABR-model
\begin{equation}\label{eq:SABRandLaplaceRel}
 Af=\Delta_g f-\beta y^2 x^{2\beta-1} \frac{\partial f}{\partial x}, \qquad f\in C^{\infty}_c(D).
\end{equation}
Therefore, if an eigenfunction $\psi_{\lambda}$ of the Laplace-Beltrami operator is bounded with respect to the first order
term of \eqref{eq:SABRandLaplaceRel} in the sense \ref{eq:SABRandLaplaceEigenfuncRel} below, then $\psi_{\lambda}$ is also sub-eigenfunction of the SABR infinitesimal generator
\begin{equation}\label{eq:SABRandLaplaceEigenfuncRel}
\Delta_g\psi_{\lambda}=\lambda \psi_{\lambda}\qquad \textrm{and} \qquad A\psi_{\lambda}\leq\widetilde \lambda \psi_{\lambda} \quad \textrm{for some} \quad \lambda, \widetilde \lambda \in \R. 
\end{equation}
Such a condition is
guaranteed to hold if the following \emph{drift condition} is fulfilled: There exists a constant $\widehat \lambda \in \R$ such that
\begin{equation}\label{eq:DriftCondition}
0\leq \beta \hy^2 \hx^{2\beta-1} \frac{\partial \psi_{\lambda}}{\partial \hx} + \widehat{\lambda}\psi_{\lambda} \qquad \textrm{for all}\quad \hx,\hy>0.
\end{equation}
For any $\psi_{\lambda}$, which satisfies \eqref{eq:DriftCondition}, the statement \eqref{eq:SABRandLaplaceEigenfuncRel} holds for any $\widetilde\lambda\geq(\lambda+\widehat \lambda)$, where $\lambda$ is an eigenvalue of $\Delta_g$ and $\widehat{\lambda}\in \R$ is the constant from the drift condition \eqref{eq:DriftCondition}. Hence an eigenfunction of $\Delta_g$ which satisfies \eqref{eq:DriftCondition} is a sub-eigenfunction of $A$
\begin{equation*}
A \psi_{\lambda}=\Delta_g\psi_{\lambda}-\beta \hy^2 \hx^{2\beta-1} \frac{\partial \psi_{\lambda}}{\partial \hx}\leq \widetilde\lambda \psi_{\lambda}.
\end{equation*} 
For such (sub-) eigenfunctions the statement of Theorem \ref{Th:GeneralizedFellerPropSABRwithDrift} about the generalized Feller property (FG) of the SABR semigroup \eqref{eq:Semigroup} follows from Lemma \ref{Th:ReductionSubEigenspaces} analogously as above 
 and strong continuity of the SABR semigroup 
on the Banach spaces $\mathcal{B}^{\psi_n}$, $n\in \N$ is again a consequence of Theorem \ref{Th:StrongContinuityBpsi}.
It remains to find eigenfunctions which satisfy \eqref{eq:DriftCondition} for some $\widehat \lambda \in \R$.
For the functions $\psi_{c,n}:=L_{n(n+1)}\circ r_c$, where $L_{n(n+1)}$ denote Legendre polynomials of order $n(n+1),\ n\in \mathbb{N}$
the drift part in \eqref{eq:SABRandLaplaceRel} is of the form
\begin{align}\label{eq:DriftPositive}\begin{split}
\beta \hy^2 \hx^{2\beta-1} \partial_{\hx}\psi_{c,n}
&=\beta \hy^2 \hx^{2\beta-1}  (\partial_{r}L_{n(n+1)} \ \partial_{\hx} r_c)\\ 
&=\partial_{r}L_{n(n+1)}(r_c)\frac{\beta }{1-\rho^2}\left(\frac{\hy}{(1-\beta)}-\hx^{\beta-1}\hy(\rho\hy+c)\right).
\end{split}\end{align}
The derivatives of the Legendre polynomials\footnote{
Recall that that $L_{0}(r_c)\equiv1$, $L_{2}(r_c)\equiv r_c$ and $L_{6}(r_c)\equiv \frac{1}{2}(3r_c^2-1)$. Generally, it holds for the derivatives that
$(r_c^2-1)\partial_{r}L_{n(n+1)}(r_c)=n(r_cL_{n(n+1)}(r_c)-L_{(n-1)n}(r_c))$, which can be negative for values of $r_c$ in $[0,1)$, but all derivatives are nonnegative on $r_c\in[1,\infty)$.} 
for $n=0,1,2$ are clearly positive for $\partial_{r}L_{0}(r_c)\equiv 0$, $\partial_{r}L_{2}(r_c)\equiv 1$ and $\partial_{r}L_{6}(r_c)\equiv r_c$, which is positive for all $\hx > 0,\ \hy>0$ by construction. In these cases $\psi_{c,n}$ satisfies the drift condition \eqref{eq:DriftCondition} with $\widehat{\lambda}=0$, since if $\rho< 0$ and $c=0$ the expression in \eqref{eq:DriftPositive} is clearly nonnegative for all $\hx > 0,\ \hy>0$ and $\beta\in [0,1)$.
The restriction on $\beta$ stated in $(iii)$ of Theorem \ref{Th:GeneralizedFellerPropSABRwithDrift} is a consequence of the choice $c=0$ and follows from the arguments presented in \ref{eq:ReferencePoint}.
This string of argumentation can be generalized to Legendre polynomials of arbitrary order. Indeed, the derivatives of the Legendre can become negative, but this happens only for $|r_c|\in[0,1)$ and  for $r_c\in[1,\infty)$ they are always positive. On the other hand, the range of \eqref{eq:Cosh} is $r_c(x,y)\in[1,\infty)$ for all $x,y$ and all SABR-parameters $\beta,\rho$.
The latter statement can be read off the equation\footnote{
It can be seen more directly from the representation \eqref{eq:SABRLaplEigenfunction}, where we recall that \eqref{eq:Cosh} is the function \eqref{eq:SABRLaplEigenfunction}, for the choice following choice of $\hZ$: $\phi(\hZ)=(\frac{1}{\rrho^2},1)$.} \eqref{eq:Cosh}, noting $\frac{1+y^2}{2y}\geq1$ for all $y>0$, while the second term in \eqref{eq:Cosh} is always nonnegative.
\end{proof}
\smallskip
\begin{proof}[Proof of Corollary \ref{Th:Lineargrowth}]
Note that whenever $n\in \mathbb{N}$ is large enough such that
$\beta\leq \frac{2n-1}{2n}$, the sub-eigenfunctions 
\begin{equation}\label{eq:SubEigenfunctionForm}
\psi_n(\hx,\hy)=L_{n(n+1)}(r_c(\hx,\hy)), \qquad \textrm{for} \quad \hx,\hy \in \Ss                            
\end{equation}
have at least linear growth in $\hx$. Recall that in \eqref{eq:SubEigenfunctionForm}, the function $r_c$ denotes \eqref{eq:Cosh}, and $L_{n(n+1)}$ is a Legendre polynomial of order $n$ and $c(n)$ is a finite constant depending on $n$, for $n\in \mathbb{N}$.
\end{proof}
\begin{proof}[Proof of Theorem \ref{Th:AsymptoticsLargetimeSABR}]
Recall that for a geometric Brownian motion 
both boundary points $\{0\}$ and $\{\infty\}$ are natural and hence $\mathbb{P}[S_t=0]=0$ for any $t>0$.
Moreover, for the geometric Brownian Motion $Y$ from \eqref{eq:SABRSDE} we have $\lim_{t\rightarrow \infty}Y_t=0$, $\mathbb{P}-a.s.$
In particular, by a slight abuse of notation, $Y_{\infty}=0$, $\mathbb{P}$-a.s., $Y_t>0$ for all $t\in [0,\infty)$, and $\inf\{t>0: Y_t=0\}=\infty$.
Therefore,
\begin{equation}\label{eq:HittingTimeOfZeroY}
0=Y_{\infty}=\wY_{\tau(\infty)} =\wY_{\tau_1}=\wY_{\tau_0}
\end{equation}
where the random times $\tau_0$ and $\tau_1$ are as in \cite[Chapter 6, p. 307]{EthierKurtz}
\begin{equation}\label{eq:Tau0Tau1}
\tau_0:=\inf \left\{u: (\wY_u)^2=0 \right\} \quad \tau_1:=\inf \left\{ u: \int_0^u (\wY_s)^{-2}   ds =\infty \right\}. 
\end{equation}
Note that for almost every sample path $\tau_0=\tau_1$ and $\wY(\tau_0)=0$ when $\tau_0<\infty$, 
since $Y_t>0$ for all $t\in [0,\infty)$, $\tau_1$ is the first time when $\wY$ hits zero, which justifies the last equality in \eqref{eq:HittingTimeOfZeroY}.
In fact, since $\wY$ is a Brownian motion started at $\widetilde y>0$, $\tau_0$, i.e. the first hitting time $T_0^{\wY}$ of zero is finite with positive
probability.
		Since $X$ is a non-negative martingale, $\lim_{t\to\infty} X_t=X_\infty$ exists almost surely. 
		Now decomposing $X$ via time change, the first component becomes a CEV process $\widetilde X$ (on a stochastic clock), and as such it hits zero in finite time
for any $\beta < 1$, 
\cite[Theorem 51.2]{RW2}.
There are three cases, which can occur:
For any $\beta \in (0,1)$, the CEV-provess $\widetilde X$ and the Brownian motion $\widetilde Y$ started at $y$ reach zero a.s.\\ 
\begin{enumerate}
 \item[Case 1] $T_0^{\widetilde X} < T_0^{\wY}$ : In this case, the process $\widetilde X$ hits zero and after hitting zero remains zero until $T_0^{\wY}$, which is the end of our time consideration.
 \item[Case 2] $T_0^{\widetilde X} > T_0^{\wY}$ : In this case, the process $\wY$ hits zero first and ends our time consideration, 
while $\widetilde X$ is still positive (and finite). 
Then the dynamics after $T_0^{\wY}$ become $dX_t=0X_t^{\beta}dW_t=0.$
 \item[Case 3] $T_0^{\widetilde X} = T_0^{\wY}$ : In this case, $\widetilde X$ approaches zero as $\wY$ does and therefore\\
$X$ asymptotically approaches zero for $t \rightarrow \infty$.\\
\end{enumerate}
		To show that $\PP(X_\infty>0)\in (0,1)$ we use the SABR time change: By continuity 
		\begin{align*}
			\lim_{t\to\infty} X_t= \lim_{t\to\infty}\widetilde X_{\tau^{-1}_t}= \widetilde X_{\tau^{-1}_\infty},
		\end{align*}
		Since also the first hitting time $T^{\widetilde X}_0$ of $\widetilde X$ is absolutely continuous with strictly positive density on $[0,\infty)$ it follows that 
\begin{equation}\label{eq:ProbSABRPositiveLimit}
\PP(X_\infty>0)=\PP(\widetilde X_{\tau_\infty^{-1}}>0)=\PP(\tau_\infty^{-1}<T_0^{\widetilde X})\in (0,1). 
\end{equation}
Note that for the proof of the statement about $(\overline{X},Y)$ follows by analogous arguments for \eqref{eq:SABRSDE} and \eqref{eq:SABRdynamicsTimechanged} with $\beta=0$ and general $\rho\in[-1,1]$, since the first coordinate process $\overline{X}$ in \eqref{eq:CEVProcess} has an explicit solution \cite[Example 2, p. 284]{Protter} of power form in $\beta$
\begin{equation}\label{eq:ExplicitsolCEVStrat}
 \overline{X}_t=\left(\overline{x}^{1-\beta}+\sigma^2(1-\beta)W_t\right)^{1/(1-\beta)},
\end{equation}
therefore, the expression \eqref{eq:ExplicitsolCEVStrat} takes the value zero if and only if it vanishes for $\beta=0$. 
The time change (a change of the underlying geometry) reduces the problem of determining the limiting behavior of the process \eqref{eq:SABRSDE} with $\beta=0, \rho\in[-1,1]$, which is a (possibly correlated) Brownian motion on a hyperbolic plane 
to determining the hitting time of the coordinate axis of two correlated Brownian motions in the first quadrant of the (Euclidean) plane, which is available in \cite[equation (5.8)]{BranchingModel}. 
\end{proof}
\begin{proof}[Proof of Lemma \ref{Th:CEVSymmDirichletform}]
The statement follows immediately by integrating by parts \eqref{eq:DirichletformSABRwithdrift3} and applying Theorem \ref{Th:HamzaCondition} to obtain the closability on $C_0^{\infty}((0,\infty))$ and the domain of closedness follows from Lemma \ref{Th:OneDimClosability}.

\end{proof}
\begin{proof}[Proof of Theorems \ref{Th:SABRDirichletFormSymmetricNo}, \ref{Th:SABRDirichletForm} and \ref{Th:SABRDirichletFormTimechanged}]
Let us denote the coefficients of the SABR infinitesimal generator \eqref{eq:GeneratorTimeChangedSABR} by the following matrix
\begin{align}\label{SABRXi}
\begin{split}
\xi:\quad \R^2 &\longrightarrow \quad \R^{2\times2}\\
(x,y)&\longmapsto  \left(\begin{array}{cc}
y^2 x^{2\beta} & y^2 \rho x^{\beta}\\
y^2 \rho x^{\beta} & y^2
\end{array}
\right)
\end{split}
\end{align}
and let us denote the \emph{SABR-bilinear forms}
\begin{align*}
\mathcal{E}_m(u,v)=\frac12 \int_{S}\sum_{i,j}
\xi_{ij}(x) \ \partial_i u(x)\ \partial_j v (x) \ m(x)\ dx 
\end{align*}
appearing in Theorem \ref{Th:SABRDirichletFormSymmetricNo} and in \eqref{eq:DirichletformSABRwithdrift1} of Theorem \ref{Th:SABRDirichletForm} for short.
The essential statements in these theorems are: the closability of the bilinear form (assertion $(i)$ of Theorems \ref{Th:SABRDirichletForm} and \ref{Th:SABRDirichletFormTimechanged}) and the specification of the domain of closedness (assertion $(ii)$ of Theorems \ref{Th:SABRDirichletForm} and \ref{Th:SABRDirichletFormTimechanged}) which can be concluded from \cite[Proposition 1]{BouleauDenis} or \cite[Section 4]{RoecknerWielens} as we will demonstrate here. We included their statement in the Appendix \ref{HamzaGeneral2}) for easy reference.
The Borel function $\xi$ from (\ref{SABRXi}) clearly satisfies Condition (HG2) (see Section \ref{HamzaGeneral2}) for all $\beta\in [0,1]$ and all $\rho\in(-1,1)$. 
Then if there exists a Borel function $m: \R^2 \longrightarrow \R_+$, which satisfies Condition (HG1) (see Section \ref{HamzaGeneral2})
then Proposition \ref{closability} is applicable, i.e. the Bilinear form $\cE_{m}$ is closable on $C_0^{\infty}(S)$ (as a consequence of \cite[Section 4]{RoecknerWielens}) and it follows from \cite[Proposition 1]{BouleauDenis} that the pair
$(\cE_{m}, D(\cE_{m}))$ corresponding with to this choice of 
speed measure $m$ is a Dirichlet form on the 
Hilbert space $\Ll^2(mdx)$.
In particular, if $m$ satisfies Condition (HG1) (see Appendix \ref{HamzaGeneral1}), then $(\mathcal{E}_m,D(\mathcal{E}_m))$ is a closed form, 
and its generator $\Phi(\cE_m)$ is 
a negative self-adjoint operator $A_m$, 
such that 
\begin{align*}
\cE_m(u,v)=(A_mu,v)_{L^2(mdx)} \quad \textrm{for all} \quad v \in D(\cE_m) \quad \textrm{and} \quad u\in D(\Phi(\cE_m)). 
\end{align*}
Indeed by integration by parts we get
\begin{align*}
\mathcal{E}_m(u,v)=&-\frac12\sum_{i,j}\int_{\R^2}\partial_j \Big(\xi_{ij}(x) \partial_i u(x) m(x)\Big) \ v (x)\ dx\\
=&-\frac12 \int_{\R^2}\sum_{i,j} \Bigg( \partial_j\xi_{ij}(x) \ \partial_i u(x) +
\xi_{ij}(x) \partial_j\partial_i u(x)
+\xi_{ij}(x) \partial_i u(x) \frac{\partial_jm(x)}{m(x)}\Bigg) v (x)\ m(x)dx\\
=&(A_mu,v)_{L^2(mdx)},
\end{align*}
from which we can derive the following simple \emph{no drift condition}: The Generator $A_m$ of the Dirichlet form $\cE_m$ has no lower order terms if and only if 
\begin{equation}\label{NoDriftCondition}
\sum_{i}\sum_{j}\Big(\partial_j\xi_{ij}(x) \  +\xi_{ij}(x) \frac{\partial_jm(x)}{m(x)}\Big)\partial_i u(x)=0, 
\end{equation}
for any $u\in D(\Phi(\cE_m))$. 
If the matrix $\xi$ is the matrix of SABR-coefficients \eqref{SABRXi},
then the no drift condition \eqref{NoDriftCondition} implies that the generator $A_m$ of the Dirichlet form $\cE_m$ coincides with 
the SABR infinitesimal generator \eqref{eq:GeneratorTimeChangedSABR} on their common domain. Since $\cE_m$ is closable on $C_0^{\infty}(S)$ in $\Hh_m$, this common domain contains the dense subset $C_0^{\infty}(\Ss)$.
Inserting the values \eqref{SABRXi} into \eqref{NoDriftCondition} yields the following differential equations for $m$:
\begin{align}\label{eq:DiffEqSymmNo}
x^{2\beta-1}y^2\left(2\beta+x\frac{\partial_x m(x,y)}{m(x,y)}\right) + \rho x^{\beta}y\left(2+y\frac{\partial_ym(x,y)}{m(x,y)}\right)&=0\\
\rho x^{\beta-1}y^2\left(\beta+x\frac{\partial_xm(x,y)}{m(x,y)}\right) + y\left(2+y\frac{\partial_ym(x,y)}{m(x,y)}\right)&=0.
\end{align}
Assume that $m$ satisfies \eqref{eq:DiffEqSymmNo}. 
Let us define the auxiliary function
\[
 f(x,y) := \log(m(x,y)).
\]
This function satisfies
\begin{align}
\label{eq:main}\begin{split}
 x^{2\beta-1}y^2\left(2\beta+x\partial_x f\right) + \rho x^{\beta} y \left(2+y\partial_y f\right) = 0,	\\
 \rho x^{\beta-1}y^2\left(\beta+x\partial_x f\right) + y\left(2+y\partial_y f\right) = 0.
\end{split}\end{align}
Multiplying the second equation in \eqref{eq:main} by $\rho x^{\beta}$ and equating to the first yields
\[
 x^{2\beta-1}y^2\left(2\beta+x\partial_x f\right) = \rho^2 x^{2\beta-1}y^2\left(\beta+x\partial_x f\right).
\]
This yields for $x \neq 0 \neq y$
 $x\partial_x f = \frac{(\rho^2-2) \beta}{1-\rho^2}$.
This implies for some function $g$ the identity
\begin{equation}\label{eq:midentity1}
 f(x,y) = a \log(x) + g(y),
\end{equation}
where we denoted $a:= x\partial_x f$.  Similarly, multiplying the first equation in \eqref{eq:main} by $\rho$ and the second by $x^{\beta}$ yields
$\rho x^{2\beta-1}y^2\beta + \rho^2 x^{\beta} y \left(2+y\partial_y f\right) =  yx^{\beta}\left(2+y\partial_y f\right)$.
This yields for $x \neq 0 \neq y$
$
 y\partial_y f = -2 + \frac{\rho \beta x^{\beta-1} y}{1-\rho^2}$.
Solving this we obtain for some function $h$ the identity
\begin{equation}\label{eq:midentity2}
 f(x,y) = -2 \log(y) + \frac{\rho \beta x^{\beta-1} y}{1-\rho^2} + h(x).
\end{equation}
Thus we know that for all $x\neq 0 \neq y$
\begin{equation}
\label{eq:ode}
 a \log(x) + g(y) = -2 \log(y) + \frac{\rho \beta x^{\beta-1} y}{1-\rho^2} + h(x).
\end{equation}
Setting $x=1$ in \eqref{eq:ode} yields for all $y \neq 0$
\begin{equation}
\label{eq:y}
g(y) = -2 \log(y) + \frac{\rho \beta y}{1-\rho^2} + h(1),
\end{equation}
while setting $y=1$ in \eqref{eq:ode} implies that for all $x\neq 0$
\begin{align}
\label{eq:x}
 a \log(x) + g(1) = \frac{\rho \beta x^{\beta-1}}{1-\rho^2} + h(x).
\end{align}
Setting $y=1$ in \eqref{eq:y}  and using \eqref{eq:x} yields
(whenever $x\neq 0$) that
\begin{align}
\label{eq:h}
  h(x) = a \log(x) + \frac{\rho \beta}{1-\rho^2} + h(1) - \frac{\rho \beta x^{\beta-1}}{1-\rho^2}.
\end{align}
Substituting the equations for $g, h$ given by \eqref{eq:y}, \eqref{eq:h} into \eqref{eq:ode} we obtain
\begin{align}\label{eq:criterion}
 \frac{\rho \beta y}{1-\rho^2} + h(1) = \frac{\rho \beta x^{\beta-1} y}{1-\rho^2} + \frac{\rho \beta}{1-\rho^2} + h(1) - \frac{\rho \beta x^{\beta-1}}{1-\rho^2},
\end{align}
for all $\forall \, x\neq 0 \neq y$.
The criterion in equation \eqref{eq:criterion}  
only holds true if $\beta = 1$, $\beta=0$ or $\rho=0$. \\
Now for the last statements about the generator $\Psi(\cE_m)$ of the Dirichlet form consider the following operator:
\begin{equation}\label{eq:LaplBeltramiWeightedManifold}
\Delta_{\Upsilon \mu}=\frac{1}{\Upsilon\mu}\sum_{i}\frac{\partial}{\partial x^i}\left(\sum_{j}\Upsilon\mu \ g^{ij} \frac{\partial}{\partial x^j}\right)
\end{equation}
In the case of SABR coefficients  
one has $\mu(x,y)=\frac{1}{\rrho x^{\beta}y^2}$,
and \eqref{eq:LaplBeltramiWeightedManifold} reads
\begin{align*}
\Delta_{\Upsilon\mu}&=
 \frac{x^{\beta}}{\Upsilon(x,y)}y^2 \left(\partial_x 
\Upsilon(x,y) x^{\beta} \partial_x +\Upsilon(x,y) x^{\beta} \partial_{xx}^2
+\partial_x \Upsilon(x,y) \rho \partial_y + \Upsilon(x,y) \rho \partial_{xy}^2\right)\\
&+ \frac{x^{\beta}y^2}{\Upsilon(x,y)}  \left( \partial_y 
 \Upsilon(x,y)\rho \partial_x +\Upsilon(x,y) \rho \partial_{yx}^2
+\partial_y\frac{\Upsilon(x,y)}{x^{\beta}} \partial_y + \frac{\Upsilon(x,y)}{x^{\beta}} \partial_{yy}^2\right).
\end{align*}
Choosing $\Upsilon\equiv 1$ yields the Laplace-Beltrami operator $\Delta_g$ in \eqref{eq:SABRLaplaceBeltrami}
and $\Upsilon(x,y)=x^{-\beta}$, yields
\begin{align*}
 \Delta_{\Upsilon\mu}
&= y^2\left(x^{2\beta}
 \partial_{xx}^2 + 2 \rho x^{\beta}  \partial_{xy}^2
+  \partial_{yy}^2\right) - y^2\rho \beta x^{\beta-1} \partial_y\\
&= A - y^2\rho \beta x^{\beta-1} \partial_y
= y^2 \left(\widetilde{A} - \rho \beta x^{\beta-1} \partial_y\right)
=:y^2\widetilde{\Delta_{\Upsilon\mu}}.
\end{align*}
By construction, the operator $\Delta_{x^{-\beta}\mu}$ is symmetric on $\Ll^2(\frac{1}{x^{2\beta} y^2}dx)$, and the time changed operator $\widetilde{\Delta_{x^{-\beta}\mu}}$ is symmetric on 
$\Ll^2\left(\frac{1}{x^{2\beta}}dxdy\right)$.
Note that for $\rho=0$ the weighted Laplace-Beltrami operators coincide with the SABR (and timechanged SABR) generators
$\Delta_{x^{-\beta}\mu}=A$ and $\widetilde{\Delta_{x^{-\beta}\mu}}=\widetilde A$ in \eqref{eq:GeneratorTimeChangedSABR} on their common domain which includes the dense subset $C_0^{\infty}(\Ss)$. The proof of Theorem \ref{Th:SABRDirichletFormTimechanged} follows by the same arguments as presented above.
\end{proof}
\begin{remark}
If $(\mathcal{S},g)$ is a Riemannian manifold with volume element $\mu$ (that is, $\mu=\sqrt{\det g}$) and $\Upsilon:\mathcal{S}\rightarrow [0,\infty)$ is smooth and non-vanishing on $\Ss$, then the triple $(\mathcal{S},g,\Upsilon \mu)$
forms a weighted manifold, cf. \cite[Definition 3.17]{grigoryan09AnalysisManifolds}. In this case, the operator \eqref{eq:LaplBeltramiWeightedManifold} is the Dirichlet Laplace-Beltrami operator of the weighted manifold $(\mathcal{S},g,\Upsilon \mu)$, cf. \cite[Equation (3.45) p. 68.]{grigoryan09AnalysisManifolds}, and in particular $\Delta_{\Upsilon \mu}$ is self-adjoint (\cite[Theorem 4.6]{grigoryan09AnalysisManifolds}) on the weighted Sobolev space $\mathcal{D}(\Delta_{\Upsilon \mu})=W_0^2(\Ss,\Upsilon\mu)$, see \cite[page 104]{grigoryan09AnalysisManifolds}.
\end{remark}

\section{Reminder on diffusions, their geometry and time change}
\subsection{The SABR time change as a change of the underlying geometry}\label{sec:TimechangeGeometry}
It was shown in \cite{Norris}, that the small time asymptotic behaviour of a symmetric elliptic diffusion\footnote{See Appendix \ref{Sec:DiffusionasDirichletform} for a reminder of the definition of a diffusion in the more general context of Dirichlet forms.} (with values on a Lipschitz manifold) is determined by the intrinsic metric (and hence geometry). Moreover, also a converse statement is true: in \cite{SturmDiDetMetr} the question whether a diffusion is determined by its intrinsic metric is answered affirmatively (for diffusions with continuous coefficients). 
In the uniformly elliptic case, 
the intrinsic metric $d(x,y)$ is the Riemannian distance resulting from the Riemannian metric associated with the inverse of the matrix of coefficients of the highest order term of the operator:
\begin{equation*}
d(x,y)=\textrm{inf}\left\{\int_0^1
\sqrt{\langle \dot{\gamma_i}(t),\dot{\gamma_j}(t)\rangle}_{g^{i,j}(\gamma(t))}
dt: \ \gamma\in C^1([0,1]),\ \gamma(0)=x, \gamma(1)=y \right\} 
\end{equation*}
see \cite[Section 5.5.4]{SturmGeometricAspect}, where the term in the integral is the length of the gradient of a (minimal) parametrised curve from $x$ to $y$.
\\
The time change changes the geometry underlying the state space $\Ss$ of the model. We exemplify this statement here in two simple cases ($\beta=0$ and $\beta=1$) of parameter configurations.
It is well-known that the instantaneous variance of a diffusion 
determines the geometry of the process, cf. \cite{GatheralWang, GatheralHsuLawrenceWu,  GulisashviliLawrence, Hsu, LabordereBook, Labordere}, see also \cite{AitSahalia}.
If the instantaneous covariance matrix of the diffusion is non-degenerate on $\Ss$,
then its inverse determines the coefficients of a Riemannian metric (a so-called intrinsic metric) on the state space of the process.
For example for the SABR process \eqref{eq:SABRSDE} with $\beta=0$ and $\beta=1$ (the \emph{normal} and \emph{lognormal} SABR models\footnote{We have not posed any boundary conditions at $x=0$ on the process \eqref{eq:SABRSDE} here.}) this manifold and Riemannian metric are
\begin{equation}\label{eq:ManifoldMetricBeta10}\begin{array}{lllll}
\mathcal{S}_0:=\R\times(0,\infty) \quad&\textrm{and}&\quad g_0(x,y) := \frac{1}{1-\rho^2}\left(\frac{d x^2}{y^2 } - \frac{2 \rho \ d xd y}{y^2}
 + \frac{d y^2}{y^2}\right),
 &\quad (x,y) \in \mathcal{S}_0\\
\mathcal{S}_1:=(0,\infty)^2 \quad&\textrm{and}&\quad g_1(x,y) := \frac{1}{1-\rho^2}\left(\frac{d x^2}{y^2x^2 } - \frac{2 \rho \ d xd y}{y^2 x}
 + \frac{d y^2}{y^2}\right),
 &\quad (x,y) \in \mathcal{S}_1,
\end{array} 
\end{equation}
respectively\footnote{Note that for $\rho=0$, the Riemannian manifold $(S_0,g_0)$ is hyperbolic space $\HH^2$, see \cite[Section 3.1]{HLW}, and the case $\beta=1$ is related to $\HH^3$, cf. \cite[pages 176-178]{LabordereBook}. }. 
Recall that in the non-degenerate (uniformly elliptic) case a unique geometry and intrinsic metric of the diffusion is determined via Varadhan's formula \eqref{eq:VaradhanFormula}, characterising the diffusion (its transition density) at leading order. 
Moreover, the infinitesimal generator of a diffusion of the above type coincides in leading order with the Laplace operator\footnote{More commonly: Laplace-Beltrami operator, see \cite[Equation (3.45) p. 68.]{grigoryan09AnalysisManifolds} for reference.} of the respective manifold. 
The heat equation of the manifold induced by this Laplace operator is solved by the corresponding \emph{heat kernel} 
which determines the law of the \emph{Brownian motion of the manifold}, cf. \cite{Hsu}. The short-time asymptotics of this heat kernel coincide at leading order with those of the transition density $p$ in \eqref{eq:VaradhanFormula},
The geometry of the time-changed process \eqref{eq:SABRdynamicsTimechanged} for the parameters $\beta=0, \beta=1$ is
determined analogously, the manifold and Riemannian metric are
\begin{equation}\label{eq:ManifoldMetricBeta10}\begin{array}{lllll}
\widetilde{\mathcal{S}}_0:=\R^2 \quad&\textrm{and}&\quad \tilde{g}_0(x,y) := \frac{1}{1-\rho^2}\left(d x^2 - 2 \rho \ d xd y
 + d y^2\right),
 &\quad (x,y) \in \widetilde{\mathcal{S}}_1\\
\widetilde{\mathcal{S}}_1:=(0,\infty)\times \R \quad&\textrm{and}&\quad \tilde{g}_1(x,y) := \frac{1}{1-\rho^2}\left(\frac{1}{x^2 }d x^2 - \frac{2 \rho }{x}\ d xd y
 + d y^2\right),
 &\quad (x,y) \in \widetilde{\mathcal{S}}_0,
\end{array} 
\end{equation}
respectively\footnote{The geometry of $(\widetilde{\mathcal{S}}_0,\tilde{g}_0)$ is flat, that is Euclidean.}.
In this simple case it is easy to see that the transition from one geometry to the other is induced---in accord with Theorem \ref{Th:TimechangedSABRVolkonskii}---by the time-change. Later on, we show that this holds in more generality.
In fact, the multiplicative perturbation $y^2$ in \eqref{eq:GeneratorTimeChangedSABR} is the inverse of the density of the hyperbolic volume element ($1/y^2dxdy$), and it is easy to see that in the uncorrelated case we indeed pass by the time change from a Euclidean geometry $(\widetilde{\mathcal{S}}_0,\tilde{g}_0)$ to the geometry of the hyperbolic plane $(\Ss_0,g_0)$.
\subsubsection{Random time change via local times}
\label{Sec:TimeChangeLocalTime}
We refer to \cite[Chapter II]{Borodin} and \cite[Chapter 23]{kallenberg2002foundations} to recall some definitions and notation on scalar diffusions (such as the speed measure $m$, the scale function $s$ and the killing measure $k$), which will be used below to motivate corresponding concepts for Dirichlet forms. With this we aim to give a justification of our statements about the role of local time for the time change \eqref{eq:Thetimechange} in an simple setting.

Recall from \cite[II.13]{Borodin} the local time of a diffusion. 
\begin{definition}[Local time]
For a regular diffusion $X$ the a family of random variables
\begin{equation}
\{L(t,x):x\in I, t\geq 0\}, 
\end{equation}
is called \emph{local time} of $X$, if
\begin{equation}\tag{i}
 \int_0^t\mathbb{I}_A(X_s)ds=\int_A\mathbb{I}_A L(t,x) m(dx) , \quad a.s., A\in \mathbb{B}(I)
\end{equation}
\begin{equation}\tag{ii}
L(t,x)=\lim_{\epsilon \rightarrow 0} \frac{\textrm{Leb}(\{s<t:x-\epsilon<X_s<x+\epsilon\})}{m((x-\epsilon,x+\epsilon))} , \quad a.s.
\end{equation}
\begin{equation}\tag{iii}
 L(t,x,\omega)=L(s,x,\omega)+L(t-s,x,\theta_s(\omega)), \quad a.s.,
\end{equation}
where in $(iii)$, $s<t$ and $\theta$ denotes the shift operator.
\end{definition}
The following connection between local time and the transition density $p$ of $X$ 
holds true:
\begin{equation*}
E^x[L(t,y)]=\int_0^tp(s,x,y)ds. 
\end{equation*}
Now consider a diffusion $X$ on a natural scale (recall: $s(x)=x$ for the scale function) and let $m$ denote the speed measure of $X$. Furthermore, let $W$ be a Brownian motion on $\R$ with $X_0=W_0$. The construction of a random time change via local times presented in \cite[II.16]{Borodin} enables us to obtain a version of $X$ from $W$, which we recall here. See \cite[II.16  and II.21.]{Borodin} for full details. Let now $\{L(t,x), x \in \R, t\geq 0 \}$ denote the local time of $W$. We set 
\begin{equation}
L^{(m)}(t):=\int_{\R}L(t,y)m(dy). 
\end{equation}
Furthermore, let us assume $m(dx)=2m(x)dx$. Then
\begin{equation}
L^{(m)}(t)=\int_0^{\infty}m(W_u)du. 
\end{equation}
Setting
\begin{equation}\label{eq:TimechangeLocaltime}
\tau^{(m)}(t)=\inf\{u:L^{(m)}(u)>t\}. 
\end{equation}
If the boundaries $l,r$ are absorbing then the random time change of the Brownian motion $W$ based on $L^{(m)}$ coincides in law with $X$:
\begin{equation}\label{eq:BrownianLocalTimeChange}
W^{(m)}:=\{W_{\tau^{(m)}(t)}:t>0\} \sim X.
\end{equation}
Furthermore, also the local times of $W^{(m)}$ and $X$ with respect to the speed measure $m$ coincide in law:
\begin{equation}
\{L_X(t,x):t\geq 0,x\in I\} \sim \{ L(\tau^{(m)}(t),x),t\geq0, x\in I\}.
\end{equation}
where $\{L_X(t,x)\}$ denotes the local time of $X$ with respect to the speed measure $m$.
\begin{remark}[Motivation for "speed measure", Borodin-Salminen]\label{rem:GeomPerspLocalTimeAppendix}
The definition of local time $L^{(m)}$ with respect to the speed measure displays that if a Brownian particle is moving
at time $t$ on the region where speed the measure $m$ takes large values, then $L^{(m)}$ is increasing rapidly. 
By \eqref{eq:TimechangeLocaltime} the time change behaves as $\tau^{(m)}(t + h) \simeq \tau^{(m)}(t)$ even for large $h > 0$, and therefore the increment $W_{\tau^{(m)}(t+h)}-W_{\tau^{(m)}(t)}$ is ``small" and hence the increment
$X_{t + h} - X_{t}$ as well. 
Due to this property $m$ has been given the name speed-measure. 
See \cite[II.16]{Borodin} for full details.
See also \cite[Theorem 23.9]{kallenberg2002foundations} and  Remark \ref{rem:GeomPerspLocalTime} on this matter. 
\end{remark}
\subsection{Symmetric Dirichlet forms: closability and the energy measure}
Among strong Markov processes on a domain $U\subset \R^d$, $d\geq1$ with continuous sample paths there is a special class (Hunt processes, see \cite[Appendix A.2]{FukushimaOshimaTakeda} for a precise definition) for which there is a well known correspondence with symmetric Dirichlet forms on Hilbert spaces $\Ll^2(U,m)$, where $m$ is a positive Radon measure with $\supp(m)=\R^d$, the \emph{speed measure}. 
\subsubsection{Beurling-Deny-LeJan Formulae}\label{Sec:BeurlingDenyLeJan}
In case the jump measure is vanishing, every symmetric Markovian form $\cE$ on $\Ll^2(U,m)$ with $\DcE=C^{\infty}_0(U)$ which is closable in  $\Ll^2(U,m)$, can be expressed uniquely (cf. \cite[Theorem 3.2.3]{FukushimaOshimaTakeda} and \cite[Example 1.2.1]{FukushimaOshimaTakeda}) as
\begin{equation}\label{eq:DirichletformBeurlingDeny}
\cE(f_1,f_2)=\sum_{i,j=1}^d\int_{U}\partial_if_1\partial_jf_2 \ \mu_{i,j}(dx)+\int_{U}f_1f_2 \ k(dx), 
\end{equation}
where $(\mu_{i,j})_{i,j}$ is a non-negative definite matrix of Radon measures on $U$, i.e. for any $\xi \in \R^d$ and any compact set $K\subset U$
\begin{equation*}
\sum_{i,j=1}^d\xi_i\xi_j\mu_{ij}(K)\geq 0 \quad \textrm{and}\quad \mu_{i,j}(K)= \mu_{j,i}(K) \quad \textrm{for all } \ 1\leq i,j\leq d,
\end{equation*}
and $k$ is a positive Radon measure, the \emph{killing measure}.
The crucial property for the reverse conclusion---i.e. whether it is possible to construct an associated Hunt process to \eqref{eq:DirichletformBeurlingDeny}---is the closability of the form, see Appendix \ref{Sec:Closability}.

\subsubsection{Symmetric Dirichlet forms and Closability}\label{Sec:Closability}
Closability in the univariate case is completely solved (cf. \cite[Section 1]{RoecknerWielens}), see Theorem \ref{Th:HamzaCondition} for closability conditions for $k=0$, see also Remark \ref{Rem:EngelbertSchmidt} for the corresponding conditions for scalar diffusions.
For time-changes of Dirichlet forms by additive functionals see \cite[Chapter 5 and Section 6.2]{FukushimaOshimaTakeda} and for their closability under a change of speed measure see \cite[Section 5]{RoecknerWielens}.
\begin{definition}[Closability]\label{DefiClosability}
A form $(\cE, \DcE)$ is called \emph{closable} if it has a closed extension. This is equivalent to the condition
\begin{align}\label{EqDefiClosability}
\begin{split}
&u_n \in \DcE, \ \cE(u_n-u_m,u_n-u_m)\longrightarrow_{n,m\rightarrow\infty}0 \\
&(u_n,u_n) \longrightarrow_{n\rightarrow\infty}0 \quad \Rightarrow \quad \cE(u_n,u_n) \longrightarrow_{n\rightarrow\infty}0.
\end{split}
\end{align}
\end{definition}

\subsubsection{Univariate case}
In general, not all non-negative definite symmetric bilinear forms on $C^{\infty}_0(\R)$ are closed or even closable.
In the one-dimensional case \cite[Chap.3 \S 3.1 ($3^\circ$), p. 105]{FukushimaOshimaTakeda}
and \cite[Section 2. p.405.]{albeverio-rockner} give a precise condition for closability of a form 
\begin{equation}\label{eq:UnivariateDirichletForm}
\begin{array}{ll}
 \cE(u,v)&=\int_{-\infty}^{\infty}u'(x)v'(x)\nu(dx)\\
\DcE&=C_0^{\infty}(\R),
\end{array}
\end{equation}
on a Hilbert space $L^2(\R,m)$ for a positive Radon measure $m$ in terms of the singular set of $\nu$.
\begin{theorem}[Hamza condition for closability]\label{Th:HamzaCondition}
The form \eqref{eq:UnivariateDirichletForm} is closable in the Hilbert space $L^2(\R,m)$ for a positive Radon measure $m$ if and only if the following conditions are satisfied:
\begin{itemize}
 \item[(i)] $\nu$ is absolutely continuous (i.e. $\nu(dx)=\rho(x)dx$)
 \item[(ii)] The density function $\rho$ a vanishes a.e. on its singular set.
\end{itemize}
\end{theorem}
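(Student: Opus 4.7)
The plan is to prove the two implications separately, working from \eqref{EqDefiClosability} as the definition of closability. The key auxiliary object is the \emph{regular set}
$$R(\rho):=\Big\{x\in\R:\exists\,\varepsilon>0,\ \int_{x-\varepsilon}^{x+\varepsilon}\rho(y)^{-1}\,dy<\infty\Big\}$$
and its complement $S(\rho)=\R\setminus R(\rho)$. On $R(\rho)$ the local integrability of $\rho^{-1}$ lets us trade energy against $L^1$-norms of derivatives via Cauchy--Schwarz, while on $S(\rho)$ condition (ii) makes the energy integrand vanish.

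For \textbf{sufficiency}, let $(u_n)\subset C_0^\infty(\R)$ satisfy $\|u_n\|_{L^2(m)}\to 0$ and $\cE(u_n-u_m,u_n-u_m)\to 0$; completeness of $L^2(\rho\,dx)$ produces a limit $v$ of $u_n'$, and the task is to show $v=0$ in $L^2(\rho\,dx)$. By (ii) it suffices to show $v=0$ Lebesgue-a.e. on $R(\rho)$. On each compact $[a,b]\subset R(\rho)$ with $\int_a^b\rho^{-1}dx<\infty$, Cauchy--Schwarz gives
$$\int_a^b|u_n'-u_m'|\,dx\;\le\;\Big(\int_a^b(u_n'-u_m')^2\rho\,dx\Big)^{1/2}\Big(\int_a^b\rho^{-1}dx\Big)^{1/2},$$
so $u_n'\to v$ in $L^1([a,b])$ and hence $F_n(x):=\int_a^x u_n'(t)\,dt$ converges uniformly to $F(x):=\int_a^x v(t)\,dt$. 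Combining $u_n(x)=u_n(a)+F_n(x)$ with pointwise-a.e. convergence $u_n\to 0$ somewhere in $[a,b]$ (to be extracted from $L^2(m)$-convergence along a subsequence) forces $F$ to be constant, whence $v=F'=0$ a.e. on $[a,b]$. A countable cover of $R(\rho)$ by such intervals completes this direction.

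For \textbf{necessity} I would argue by contradiction. If (i) fails, pick a compact set $K$ carrying the singular part $\nu_s$ of $\nu$ with Lebesgue measure zero; mollifying indicators of shrinking $\delta$-thickenings of $K$ produces sawtooth functions $u_n\in C_0^\infty$ whose $L^2(m)$-norms vanish (since their supports have arbitrarily small Lebesgue measure) while $\cE(u_n,u_n)$ stays bounded below by a fixed fraction of $\nu_s(K)>0$, violating \eqref{EqDefiClosability}. If (ii) fails, pick $A\subset S(\rho)$ of positive Lebesgue measure on which $\rho>0$; the failure of local integrability of $\rho^{-1}$ near points of $A$ permits bumps concentrated in shrinking neighbourhoods whose $L^2(m)$-mass becomes arbitrarily small while the weighted norm $\int(u_n')^2\rho\,dx$ remains bounded below, again contradicting closability.

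The \textbf{main obstacle} is in the sufficiency step: the assertion ``$u_n\to 0$ pointwise somewhere in $[a,b]$'' is subtle when $m\!\restriction_{[a,b]}$ is mutually singular with Lebesgue measure, because the a.e.\ convergence extracted from $L^2(m)$-convergence of a subsequence then lives on a Lebesgue-null subset of $[a,b]$ and gives no information at the endpoints of interest. Handling this requires a Lebesgue decomposition $m=m_{ac}+m_s$, a refined covering of $R(\rho)$ by intervals whose endpoints can be chosen inside the density set of $m_{ac}$, and a localisation argument tying the density of $m_{ac}$ to $\rho$ on $R(\rho)$. This technical interplay between the two roles of measures (with $\rho\,dx$ controlling the energy and $m$ the ambient Hilbert norm) is precisely the reason conditions (i)--(ii) are phrased with respect to Lebesgue measure rather than $m$.
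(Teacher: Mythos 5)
First, note that the paper does not actually prove this statement --- its ``proof'' is a citation of Fukushima--Oshima--Takeda and Albeverio--R\"ockner --- so you are supplying an argument the authors outsource to the literature. Your sufficiency direction follows the standard route and is sound in outline, but the obstacle you flag at the end is resolved much more simply than you propose. The theorem is in fact false for an arbitrary positive Radon measure $m$ (take $m$ supported on $\mathbb{Z}$ and $\rho\equiv 1$: a fixed bump supported in $(0,1)$ has zero $L^2(m)$-norm but positive energy), so one must read $\mathrm{supp}(m)=\R$ into the hypotheses, as the cited references do. Once $m$ has full support, the set on which a subsequence of $(u_n)$ converges to $0$ $m$-a.e.\ is automatically dense in $[a,b]$ (its complement is $m$-null, hence contains no open interval), and a continuous $F$ that is constant on a dense set is constant. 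No Lebesgue decomposition of $m$, no density points, and no ``tying the density of $m_{ac}$ to $\rho$'' are needed; what is needed, and missing, is the full-support hypothesis.

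The genuine gap is in the necessity direction. Your argument for the failure of (i) rests on the claim that mollified indicators of small neighbourhoods of a Lebesgue-null compact $K$ have small $L^2(m)$-norm ``since their supports have arbitrarily small Lebesgue measure''; this is false for a general Radon $m$, which may charge $K$ itself --- such a function equals $1$ on $K$, so its $L^2(m)$-norm is at least $m(K)^{1/2}$. One instead needs functions that are small wherever $m$ puts mass while their derivatives retain order-one mass with respect to $\nu_s$, which is a different and more delicate construction. For the failure of (ii) you only gesture at the mechanism: the technical heart of Hamza's theorem is that the divergence of $\int\rho^{-1}$ on every neighbourhood of a point of $S(\rho)$ allows one to choose derivative corrections that are arbitrarily small in $L^2(\rho\,dx)$ yet have prescribed order-one integrals, and it is these corrections that steer the antiderivatives back to zero in $L^2(m)$ while the energy $\int (u_n')^2\rho\,dx$ stays bounded away from zero. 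Neither construction is carried out, so as written only the ``if'' half of the equivalence is (modulo the full-support fix) actually established.
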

\begin{proof}
See For the one-dimensional case \cite[Chap.3 \S 3.1 ($3^\circ$), p. 105]{FukushimaOshimaTakeda} 
and \cite[Section 2. p.405.]{albeverio-rockner}.
\end{proof}
\begin{definition}[Regular and singular sets, univariate case]\label{Regularset1dim}
Given a Borel-measurable function
$\rho:\R \rightarrow \R^+$, 
$x \in \R$ is a 
\emph{regular point} of $\rho$, if there exists an $\epsilon >0$ such that
\begin{align}\label{eq:DefRegularSet}
\int_{x-\epsilon}^{x+\epsilon}\tfrac{1}{ \left|\rho(x)\right|} dx< \infty, 
\end{align} 
and \emph{regular set} of $\rho$ is defined as the (open) set of regular points in $\R$,
denoted by
$R(\rho )$. 
The complement is called the \emph{singular set} and it is denoted
by $S(\rho):=\R \setminus R(\rho )$.
\end{definition}
\begin{remark}\label{Rem:EngelbertSchmidt}
Note that if  [(i)] is fulfilled and the weight $\rho:\R \rightarrow \R_{\geq 0}$ in \eqref{eq:UnivariateDirichletForm} is Borel-measurable and such that 
$\rho=0$ ds-a.e. on the singular set $S(\rho)$ and $\rho>0$ ds-a.e. on the regular set $R(\rho)$.
Then $L^2(R(\rho), \rho \ ds) \subset L^1_{loc}(R(\rho),ds)$ continuously.
In particular, this is the case if $\rho$ is a power-type weight (or a more generally a Muckenhaupt weight).
Note also, that the \emph{Hamza condition} coincides with the \emph{Engelbert and Schmidt condition} \cite[Theorem 23.1]{kallenberg2002foundations} for scalar diffusions. 
There, the corresponding conclusion if the condition is satisfied, is that weak existence for the SDE holds if [(ii)] if \ref{Th:HamzaCondition} holds and uniqueness in law holds for every initial distribution if and only if the density function vanishes nowhere else but on the singular set $S(\rho)$.
\end{remark}
\begin{lemma}[Domain of closedness]\label{Th:OneDimClosability}
Let $\rho$ be as in Theorem \ref{Th:HamzaCondition}. Consider the set 
\begin{equation}\label{eq:DomainofClosedness1dim}
\begin{array}{ll}
D(\cE)=\{&u \in L^2(\R, \rho ds): \ \exists\textrm{ an abs. cont. version }  \\
& \textrm{of} \ u \ \textrm{on} \ R(\rho), \ \textrm{s.th.} \ \frac{du(s)}{ds} \in L^2(\R, \rho ds) \quad \} 
\end{array}\end{equation}
together with the bilinear form
\begin{align*}
 \cE(u,v)=\int_{\R}\frac{u(s)}{ds}\frac{v(s)}{ds}\rho(s)ds.
\end{align*}
Then $(\cE,D(\cE))$ is a closed form on the Hilbert space $L^2(\R, \rho ds)$.
\end{lemma}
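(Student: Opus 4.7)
The plan is to verify closedness directly via condition (B4) of Definition \ref{Def:ClosedForm}: that $D(\cE)$ is complete with respect to the graph norm $\|u\|_{\cE_1}^2 = \cE(u,u) + \|u\|_{L^2(\R,\rho ds)}^2$. Symmetry, bilinearity, and non-negativity are immediate from the form's definition, so everything reduces to extracting a limit in $D(\cE)$ from an arbitrary $\|\cdot\|_{\cE_1}$-Cauchy sequence $(u_n)\subset D(\cE)$. By completeness of $L^2(\R,\rho ds)$ there exist $u,v\in L^2(\R,\rho ds)$ with $u_n\to u$ and $u_n'\to v$ in $L^2(\R,\rho ds)$, where $u_n'$ denotes the derivative of the absolutely continuous version of $u_n$ on $R(\rho)$. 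The task is to show that $v$ is the derivative of an absolutely continuous version of $u$ on $R(\rho)$.

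The key analytic input comes from Remark \ref{Rem:EngelbertSchmidt}: for every compact $K\subset R(\rho)$, the integrability condition \eqref{eq:DefRegularSet} combined with a covering argument yields $\int_K \rho^{-1}\, ds<\infty$, so by Cauchy--Schwarz
\begin{equation*}
\int_K |f|\, ds \;\leq\; \Bigl(\int_K \rho^{-1}\, ds\Bigr)^{1/2}\|f\|_{L^2(K,\rho ds)},
\end{equation*}
which embeds $L^2(K,\rho ds)$ continuously into $L^1(K,ds)$. Hence $u_n\to u$ and $u_n'\to v$ in $L^1_{loc}(R(\rho),ds)$. Because $\rho=0$ almost everywhere on $S(\rho)$ (Theorem \ref{Th:HamzaCondition}), all contributions to $\|\cdot\|_{\cE_1}$ come from $R(\rho)$, so it suffices to work on the open set $R(\rho)$.

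Next, fix any open interval $(a,b)$ with $[a,b]\subset R(\rho)$ and a base point $x_0\in(a,b)$. For the absolutely continuous representatives $\tilde u_n$ one has $\tilde u_n(x)-\tilde u_n(x_0)=\int_{x_0}^{x} u_n'(s)\, ds$. Since $(u_n')$ is Cauchy in $L^1([a,b],ds)$, the functions $x\mapsto \tilde u_n(x)-\tilde u_n(x_0)$ form a uniformly Cauchy sequence of absolutely continuous functions on $[a,b]$, and their uniform limit is absolutely continuous with derivative $v$ a.e. Combining with $L^1$ convergence of $u_n$ to $u$ on $[a,b]$, the constants $\tilde u_n(x_0)$ converge, yielding an absolutely continuous function $\tilde u$ on $[a,b]$ with $\tilde u=u$ a.e.\ and $\tilde u'=v$ a.e. Covering $R(\rho)$ by countably many such intervals and gluing the representatives consistently gives an absolutely continuous version of $u$ on $R(\rho)$ with derivative $v$. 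Since $v\in L^2(\R,\rho ds)$ this places $u\in D(\cE)$, and by construction $\cE(u_n-u,u_n-u)=\|u_n'-v\|_{L^2(\R,\rho ds)}^2\to 0$, so $u_n\to u$ in $\|\cdot\|_{\cE_1}$.

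The main obstacle I anticipate is the bookkeeping needed to build a single absolutely continuous representative on all of $R(\rho)$ from the local ones: $R(\rho)$ is an arbitrary open subset of $\R$, so one must choose base points consistently across its connected components and check that the local representatives glue without ambiguity. This is routine once the local statement is established, but it is the only place where the proof is more than a direct application of $L^1_{loc}$ convergence of derivatives plus the fundamental theorem of calculus.
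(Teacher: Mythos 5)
Your argument is correct. Note, however, that the paper does not actually prove this lemma: its ``proof'' is a one-line citation to Albeverio--R\"ockner \cite[Section 2, Condition $(H)$ and Theorem 2.2~$(i)$]{albeverio-rockner}, so you have supplied the argument that the paper outsources. What you wrote is essentially the standard proof underlying that reference: extract $L^2(\R,\rho\,ds)$-limits $u$ and $v$ of $u_n$ and $u_n'$ from a graph-norm Cauchy sequence, use the defining property of the regular set $R(\rho)$ (local integrability of $\rho^{-1}$, hence the continuous embedding $L^2(K,\rho\,ds)\hookrightarrow L^1(K,ds)$ for compact $K\subset R(\rho)$ via Cauchy--Schwarz) to upgrade to $L^1_{loc}(R(\rho),ds)$ convergence, and then identify $v$ as the a.e.\ derivative of an absolutely continuous version of $u$ on $R(\rho)$ by the fundamental theorem of calculus; the convergence of the normalizing constants $\tilde u_n(x_0)$ follows by averaging over $[a,b]$, and the gluing over the countably many components of the open set $R(\rho)$ is unambiguous because continuous functions agreeing a.e.\ agree everywhere. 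The only point you leave implicit that the paper's Definition \ref{Def:ClosedForm} formally requires is density of $D(\cE)$ in $L^2(\R,\rho\,ds)$; this is immediate since $C_0^{\infty}(\R)\subset D(\cE)$ (as $\rho\,ds$ is Radon), and is in any case already contained in the closability statement of Lemma \ref{Th:CEVSymmDirichletform}~(i). In short: correct, self-contained where the paper is not, and consistent with the cited source.
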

\begin{proof}
See \cite[Section 2]{albeverio-rockner}, in particular Condition $(H)$ and Theorem 2.2 $(i)$. 
\end{proof}
\subsubsection{Multivariate case}
The following theorem gives conditions for closability of a non-negative definite symmetric bilinear form $\cE$ in the multivariate case and also specifies the domain $\DcE$
where it is a Dirichlet form (i.e. closed). The theorem can be found in \cite[Proposition 1]{BouleauDenis} or \cite[Section 4]{RoecknerWielens}.
For any $d \in \mathbb{N}$, we denote by $\cB (\R^d )$ the
Borel sigma-field on $\R^d$ and by $dx$ the Lebesgue measure
on $(\R^d,\cB (\R^d))$. 
\begin{definition}[Regular and singular sets, multivariate case]\label{Regularset}
For any $\cB (\R^d )$,  $d\in \mathbb{N}$ measurable function $\rho:\R^d \rightarrow \R^+$,
let $R(\rho )$ denote the \emph{regular set} of $\rho$, i.e. the largest open set in $\R^d$ on which $\rho^{-1}$ is
locally integrable:
\begin{equation*}
 \int_K \tfrac{1}{\left|\rho(x)\right|} dx< \infty \quad \forall K\subset R(\rho),
\end{equation*}
where 
$K\subset R(\rho)$ denotes a compact set in $R(\rho)$. 
Similarly to the univariate case, the complement is called the \emph{singular set} and is denoted
by $S(\rho):=\R^d \setminus R(\rho )$.
\end{definition}
\begin{definition}
Let $u:\R^d\rightarrow \R$ be any $\cB (\R^d )$-measurable function, where $d>0
\in \mathbb{N}$. 
For any $i\in \{ 1,\ldots ,d\}$ with corresponding
$\bar x:=(x_1 ,\ldots , x_{i-1}, x_{i+1},\ldots,x_{d})\in \R^{d-1}$,
consider the function $u_{\bar x }^{(i)}$ defined by 
$u_{\bar x }^{(i)}:\R \rightarrow \R,\ s \mapsto u ((\bar x,s)_i),$
where 
$(\bar x ,s)_i:=(x_1,\ldots x_{i-1},s,x_{i+1},\ldots ,x_{d} ).$
\end{definition}
\begin{definition}[The bilinear form and its domain]\label{Defibilinearform}
Let the positive Borel function 
$m:\R^d\rightarrow \R^+$ denote the speed measure and consier
an $\R^{d\times d}$-valued symmetric Borel function 
\begin{align}\label{MatrixForm}
\begin{split}
\xi : \R^d& \longrightarrow \R^{d\times d}\\
      x:=(x_1,\ldots,x_d)&\longmapsto (\xi_{ij}(x))_{1\leq i,j\leq d}.
\end{split}
\end{align}
\begin{itemize}
 \item 
Consider the following  \emph{symmetric bilinear form} on $\Ll^2 (m dx)$: 
\begin{equation}\label{BilinearformClosability}
\mathcal{E}(u,v)=\frac12 \int_{\R^d}\sum_{i,j}
\xi_{ij}(x) \partial_i u(x)\partial_j v (x) m(x)\, dx. 
\end{equation}

\item
Define the \emph{domain of $\cE$} as the set\\ 
$$D(\cE)=\Big\{u\in\Ll^2(mdx)\quad \cB (\R^d)\textrm{-mb: for all } i\in\{ 1,\ldots ,d\} \textrm{ and } \lambda^{d-1} \textrm{-almost all }\bar x\in\R^{d-1}$$ 
\begin{center}$u^{(i)}_{\bar x}$ has an absolutely continuous\footnote{
See for example:\cite{albeverio-rockner} p.406.}
version $\widetilde{u}^{(i)}_{\bar x}$ on $R(m_{\bar x}^{(i)})$, s. th. \end{center} $$\sum_{i,j}\xi_{ij}\frac{\partial
u}{\partial x_i } \frac{\partial u}{\partial x_j } \in \Ll^1 (mdx), \ \textrm{where }\frac{\partial u}{\partial x_i }:=\frac{d\widetilde{u}^{(i)}_{\bar x}}{ds}. \Big\}$$
\end{itemize}
\end{definition}

\begin{condition}[HG1]\label{HamzaGeneral1}
$m_{\bar x}^{(i)}=0$, $\lambda^1$-a.e. on
$\R\setminus R(m_{\bar x}^{(i)})$, for any $i\in\{1,\cdots ,d\}$ and
$\lambda^{d-1}$-almost all $\bar x\in\{y\in\R^{d-1}:\ \int_{\R}
m^{(i)}_y (s)\, ds >0\}$.
\end{condition}
\begin{condition}[HG2]\label{HamzaGeneral2}
There exists an open set $O\subset \R^d$ such that $\lambda^d (\R^d \setminus
O)=0$ and $\xi$
is locally elliptic on $O$ in the sense that for any compact subset
$K$, in $O$, there exists a positive constant $c_K$ such that
\[\forall x=(x_1 ,\cdots ,x_d )\in K,\ \sum_{i,j=1}^d \xi_{ij} (x)x_i x_j \geq
c_K |x|^2 .\]
\end{condition}

\begin{definition}[Energy measure]\label{Def:CarreDuChamp}
Let $\cE$ be a Dirichlet form on $\Ll^2(X,m)$ with Domain $\DcE$. We say that $\cE$ admits a \emph{carr\'{e} du champ operator}
also \emph{square field operator} or \emph{energy measure} if the following property holds:
\begin{enumerate}
 \item[(R)] There exists a subspace $H$ of $\DcE \cap \Ll^{\infty}(X)$, dense in $\DcE$ such that
\begin{align*}
 \textrm{For all} \ f \in H \quad \exists \ &\tilde{f}\in L_1(X), \ \textrm{such that for all} \ 
h\in \DcE \cap \Ll^{\infty}(X)\\
&2\cE(fh,f)-\cE(h,f^2)=\int_Xh\tilde{f}dm.
\end{align*}
\end{enumerate}
If $\tilde{f},\tilde{g}\in \Ll^1(X)$ as above, we define by polarisation a form $\Gamma:\DcE \times \DcE \rightarrow \Ll^1$
\begin{align}\label{DefCarreduChampPol}
\Gamma(f,g):=\frac{1}{4} ((\tilde{f}+\tilde{g})-(\tilde{f}-\tilde{g})).
\end{align}
We refer to $\Gamma$ as the \emph{carr\'{e} du champ operator} associated with $\cE$.
\end{definition}
\begin{remark}[Characterizing property of the carr\'{e} du champ operator]
If (R) is satisfied, then the form $\Gamma$ defined in \eqref{DefCarreduChampPol} is the unique positive symmetric continuous
bilinear form  $\Gamma:\DcE \times \DcE \rightarrow \Ll^1$ with the characterizing property 
that for all $f,g,h \in \DcE \cap \Ll^{\infty}(X)$
\begin{align}\label{CarreduChampCharProp}
\cE(fh,g)+\cE(gh,f)-\cE(h,fg)=\int h \ \Gamma(f,h)dm.
\end{align}
Note that for $f=g$ we are in the situation $(R)$ hence for the $\tilde{f}$ there, we can use the notation $\tilde{f}=\Gamma(f,f)$.
See: \cite{BouleauHirsch} Proposition 4.1.3.
\end{remark}

\begin{theorem}[Conditions for closability and domain of closedness, R\"{o}ckner-Wielens and Bouleau-Denis]\label{closability}
Let $\cE$ denote the bilinear form with domain $D(\cE)$ in (\ref{BilinearformClosability})
satisfying conditions \ref{HamzaGeneral1} and \ref{HamzaGeneral2}. Then the pair 
$(\mathcal{E}, C_0^{\infty}(\R^d))$ is closable on $\Ll^2 (m dx)$ and
$(\mathcal{E},D(\cE))$ is a Dirichlet form on $\Ll^2 (m dx)$ which admits a
carr\'e du champ operator $\Gamma$ given by  
\begin{equation*}
\Gamma [u,v]=\sum_{i,j} \xi_{ij} \partial_i u\partial_j v, \quad \textrm{for }\ u,v\in D(\cE).
\end{equation*}
\end{theorem}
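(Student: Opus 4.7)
The plan is to reduce the multivariate claim to the univariate Hamza criterion (Theorem \ref{Th:HamzaCondition}) via Fubini along each coordinate axis, then to identify the closure with the ACL-type domain $D(\cE)$ of Definition \ref{Defibilinearform}, and finally to verify the Markovian and carré du champ properties separately.

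\textbf{Closability.} First I would write the form \eqref{BilinearformClosability}, using the pointwise Cauchy--Schwarz inequality for the non-negative-definite matrix $\xi(x)$, as a sum of $d^2$ partial forms whose off-diagonal parts are dominated on every compact subset of the set $O$ from HG2 by the diagonal parts $\tfrac12\int\xi_{ii}(x)|\partial_i u(x)|^2 m(x)\,dx$. Closability then reduces to the closability of each diagonal piece. For the $i$-th diagonal piece, Fubini gives
\begin{equation*}
\int_{\R^d}\xi_{ii}(x)\,|\partial_i u(x)|^2\,m(x)\,dx = \int_{\R^{d-1}}\!\int_{\R}\xi_{ii}(\bar x,s)\,|\partial_s u^{(i)}_{\bar x}(s)|^2\,m^{(i)}_{\bar x}(s)\,ds\,d\bar x,
\end{equation*}
so the univariate Hamza criterion (Theorem \ref{Th:HamzaCondition}) applied along $\lambda^{d-1}$-almost every slice $\bar x$ produces a closable inner form, with condition HG1 ensuring that the effective density $\xi_{ii}(\bar x,\cdot)\,m^{(i)}_{\bar x}(\cdot)$ vanishes on the singular set of $m^{(i)}_{\bar x}$. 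A routine Fubini/dominated-convergence argument then lifts one-dimensional closability to the multivariate form on $C_0^\infty(\R^d)$.

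\textbf{Domain of the closure.} Next I would identify the closure with $D(\cE)$. On the one hand, any limit $u\in\Ll^2(mdx)$ of an $\cE_1$-Cauchy sequence $\{u_n\}\subset C_0^\infty(\R^d)$ has partial derivatives represented, after passing to a subsequence, by $\Ll^2(mdx)$-functions along almost every line; Lemma \ref{Th:OneDimClosability} applied slice-wise then provides absolutely continuous representatives on each regular set $R(m^{(i)}_{\bar x})$, which is precisely the defining condition of $D(\cE)$. Conversely, given $u\in D(\cE)$, cut-off by functions supported on compact subsets of $O$ followed by standard mollification produces a sequence in $C_0^\infty(\R^d)$ that is $\cE_1$-Cauchy, thanks to the local ellipticity in HG2 which controls the mollification error on each compact of $O$; since $\lambda^d(\R^d\setminus O)=0$, the approximation extends to all of $\R^d$.

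\textbf{Markovianity and carré du champ.} The Markov property (B5) follows from the ACL description of $D(\cE)$: for $u\in D(\cE)$ the truncation $v:=(u^+)\wedge 1$ lies in $D(\cE)$ with $\partial_i v = \partial_i u\cdot \ind_{\{0<u<1\}}$ almost everywhere, so substituting this into \eqref{BilinearformClosability} and using non-negative-definiteness of $\xi$ makes the energy decrease. For the carré du champ, direct differentiation via Leibniz's rule yields for $f,g,h\in C_0^\infty(\R^d)$
\begin{equation*}
\cE(fh,g)+\cE(gh,f)-\cE(h,fg)=\int_{\R^d} h(x)\sum_{i,j}\xi_{ij}(x)\,\partial_i f(x)\,\partial_j g(x)\,m(x)\,dx,
\end{equation*}
which identifies $\Gamma[f,g]=\sum_{i,j}\xi_{ij}\,\partial_i f\,\partial_j g$ as the density required by the characterising property \eqref{CarreduChampCharProp}. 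Density of $C_0^\infty(\R^d)$ in $(D(\cE),\|\cdot\|_{\cE_1})$ from the previous step extends the identity to arbitrary $f,g\in D(\cE)\cap\Ll^\infty(\R^d)$. The main obstacle is the identification of the closure's domain with the ACL-type space: both HG1 (so the slice-wise densities cooperate with the Hamza criterion) and HG2 (so that mollification near the singular set is controlled) must be used simultaneously, and it is the interaction of these two conditions that makes the Röckner--Wielens/Bouleau--Denis argument delicate.
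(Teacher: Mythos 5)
Your carr\'e du champ computation is exactly the argument the paper gives (Leibniz rule plus symmetry of $\xi_{ij}$), and your Markovianity check via the unit contraction on ACL representatives is fine. Note, however, that the paper does not prove closability or the Dirichlet-form property at all: it cites \cite[Section 4]{RoecknerWielens} and \cite[Proposition 1]{BouleauDenis} for those. You attempt to reprove them, and there your argument has a genuine gap. The reduction ``the off-diagonal parts are dominated by the diagonal parts, hence closability reduces to the closability of each diagonal piece'' is not valid: one-sided domination $\cE\leq C\sum_i\cE_{ii}$ does not transfer closability, because an $\cE$-Cauchy sequence need not be $\sum_i\cE_{ii}$-Cauchy when $\xi(x)$ is degenerate (e.g.\ rank one, where $\sum_{ij}\xi_{ij}a_ia_j$ vanishes while $\sum_i\xi_{ii}a_i^2$ does not). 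A two-sided comparison would be needed, and Condition HG2 supplies one only on compact subsets of $O$, with constants that degenerate near $\partial O$ and no information on $\R^d\setminus O$; moreover the Hamza condition HG1 is imposed on $m^{(i)}_{\bar x}$, not on the effective slice density $\xi_{ii}m^{(i)}_{\bar x}$, so even the closability of the diagonal pieces does not follow as stated. The actual R\"ockner--Wielens argument runs differently: writing $\xi=\sigma^{\top}\sigma$ for a measurable square root, $u\mapsto\sqrt{m}\,\sigma\nabla u$ is an isometry from $(C_0^\infty,\cE)$ into $\Ll^2(dx;\R^d)$, so for an $\cE$-Cauchy sequence with $u_n\to0$ in $\Ll^2(m\,dx)$ the vectors $\sqrt{m}\,\sigma\nabla u_n$ converge to some $\Phi$, and one shows $\Phi=0$ by combining HG2 (which upgrades this to convergence of $\nabla u_n$ in $\Ll^2(K,m\,dx)$ for each compact $K\subset O$) with HG1 (whose slice-wise embedding $\Ll^2(m^{(i)}_{\bar x}\,ds)\hookrightarrow L^1_{loc}(R(m^{(i)}_{\bar x}))$ identifies the limit with the distributional derivative of $\lim u_n=0$). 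Your closing sentence names the right two ingredients, but the mechanism you describe does not implement them.

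A second problem is the ``conversely'' step in your identification of the domain: approximating an arbitrary $u\in D(\cE)$ by $C_0^\infty(\R^d)$ functions in $\cE_1$-norm would prove that $C_0^\infty(\R^d)$ is a core for $(\cE,D(\cE))$, which the theorem does not assert and which is false in this degree of generality. The statement only claims that $(\cE,C_0^\infty(\R^d))$ is closable and that $(\cE,D(\cE))$ is a closed Markovian form; in the degenerate setting the closure of $C_0^\infty$ is in general a proper subspace of the maximal ACL domain $D(\cE)$ (the distinction between Dirichlet- and Neumann-type extensions), and cut-off and mollification near the singular set is precisely where such an approximation breaks down, since the mollification error is not controlled by the degenerate weight. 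Dropping that step costs you nothing, as it is not needed for the theorem as stated.
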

\begin{proof}
The proofs can be found in \cite[Section 4]{RoecknerWielens} for the former statement and \cite[Proposition 1]{BouleauDenis} for the latter. We briefly elaborate on the energy measure:
As remarked in \eqref{CarreduChampCharProp}, the carr\'{e} du champ operator $\Gamma[u,u]$
is characterized by the identity
$2\cE(uv, u)-\cE(u^2,v)=\int_{\R^d}\Gamma[u,u](x) \ v(x) m(x) dx$ for any $u, v \in \DcE$.
\begin{equation*}\begin{array}{ll}
&2\cE(uv, u)-\cE(u^2,v)\\
 &= \int_{\R^d}\sum_{i,j}\xi_{ij}(x) \partial_i( u v)\partial_j (u) m(x)\, dx -
\frac12 \int_{\R^d}\sum_{i,j}\xi_{ij}(x) \partial_i( u^2)\partial_j (v) m(x)\, dx\\
&= \int_{\R^d}\sum_{i,j}\xi_{ij}(x) \partial_iu(x)\partial_j u(x) \  v(x) m(x)\, dx,
\end{array}\end{equation*}
where the last step follows by symmetry of $\xi_{i,j}$.
\end{proof}

\subsection{Diffusions as symmetric Dirichlet forms and the intrinsic metric}\label{Sec:DiffusionasDirichletform}

Let $X$ be a connected locally compact separable metric space with a positive Radon measure $\mu$ with 
$\textrm{supp}(\mu)=X$. Consider the Hilbert space $L^2(X,\mu)$ of real-valued functions, which are
square-integrable with respect to $\mu$. 
Furthermore, let $\cE$ be a Dirichlet form $ L^2(X,\mu)$, i.e. a closed and Markovian, non-negative 
definite symmetric bilinear form $\cE$ on a dense subspace $D(\cE)\subset L^2(X,\mu)$.

\begin{definition}[Regular Dirichlet form]\label{RegularDirichletForm}
A Dirichlet Form $\cE$ in the Hilbert space $\mathcal{H}$ is called \emph{regular} if there is a subset of $D(\cE) \cap C_c(X)$, 
which is a \emph{core} of $\cE$, i.e., which is dense in 
$D(\cE)$ with respect to the natural norm $\varphi\mapsto(\cE(\varphi)+||\varphi||_{\mathcal{H}})^{\frac{1}{2}}$, and which is
dense in $C_0(X)$ with respect to the supremum norm $L^{\infty}$.
\end{definition}
\begin{definition}[Strongly local quadratic form]\label{Stronglylocal}
Let $\cE$ be any positive quadratic form on a Hilbert space $\mathcal{H}$, we call $\cE$ \emph{strongly local} if 
$\cE(\psi,\varphi)=0$ for all $\varphi, \psi \in D(\cE)$ with $\textrm{supp}(\psi)$ and 
$\textrm{supp}(\varphi)$ compact with $\psi$ constant on a neighbourhood of $\textrm{supp}(\varphi)$.
\end{definition}

\begin{definition}[Diffusion]\label{Diffusion}
We call a strongly local regular Dirichlet form a \emph{diffusion}. 
\end{definition}

\begin{definition}[Normal contraction]\label{NormalContraction}
A map $F:\R^n \rightarrow \R$ is called a \emph{contraction} on $\R^n$ 
\begin{align}\label{ConditionContraction}
 |F(x)-F(y)|\leq \sum_{i=1}^{n}|x_i-y_i|,
\end{align}
for all $x,y \in \R^n$. $F$ is called a \emph{normal contraction} (denoted by $F \in \mathcal{T}_0^n$), if $F(0)=0$.
\end{definition}
\begin{lemma}
Let $\cE$ be a Dirichlet form on $L^2(X,\mu)$ with Domain $\DcE$. Let $ \varphi \in \DcE \cap L^{\infty}(X)$ be a non-negative function,
$\psi_1 \in \DcE \cap L^{\infty}(X)$ and $\psi_2=F\circ \psi_1$ for a normal contraction\footnote{See Definition \ref{NormalContraction} above cf. 
\cite[Def.2.3.2.]{BouleauHirsch}. } 
$F \in \mathcal{T}^1_0$. 
\begin{align*}
0 \leq \cE(\psi_2 \varphi, \psi_2)-\tfrac{1}{2}\cE(\psi_2^2,\varphi) 
\leq \cE(\psi_1 \varphi, \psi_1)-\tfrac{1}{2}\cE(\psi_1^2,\varphi) 
\leq ||\varphi||_{L^{\infty}} \cE(\psi_1)
\end{align*}
\end{lemma}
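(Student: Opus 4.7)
The strategy is to recognize that the expression $Q(\psi, \varphi) := \cE(\psi\varphi, \psi) - \tfrac{1}{2}\cE(\psi^2, \varphi)$ is, up to a factor $\tfrac{1}{2}$, the pairing of the energy measure $\Gamma(\psi, \psi)$ against the test function $\varphi$. This is precisely the polarized form of the carr\'e du champ identity \eqref{CarreduChampCharProp} in Definition \ref{Def:CarreDuChamp}: whenever $\Gamma$ exists, one has $Q(\psi, \varphi) = \tfrac{1}{2}\int \varphi \,\Gamma(\psi, \psi)\, d\mu$. Once this identification is in place, all three inequalities become transparent statements about nonnegative measures.

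Under this identification I would proceed in three short steps. Inequality (i), $0 \le Q(\psi_2, \varphi)$, follows because $\Gamma(\psi_2, \psi_2)$ is a positive measure and $\varphi \ge 0$. Inequality (ii) is the measure-valued contraction estimate $\Gamma(F\circ \psi_1, F\circ \psi_1) \le \Gamma(\psi_1, \psi_1)$ tested against $\varphi \ge 0$; this is the refinement to the energy-measure level of the classical statement that normal contractions $F \in \mathcal{T}_0^1$ operate on $\cE$. Finally, inequality (iii) is obtained by bounding $\varphi \le ||\varphi||_{L^\infty}$ pointwise and invoking the identity $\cE(\psi_1) = \tfrac{1}{2}\int \Gamma(\psi_1, \psi_1)\, d\mu$, that is, the characterizing identity \eqref{CarreduChampCharProp} tested formally with the constant $h \equiv 1$.

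Because the statement is formulated for a general Dirichlet form that need not admit a carr\'e du champ, the argument must be carried out intrinsically. Concretely, one shows directly, using only the Markovian property of $\cE$, that $\varphi \mapsto Q(\psi, \varphi)$ defines a positive linear functional on $\DcE \cap L^{\infty}$; this yields (i) immediately and reduces (iii) to monotonicity in $\varphi$ together with a limiting argument approximating the constant function $1$ by cut-offs in $\DcE \cap L^{\infty}$. The main obstacle is step (ii): lifting the coarse Markovian estimate $\cE(F\circ \psi_1) \le \cE(\psi_1)$ to the $\varphi$-weighted version $Q(F\circ \psi_1, \varphi) \le Q(\psi_1, \varphi)$. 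This is achieved by applying the Markovian contraction property to a one-parameter family perturbing $\psi_1$ in the direction of $\varphi$ and differentiating, which is essentially the computation in Bouleau-Hirsch, Proposition 4.1.1; once that step is secured, the whole chain of inequalities follows.
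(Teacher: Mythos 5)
Your argument is correct and takes essentially the same route as the paper, whose entire proof consists of citing Propositions 2.3.3 and 4.1.1 of \cite{BouleauHirsch}: your outline is a faithful unpacking of exactly those two propositions, and you defer the one genuinely nontrivial step (the $\varphi$-weighted contraction estimate for normal contractions) to Proposition 4.1.1 just as the paper does. Nothing further is needed.
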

\begin{proof}
The statement follows from Propositions 2.3.3 and 4.1.1. of \cite{BouleauHirsch}.
\end{proof}
In particular, $\cE(\psi_1 h, \psi_1)-\tfrac{1}{2}\cE(\psi_1^2,h)<\infty$. This ensures that the following map is well defined:
\begin{definition}
Let $\cE$ be a Dirichlet form on $L^2(X,\mu)$. Define for any $\psi \in D(\cE) \cap L^{\infty}(X)$, a map 
\begin{align}\label{DefCarreDuChampHalf}
\begin{split}
\mathcal{I}_{\psi}^{(\mathcal{E})}: \DcE \cap  L^{\infty}(X)& \longrightarrow \R \\
                                        \varphi &\longmapsto \cE(\psi \varphi, \psi)-\tfrac{1}{2}\cE(\psi^2,\varphi).
\end{split}
\end{align}
\end{definition}

\begin{definition}[$\DcEl$]\label{DlocE}
Let $\cE$ be a diffusion. Define $D_{loc}(\cE)$ as the vector space of equivalence classes of measurable functions
$\psi:X \longmapsto \mathbb{C}$, such that for every compact subset $K\subset X$ there exists a $\hat{\psi} \in \DcE$
with $\psi|_K=\hat{\psi}|_K$. 
\end{definition}
Now we define the extension of $\I$ from $\DcE \cap L^{\infty}(X)$ to $\DcEl \cap L^{\infty}(X)$ for diffusions.
\begin{definition}[$\widehat{\I}$]\label{DefCarreDuChampHalfHat}
Let $\cE$ be a diffusion. Let us write $\I_{\hat{\psi}}(\varphi)=\I^{(\cE)}_{\hat{\psi}}(\varphi)$ for shorter notation
when $\cE$ is fixed. Let $L^{\infty}_c(X)$ denote the bounded functions on $X$ with compact support.
Then for any $\psi \in \DcEl \cap L^{\infty}(X)$ define the map
\begin{align*}
\widehat{\I}_{\psi} : \DcE \cap L^{\infty}_c(X) & \longrightarrow \R\\
                                    \varphi & \longmapsto \I_{\hat{\psi}}(\varphi).
\end{align*}
\end{definition}
\begin{definition}
For any $\psi \in \DcEl \cap L^{\infty}(X)$ we define
\begin{align}
|||\widehat{\I}_{\psi}|||:=\sup\{|\widehat{\I}_{\psi}(\varphi)|: \varphi \in \DcE \cap L^{\infty}_c(X), ||\varphi||_{L^1(X)}\leq 1 \}
\end{align}
\end{definition}

\begin{definition}[$d_{\psi}(A,B)$ ]\label{SetSetDistanceOnePSi}
Let $\psi \in L^{\infty}(X)$ be an arbitrary bounded function and $A,B \subset X$ measurable sets. We define 
a distance function, taking values in $(-\infty,\infty]$ by
\begin{align}\label{SetSetDistanceOnePSiEq}
\begin{split}
d_{\psi}(A,B):=&\sup\{M \in \R: \psi(x)-\psi(y)\geq M\ \textrm{for}\ \lambda\textrm{-a.e.} x,  \lambda\textrm{-a.e.} y\}\\
              =&\esinf_{x\in A}\psi(x)-\essup_{y\in B}\psi(y),
\end{split}
\end{align}
where $\essup_{y\in B}\psi(y)= \inf\{m\in \R: \lambda(\{y \in B: \psi(y)>m\})=0\}$, and where $\lambda$ denotes the Lebesgue measure.
\end{definition}
With this, one can define the intrinsic metric induced by the Dirichlet form $\dE(A,B)$:
\begin{definition}[\emph{ter Elst et al.}]\label{Def:IntrinsicMetricSetSetDistance}
The set-theoretic distance function for measurable sets $A,B \in X$, 
which appears in the generalized version of Varhadhan's formula is
\begin{align}\label{SetSetDistanceEq}
\dE(A,B)=\sup\{d_{\psi}(A,B): \psi \in \DocE \},  
\end{align}
where the set $\DocE$ is defined as
\begin{align}\label{ExtendedDomain}
D_0(\cE)=\{ \psi \in D(\cE)_{loc} \cap L^{\infty}(X) : |||\hIp||| \leq 1\}. 
\end{align}
\end{definition}


\end{document}